\newtheorem{thm}{Theorem}[section]
\newtheorem*{thm*}{Theorem}
\newtheorem{cor}[thm]{Corollary}
\newtheorem{lemma}[thm]{Lemma}
\newtheorem{prop}[thm]{Proposition}
\newtheorem{defn}[thm]{Definition}
\theoremstyle{remark}
\newtheorem{remark}[thm]{Remark}
\newtheorem{example}[thm]{Example}
\newtheorem{remarks}[thm]{Remarks}
\newcommand{\wt}{\ensuremath{\widetilde}}
\newcommand{\R}{\ensuremath{\mathbb{R}}}
\newcommand{\N}{\ensuremath{\mathbb{N}}}
\newcommand{\Z}{\ensuremath{\mathbb{Z}}}
\newcommand{\C}{\ensuremath{\mathbb{C}}}
\newcommand{\T}{\ensuremath{\mathbb{T}}}
\newcommand{\GM}{\ensuremath{\mathbb{G}}}
\newcommand{\PMM}{\ensuremath{\mathbb{P}}}
\def\calT{\mathcal{T}}
\def\calL{\mathcal{L}}
\def\calB{\mathcal{B}}
\def\calH{\mathcal{H}}
\def\calF{\mathcal{F}}
\def\calA{\mathcal{A}}
\def\calW{\mathcal{W}}
\def\calG{\mathcal{G}}
\def\bP{\mathbf{P}}
\newcommand{\ol}{\overline}
\theoremstyle{definition}
\DeclareMathOperator{\Dom}{Dom}
\DeclareMathOperator{\Index}{Index}
\DeclareMathOperator{\Ker}{Ker}
\DeclareMathOperator{\coKer}{coKer}
\DeclareMathOperator{\Ran}{Ran}
\DeclareMathOperator*{\res}{res}
\newcommand{\sD}{\slashed{D}}
\begin{document}

\title{Non-Commutative Chern Numbers for \\
Generic Aperiodic Discrete Systems}

\author{Chris Bourne}

\address{Advanced Institute for Materials Research, 
\\Tohoku University, 
\\Sendai, 980-8577, Japan \\
\href{mailto:chris.bourne@tohoku.ac.jp}{chris.bourne@tohoku.ac.jp}}

\author{Emil Prodan}

\address{Department of Physics and
\\ Department of Mathematical Sciences 
\\Yeshiva University, 
\\New York, NY 10016, USA \\
\href{mailto:prodan@yu.edu}{prodan@yu.edu}}

\date{\today}

\begin{abstract} 
The search for strong topological phases in generic aperiodic materials and meta-materials is now vigorously pursued by the condensed matter physics community. In this work, we first introduce the concept of patterned resonators as a unifying theoretical framework for topological electronic, photonic, phononic etc. (aperiodic) systems. We then discuss, in physical terms, the philosophy behind an operator theoretic analysis used to systematize such systems. A model calculation of the Hall conductance of a 2-dimensional 
amorphous lattice is given, where we present numerical evidence of its quantization in the mobility gap regime. Motivated by such facts, we then present the main result of our work, which is the extension of the Chern number formulas to Hamiltonians associated to lattices without a canonical labeling of the sites, together with index theorems that assure the quantization and stability of these Chern 
numbers in the mobility gap regime. Our results cover a broad range of applications, in particular, those involving quasi-crystalline, amorphous as well as synthetic ({\it i.e.} algorithmically generated) lattices. 
\end{abstract}

\maketitle

\setcounter{tocdepth}{2}
{\scriptsize
\tableofcontents
}

\section{Introduction}

Topological insulators \cite{Hal1988,KaneMele2005I,KaneMele2005II,BHZ2006,KWB2007,MooreBalents2007,FuKane2007,HQW2008} 
have attracted intense  interest from the condensed matter community. By definition, these are crystalline solids whose electronic degrees 
of freedom display quantized bulk and surface responses to external stimuli, even in the regime of strong disorder. 
For example, they were theoretically predicted to retain their topological characteristics up to the room temperature, 
though this remains to be demonstrated experimentally. The thermodynamic data for the classical atomic degrees 
of freedom of a disordered crystal, which in our view is just a crystal at finite 
temperature, is encoded in a dynamical system $(\Omega,\GM,{\rm d}\PMM)$, where $\Omega$ is the configuration space of the atomic 
degrees of freedom, $\GM$ is the space group of the crystals acting on $\Omega$ and ${\rm d} \PMM$ is the Gibbs measure for the atomic 
degrees of freedom, defined over $\Omega$ \cite{KuhneProdan2017}.  For the thermodynamically pure homogeneous phases usually 
studied in laboratories, the Gibbs measure must be invariant and ergodic with respect to the $\GM$-action \cite{Dobrushin1989,Ruelle1969}. 

\vspace{0.2cm} 

The quantum dynamics of the electron degrees of freedom is generated by a covariant family of Hamiltonians with respect to 
$(\Omega,\GM,{\rm d}\PMM)$. Such covariant families of observables can be described quite generally using 
representations of a crossed product algebra  \cite{Bellissard1986}. For discrete systems and ignoring point symmetries, 
the crossed product is 
simply by $\Z^d$ ($d=$ physical space dimension) and,  as such, the mathematical structure of the topological phases supported 
by disordered crystals is the simplest among the condensed matter systems. For this reason, such disordered crystals  
are quite well understood at 
this time. Indeed, the condensed matter physics community put forward a conjecture in the form of a classification table of all possible 
disordered crystalline phases displaying metallic electron transport at the boundaries of the 
samples \cite{SchnyderPRB2008qy,Kitaev2009hf,RyuNJP2010tq}. The conjecture survived a large number of numerical tests 
and, at the rigorous level, good progress towards a proof has been achieved in quite a large number 
of works. We remark that crystalline solids recently attracted a renewed interest due to the existence of 
topological phases that are solely stabilized by the point symmetries of the crystals \cite{SSG17,PVW2017,BradlynNature, KdBvWKS, SSG18}.

\vspace{0.2cm} 

Inspired by the research on topological insulators, similar effects are now also sought in photonic
 \cite{RZP2013,WCJ2009,HMF} and phononic \cite{PP2009,KL2013,PCV2015,NKR2015} crystals, 
 as well as plasmonic \cite{SR2016} systems. These are much more versatile platforms that enabled 
 experimentalists to look beyond the periodic table and investigate almost-periodic \cite{KLR2012,KRZ2013,Pro2015,HPW2015}, 
 quasi-crystalline \cite{KZ2012,VZK2013,TGB2014,TDG2015,VZL2015,LBF2015,DLA2016,BRS2016,BLL2017,FuchsVidal} 
 and even amorphous patterns \cite{MNH2016,AS17}. While many of these models can be treated 
 within the framework of (discrete) crossed product algebras,  see {\it e.g.} \cite{Pro2015,HPS2017}, 
 amorphous patterns can not in general. 
The difference comes from whether there is a canonical labeling of the lattice by $\Z^d$ 
such that the Hamiltonians, which are defined on the same physical Hilbert space, 
remain short range with respect to these $\Z^d$-labels. For amorphous 
patters, this can not be done. For quasi-crystalline patterns, we can reduce our system to 
a short-range $\Z^d$-labelling using the results of Sadun and Williams~\cite{SW03}, but at the 
expense of altering the underlying lattice. If possible, we would like to avoid this step.

\vspace{0.2cm}

The index theorems developed for strongly disordered 
 crystals \cite{BES1994,PLB2013,PS2016} are specialized for crossed product algebras and no 
 longer work in the amorphous setting (or quasi-crystalline lattices without alterations). 
 Hence, a gap emerged in our understanding of the novel topological phases. 
 Indeed, the pioneering works \cite{MNH2016,AS17} on topological amorphous phases brought great excitement,
 but also raised a number of fundamental questions that still puzzle the condensed matter community. While the topological 
 amorphous phase in \cite{MNH2016} was realized in the laboratory with classical mechanical systems, we will 
 model an analogous system using a two-dimensional homogeneous amorphous crystal under a uniform 
 perpendicular magnetic field. The fundamental questions, however, remain the same.
 
\vspace{0.2cm}  
 
 To streamline the discussion we introduce two terminologies, the thermodynamic phase and the topological phase, 
 where the former refers to the classical atomic degrees of freedom while the latter refers to the quantum degrees of 
 freedom of the electrons. This is by no means a standard terminology. If the amorphous thermodynamic phase is pure, 
 the macroscopic transport coefficients are well defined, {\it i.e.} the experimentally measured direct and Hall conductivities 
 $\sigma$ and $\sigma_H$, respectively, have fixed values that do not fluctuate from sample to sample, even though the 
 atomic configurations can be vastly different. Let us point out that while spectral gaps 
 may occur in amorphous systems, for the models considered in~\cite{MNH2016}, 
 mobility gaps are more common, where the direct conductivity vanishes asymptotically 
 as temperature $T$ is lowered towards zero (see Section~\ref{SubSec-NemResults}). Now, suppose the Fermi 
 level $E_F$ (or better said chemical potential) is located in one of the mobility gaps. We present some outstanding questions:
\begin{enumerate}
\item Does $\sigma_H$ have a limit as $ T\searrow 0$? 
\item If yes, is there a formula for $\sigma_H$ akin to the (non-commutative) Chern number?
\item Is $\sigma_H$ quantized as in the case of strongly disordered crystals?
\item Is the amorphous Hall phase the same as the one observed in disordered crystals?
\end{enumerate}

\vspace{0.2cm} 

Questions (i-ii) relate to the physical interpretation of the topological invariant, which was one of the central points of 
discussion in \cite{MNH2016}. There, the authors tried to adapt a formula due to Kitaev \cite{Kitaev2006}, but 
questions (i-ii) already had affirmative answers provided by the general theory of electron transport in homogeneous 
systems developed in \cite{BES1994} (see also \cite{SBBI1998,SBBII1998}). In these works, one can find the equivalent 
of the so called TKNN formula \cite{TKKN1982}, derived similarly from the zero temperature limit of the Kubo--Green formula, 
this time in the context of homogeneous (as opposed to periodic) systems. Up to a physical constant, it takes the form:
\begin{equation}\label{Eq-HallCond}
\lim_{T \searrow 0} \sigma_H = \Tr_\mathrm{Vol} \Big \{ P_F \big [ [X_1,P_F],[X_2,P_F] \big ] \Big \},
\end{equation}
where $\Tr_\mathrm{Vol}$ represents the trace per volume, $X$ is the position operator and $P_F$ is the Fermi projector, 
{\it i.e.} the spectral projector of the Hamiltonian on $(-\infty,E_F]$. Also, $[\cdot,\cdot]$ stands for the commutator of 
two operators. The relation between \eqref{Eq-HallCond} and Kitaev's formula used in \cite{MNH2016} is not understood 
at this time. Ref.~\cite{AS17} uses a dated version of the Bott index \cite{LH2011} \footnote{ An alternative version 
of the Bott index defined in \cite{LH2011} appeared in \cite{LS2017, LSEven} and these 
new versions are connected to the Fredholm indices appearing in our work. As such, our local index formulas 
connect them to \eqref{Eq-HallCond}.} for which there is no local formula hence no relation to \eqref{Eq-HallCond} 
can be established.

\vspace{0.2cm} 

For an amorphous solid, there was no a priori reason, up to now, to believe that \eqref{Eq-HallCond} 
remains quantized in both the spectral and mobility gap regimes, as $\sigma_H$ could 
very well behave like a weak topological invariant in this new setting. We recall that, for a disordered crystal, the stability 
and quantization of $\sigma_H$ in the mobility gap regime follows from the index theorem derived in \cite{BES1994}. 
As we already mentioned, this fundamental result is highly specialized to the context of disordered crystals and should 
not be generalized beyond that. Without such index theorem for amorphous solids, to tell us the precise conditions in 
which $\sigma_H$ is stable and quantized, there is no way to answer questions (iii-iv) from above.

\vspace{0.2cm}

The main results of our work are index formulas for \eqref{Eq-HallCond} and its higher dimensional generalizations, 
as well as for the odd-dimensional versions. The formulas apply to generic homogeneous Hamiltonians over (Delone) 
point-patterns, in particular, to amorphous solids. They are formulated in Theorems~\ref{thm:complex_bulk_pairing_even} 
and \ref{thm:complex_bulk_pairing_odd} for the spectral gap regime and in Section~\ref{subsec:mobility_gap_index} for the mobility 
gap regime. As we shall see, the stability and quantization of the topological invariants require the Gibbs measure of the 
atomic degrees of freedom to be ergodic with respect to the  space translations. As such, the Hall plateaus can be observed 
only in pure thermodynamic phases, which, from a physical point of view, makes perfect sense because, otherwise, 
the trace per volume in \eqref{Eq-HallCond} will depend on how one achieves the thermodynamic limit 
({\it i.e.} on the boundary conditions). 

\vspace{0.2cm} 

We now can answer questions (iii-iv). If we assume that the amorphous and disordered 
crystalline solids are distinct pure thermodynamic phases, which in general is the case, then ergodicity of the Gibbs measure 
is necessarily lost while trying to deform these systems into each other. As such, the Hall conductance can change 
its quantized value during the deformation, even if the mobility gap stays open. This leads us to the following conclusions:
\begin{enumerate} 
\item Crystalline, amorphous and many other pure thermodynamic phases can host topological phases of the electronic degrees of freedom.
\item The work \cite{MNH2016} showed for the first time a topological phase hosted by a thermodynamic pure phase other than a crystal. 
Without doubt, it is a new state of matter.
\item At the phase boundaries between pure thermodynamic phases, the topological phases might not be aligned, 
{\it i.e.} the topological numbers can change as this border is crossed.
\end{enumerate}

Our main technical tool we use to prove quantization is the (unbounded) index theory of the 
$C^*$-algebra associated to the transversal groupoid of a point pattern. This groupoid and 
algebra was first considered by Bellissard in~\cite{Bellissard1986} and further developed by Kellendonk 
to study the dynamics of tilings and applications to 
the gap labelling conjecture~\cite{Kellendonk95, Kellendonk97}. Algebraic, homological and 
spectral properties of this groupoid and its $C^*$-algebra have been studied quite extensively, 
see for example~\cite{BHZ00, BBG06, LPV07, BelSav}. We note that
groupoid $C^*$-algebras and their associated index theory also 
played a role in the description of the quantum Hall effect on the hyperbolic plane~\cite{CHMM, CHM99} 
and coarse-geometric descriptions of topological phases~\cite{Kubota15b}. Let us also point out that in the spectral gap regime, 
the Fredholm indices involved in our work 
can be exactly computed on finite volumes using the methods developed in \cite{LS2017, LSEven}. 

\vspace{0.2cm} 

In this paper we construct a spectral triple for the transversal groupoid $C^*$-algebra 
which satisfies the hypothesis of the local index theorem in non-commutative 
geometry~\cite{CoM,CPRS2,CPRS3}. We can then compute the index formula, which 
recovers the familar non-commutative Chern number formulas and automatically 
represents the $\Z$-valued analytic pairing of $K$-theory with the `Dirac operator' on the 
groupoid. While the algebra is different to the crossed product description, the computation 
of the index formula is very similar to previous 
studies~\cite{BRCont, BSBWeak}. We then extend this index formula to a larger 
Sobolev algebra with a characterisation similar to~\cite{PSBbook}. 

\vspace{0.2cm}

While many of our 
index theoretic results extend to the aperiodic/amorphous picture quite naturally, the 
connection of elements in the Sobolev algebra to observables with spectral regions 
of dynamical localisation is not as well established. A key technical hurdle is that 
we do not work with random Hamiltonians on a single lattice $\calL\subset \R^d$, but a 
family of lattices indexed by some configuration space $\{\calL\}_{\calL\in\Xi}$ and 
with different Hilbert spaces $\{\ell^2(\calL)\}_{\calL\in\Xi}$. A full investigation 
of the spectral properties of such operators, while desirable, is beyond the scope of this 
paper and we will instead focus on the index-theoretic aspects and their applications.

 \section{Patterned resonators} 
 \label{Sec:Patterned_resonators}

As we mentioned in our introduction, the interest in topological effects is rapidly broadening to meta-materials which 
enable controlled design of photonic, acoustic and plasmonic systems. Below, we introduce a simple overarching 
physical framework which puts all these systems on equal footing. Hence we can cover them all with same mathematical analysis.

\subsection{Definitions, examples, dynamics}

In our language, a resonator is a physical system confined to a small region of the physical space and having an 
arbitrarily large but nevertheless finite number of degrees of freedom. From the mathematical point of view, the 
resonator is a point with an internal structure. Attached to it, there are physical observables and a non-dissipative 
dynamics, which all can be described by linear operators over a finite dimensional Hilbert space that, of course, 
can be chosen to be $\C^N$. The number $N$ will be referred to as the number of internal degrees of freedom 
and $\C^N$ as the internal Hilbert space. Resonators will be represented schematically as in Fig.~\ref{Fig-CoupledResonators}. 
Below, we provide some examples for reader's convenience.

\begin{figure}
\center
\includegraphics[width=0.3\textwidth]{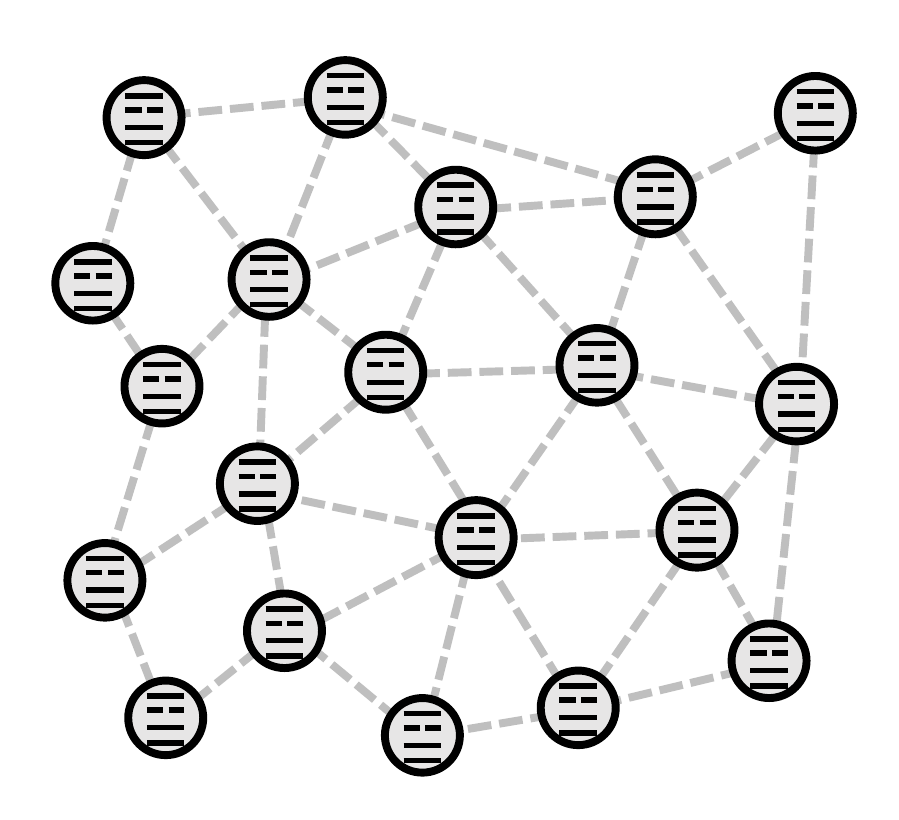}\\
  \caption{\small Schematic representation of resonators and their coupling when arranged in point pattern.
}
 \label{Fig-CoupledResonators}
\end{figure}

\begin{example}{\rm A confined quantum mechanical system with a finite number of quantum states is the 
prototype of the resonator. The atoms and molecules in an extended condensed matter system are often 
treated this way without losing the precision of the calculations. 
}$\, \Diamond$
\end{example}

\begin{example}{\rm A mechanical harmonic oscillator with $N$-degrees of freedom also fits our definition 
of a resonator. Indeed, if $(q_j,p_j)$, $j=1,\ldots,N$ are the generalized coordinates and the associated 
canonical momenta, then, by passing to the complex coordinates:
\begin{equation}
(q_j,p_j) \rightarrow \xi_j = \tfrac{1}{\sqrt{2}}(q_j+i p_j), \quad j=1,\ldots,N,
\end{equation}
Hamilton's equations take the form:
\begin{equation}
i \frac{{\rm d}\xi_j}{{\rm d} t} =\frac{\partial H}{\partial \xi_j^\ast}, \quad j=1,\ldots,N.
\end{equation}
A harmonic oscillator is defined by a quadratic Hamiltonian of the form:
\begin{equation}
H(\xi_1,\xi_1^\ast,\ldots,\xi_N,\xi_N^\ast) = \sum_{i,j=1}^N h_{ij}\, \xi_i^\ast \xi_j, \quad h_{ij}^\ast = h_{ji},
\end{equation}
hence Hamilton's equations reduce to:
\begin{equation}
i \frac{{\rm d}\psi}{{\rm d} t} = h \psi, \quad \psi=\begin{pmatrix} \xi_1 \\ \ldots \\ \xi_N \end{pmatrix} \in \C^N,
\end{equation}
where $h$ is the $N \times N$ matrix with the entries $h_{ij}$.
}$\, \Diamond$
\end{example}

\begin{example}{\rm The dynamical Maxwell equations without sources can be cast in the form of a linear 
Schr\"{o}dinger equation \cite{DL2014, DL2017}. Then the discrete electromagnetic resonant modes inside a cavity 
with reflecting walls provide additional examples of resonators, provided the higher frequency modes can be neglected. 
}$\, \Diamond$
\end{example}

\vspace{0.2cm}

When two or more resonators are brought close to each other, the dynamics of the internal modes couple 
due to either an weak overlap of the resonant modes or because the force fields or potentials extend far 
beyond the confining space of the resonators. The experimental signature of such a coupling, which in most 
cases can be mapped with great precision, is the hybridization of the resonant modes accompanied by shifts 
of the eigen-frequencies. In the regime of weak coupling and in the quadratic or single-electron approximations, 
the internal spaces remain unaltered and the dynamics of the coupled resonators takes place inside the Hilbert space 
\begin{equation}
\calH = \C^N \otimes \ell^2(\calL).
\end{equation}
The dynamics is then generated by a bounded Hamiltonian of the type:
\begin{equation}\label{Eq-GenHamiltonian}
H_\calL = \sum_{x,x'\in \calL} h_{x,x'}(\calL) \otimes |x \rangle \langle x' |, \quad h_{x,x'} \in M_N(\C), \quad h_{x',x}=h_{x,x'}^\ast, 
\end{equation}
where $\calL$ is the point pattern formed by the resonators, which for simplicity are considered all the same. 
Throughout, $M_N(\C)$ will denote the algebra of $N \times N$ matrices with complex entries. 

\begin{remark}\label{Re-HDep} We have used a notation that suggests that the 
hopping matrices $h_{x,x'}(\calL)$ depend not just on the points $x$ and $x'$ but on the entire pattern 
$\calL$. A subtle point which we want to stress is that the Hamiltonian is fully determined 
by the pattern but, of course, there is potentially a large amount of geometrical data encoded in $\calL$. 
$\, \Diamond$
\end{remark}

\begin{remark}\label{Re-HCont}
On the physical grounds, we can be sure that the hopping matrices $h_{x,x'}(\calL)$ depend continuously 
on $\calL$ (in a sense made precise later) and that they become less significant as the distance between 
$x$ and $x'$ increases.
$\, \Diamond$
\end{remark}

\begin{example} If $N=1$ and the individual resonant modes are isotropic, as well as the coupling occurs through 
the overlap of the exponentially decaying tails of these modes, then the Hamiltonian takes a universal form:
\begin{equation}\label{Eq-ModelHam1}
 H_\calL = \sum_{x,x'\in \calL} e^{-\beta|x-x'|} \, |x \rangle \langle x' |,
\end{equation}
in some adjusted energy units.
$\, \Diamond$
\end{example}

\begin{remark}If we adjust the length unit such that $\beta=1$ in the above example, then the 
hopping coefficients become less than $10^{-3}$ if $|x -x'| >7$ and, in many instances, they can 
be neglected entirely beyond this limit. When this is the case, the Hamiltonians are said to be of 
finite hopping range.
$\, \Diamond$
\end{remark}

\subsection{The structure of the Hamiltonians}
\label{SubSec-Standard}

Let us point out that, apart from the fact that the hopping coefficients are fully specified by the pattern $\calL$, 
the Hamiltonian in \eqref{Eq-GenHamiltonian} takes the most general form of a bounded operator over 
$\C^N \otimes \ell^2(\calL)$. Yet, as we shall see below, the Hamiltonians do have a certain structure and 
this is why they generate a subalgebra inside $\calB(\cal H)$, the algebra of bounded operators over $\calH$. 
Indeed, if the pattern is moved rigidly by some $y\in \R^d$, then consistency enforces a relation between the 
hopping coefficients of $H_\calL$ and $H_{\calL-y}$:
\begin{equation}
h_{x,x'}(\calL) = h_{x-y,x'-y}(\calL - y), \quad x,x' \in \calL.
\end{equation}
Then
\begin{equation}
H_\calL = \sum_{x,x'\in \calL} h_{x,x'}(\calL) \otimes |x \rangle \langle x' | = \sum_{x,x'\in \calL} h_{0,x'-x}(\calL-x) \otimes |x \rangle \langle x' |,
\end{equation}
and, if we introduce $q=x'-x \in \calL -x$, then:
\begin{equation}
H_\calL = \sum_{x\in \calL} \sum_{q\in \calL-x} h_{0,q}(\calL-x) \otimes |x \rangle \langle x+q |.
\end{equation}
As one can see, we can drop one subscript and write $h_q$ instead of $h_{0,q}$. Note that inside the sum, 
$x \in \calL$ as well as $x \in \calL-q$, for any $q \in \calL-x$. As such, $|x\rangle$ can be seen as a vector 
in $\ell^2(\calL)$ or in $\ell^2(\calL- q)$. Then, if we use the shift operators defined by the isometries:
\begin{equation}
S_q : \ell^2(\calL) \rightarrow \ell^2(\calL-q), \quad S_q|x\rangle = |x -q \rangle, \quad S_q^\ast |x-q\rangle = |x\rangle,
\end{equation}
the generic Hamiltonians \eqref{Eq-GenHamiltonian} start to display a very particular structure:
\begin{equation}\label{Eq-StrHamiltonian}
H_\calL= \sum_{x\in \calL} \sum_{q\in \calL-x} h_{q}(\calL-x) \otimes |x \rangle \langle x |S_q,
\end{equation}
where $|x\rangle \langle x|$ is understood as a partial isometry from $\ell^2(\calL-q)$ to $\ell^2(\calL)$. 

\vspace{0.2cm}

Let us point out a few remarkable facts about \eqref{Eq-StrHamiltonian}. First, the structure revealed itself 
because we consistently viewed the hopping matrices as functions over the space of patterns. Perhaps the 
significance of our Remark \ref{Re-HDep} becomes more clear now. Equally important is Remark~\ref{Re-HCont}, 
which tells that this functions are continuous and that in practice there is only a finite number of summations 
over $q$ in \eqref{Eq-StrHamiltonian}. Now, given one pattern $\calL$, we can always choose the origin of 
$\R^d$ such that one point of $\calL$ is positioned at the origin, or shortly $0 \in \calL$. Then notice 
in \eqref{Eq-StrHamiltonian} that only the patterns $\calL-x$ with $x \in \calL$ appear. These are all the rigid 
translates of $\calL$ with the property that $0$ is among their points. The set of these patterns will be denoted 
by $\Xi$ and will later be endowed with a topology. The important conclusion of our discussion is that in order to 
reproduce \eqref{Eq-StrHamiltonian}, we only need the values of the hopping matrices over the space $\Xi$. We hope 
that this convinces the reader that the algebra generated by all $H_\calL$'s, called the algebra of physical observables, 
is much smaller than $\calB(\calH)$. In fact, if the space $\Xi$ is simple enough, there are good chances that the 
$K$-theories of this algebra, which classify the gapped Hamiltonians over $\calL$, can be fully resolved.

\vspace{0.2cm}

The simplest example is that of a periodic pattern in which case $\Xi$ reduces to a point and the Hilbert spaces of 
the translates coincide. Then the algebra of observables is generated by $d$ commuting shift operators. If the 
pattern is not periodic but its points can be labeled by $\Z^d$ such that the translations $\calL-x$ reduce to the trivial 
action of $\Z^d$ onto itself, then the Hilbert spaces of the translates can be canonically identified and the algebra of 
observables turns out to be the crossed product $C(\Xi) \rtimes \Z^d$ with the obvious action of $\Z^d$. As we 
already mentioned in the introduction, here we are interested in the generic cases where the labeling by $\Z^d$ is not 
possible. In this case, the algebra of physical observables is the groupoid algebra introduced 
in~\cite{Bellissard1986, Kellendonk95} and discussed in Section~\ref{Sec-GrupoidAlg}.

\section{Quantization of Hall conductance in amorphous solids: Numerical evidence}
\label{Sec-QHall}

In this section we employ the numerical techniques developed in \cite{ProdanAMRX2013,ProdanSpringer2017} and 
perform numerical simulations of the Hall conductance \eqref{Eq-HallCond} for amorphous solids in dimension 2. 
This choice has been made precisely because these systems are quite different from disordered crystals. In 
particular, the labeling by $\Z^d$ discussed in the conclusions of Section~\ref{SubSec-Standard} does not exist. 
The numerical techniques from \cite{ProdanAMRX2013,ProdanSpringer2017} were developed for disordered crystals and  
hence have to be adapted to the new context. This is explained in Section~\ref{SubSec-NemResults}, though 
without any estimates of the numerical errors. 

\vspace{0.2cm}

An important issue is the requirement of a gap in the energy spectrum, which can be either spectral or dynamical. 
Existence of gaps in the spectrum of an amorphous solid is possible, but is generally not expected unless the couplings are strongly 
dependent on many geometrical data encoded in $\calL$ (see \cite{MNH2016}). In our work, however, we want to 
work with an isotropic coupling as in \eqref{Eq-ModelHam1}, but we will introduce a uniform magnetic field perpendicular 
to the sample. One remarkable observation is the opening of several large mobility gaps in the energy spectrum, 
which reminds us of the splitting of the continuous energy bands of electrons on a periodic lattice when subjected to 
a magnetic field. Let us point out that the integer quantum Hall effect has been always simulated using randomly 
perturbed periodic lattices, but the potential in the quantum wells where the effect is experimentally observed is in 
fact closer to that of an amorphous system. Hence, our theoretical and numerical results may lead to a better 
qualitative and quantitative understanding of this effect. 

\subsection{The system defined}

We describe first how the amorphous pattern was generated in our simulations. Firstly, we fixed the number of 
points per area, hence the density of points, and we chose the unit of length such that the fixed density 
becomes one point per unit square. We then produced a pattern $\calL_L$ of $N=L\times L$ sites on a 
flat 2-torus of equal circumferences $L \in \N$ (called the $L$-torus from now on), using the following algorithm:
\begin{itemize}
\item A random number generator was used to produce a new random point inside the square $[0,L]\times [0,L]$. 
Note that we include the boundaries.
\item The distances from this point to all already existing points were evaluated. 
The standard distance of the flat torus was used, hence periodic boundary conditions were automatically enforced.
\item If any of those distances were smaller than a predefined minimum distance $d_{\rm min} \leq 1$, 
then the newly generated point was rejected. Otherwise, the point was kept.
\item The cycle was repeated until all $N$ points were laid down on the flat torus.
\item The distances between all pairs of points was computed and if they were all found to be larger than a pre-defined $d_{\rm max}>d_{\rm min}$, the pattern was rejected. Otherwise, it was kept. 
\end{itemize}
The only input for the algorithm is the triple $(L,d_{\rm min},d_{\rm max})$, with the understanding that 
always $N=L\times L$. In all our simulations, $d_{\rm min}$ was fixed at $0.83$ while $L$ was varied 
from 60 to 120. One may note that the point pattern can be thought as the 
centers of a system of $N$ hard balls of diameter $d_{\rm min}$ dropped at random on the flat 2-torus of size 
$L \times L$. There is a small but nevertheless finite probability for the balls to cluster in large pockets, in which 
case large holes will emerge in our patterns. The last condition of our algorithm prevents this phenomena 
and keeps the patterns $d_{\rm max}$-relatively dense (see Definition \ref{defn-delone1}). In our simulation, though, 
we never observe this clustering phenomena.  

\begin{figure}
\center
\includegraphics[width=0.8\textwidth]{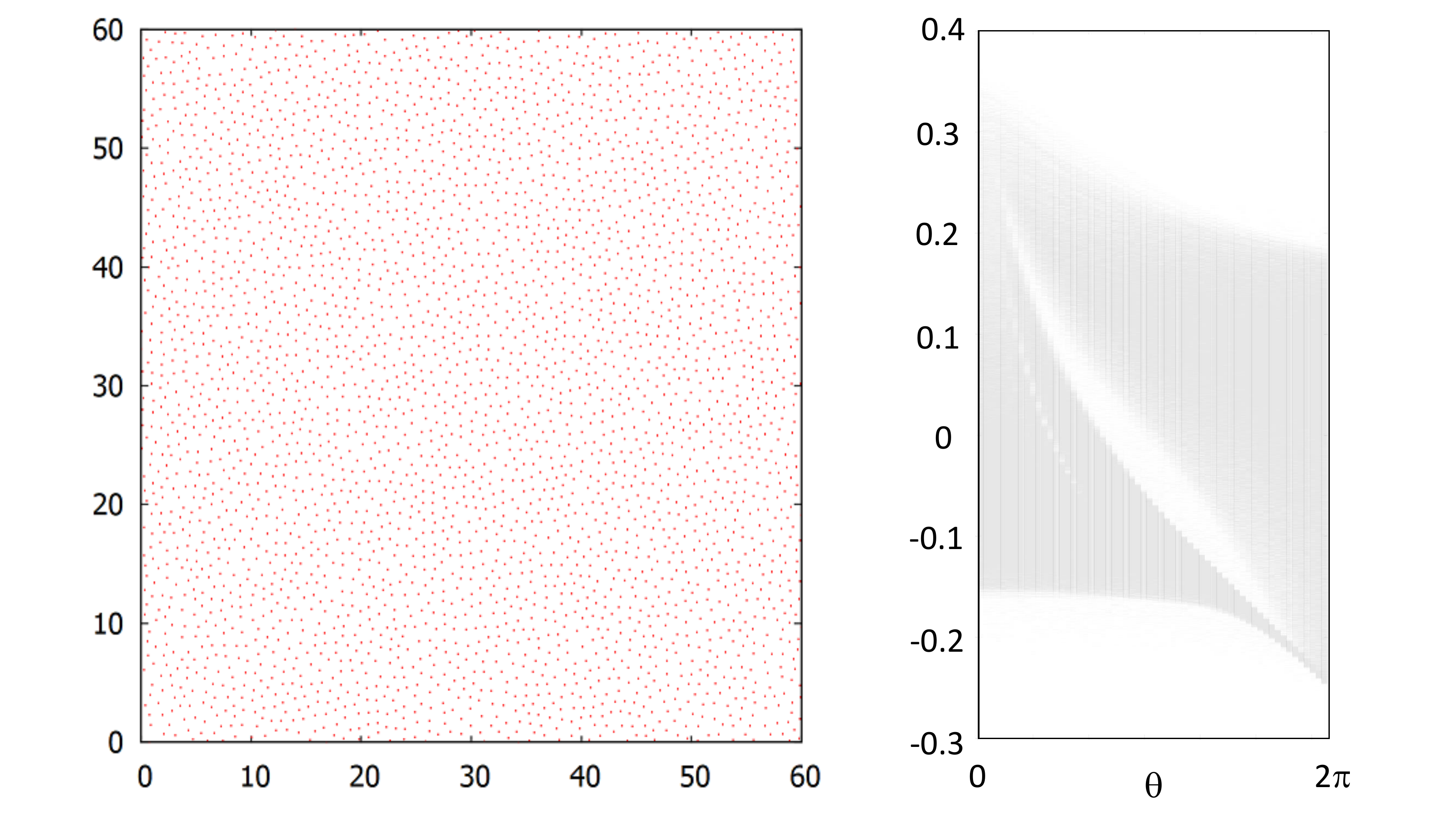}\\
  \caption{\small (left) Example of an amorphous pattern obtained with the algorithm described in the text. 
  The parameters are $L=60$ and $d_{\rm min}=0.83$; (right) The spectrum of the Hamiltonian \eqref{eq-modelH2} 
  as function of the strength of the magnetic field. The computation has been carried for a pattern with $L=120$ 
  and $d_{\rm min}=0.83$.
}
 \label{Fig-PattAndSpec}
\end{figure}

\vspace{0.2cm}

An example of a pattern generated with the above algorithm is shown in Fig.~\ref{Fig-PattAndSpec} for $L=60$. 
Note that the patterns are indeed periodic in the sense that, if the square $[0,L] \times [0,L]$ is wrapped in a torus, 
one will be unable to detect where the edges of the square were. For the same reason, we can periodically 
extend the pattern over the whole $\R^2$ without violating the constraints. We mentioned this detail because 
it relates to the program of finding periodic approximates of a pattern \cite{BeckusPhDThesis, BBDN17}. In our context, 
the patterns we are interested in are actually defined by the thermodynamic limit of the periodic ones. More 
precisely, note that every time the algorithm is run for a fixed $L$, the pattern will be different from the previous. 
Hence we are dealing with a family of patterns which can be periodically extended over the whole space. Let 
$\Omega_L$ be the set of $L$-periodic patterns that we can generate with our algorithm, which we close in 
the standard topology of the space of Delone sets (see Proposition~\ref{prop-delonespace}). The $\Omega_L$ 
spaces are invariant with respect to the translations of $\R^2$ and they form an inductive tower of compact topological spaces:
\begin{equation}
\Omega_{L} \subset \Omega_{2L} \cdots \subset \Omega_{2^n L} \subset \cdots \; .
\end{equation}
The ``configuration'' space of the infinite patterns is
\begin{equation}
\Omega = \overline{ \bigcup_{n \in \N} \Omega_{2^n L}},
\end{equation}
which is a compact space, invariant to the translations $T$ of $\R^2$. The pair $(\Omega,\R^2,T)$ is then a 
topological dynamical system, which also comes equipped with an invariant probability measure. 
Indeed, the finite volume algorithm determines entirely the probability of a $\mathcal L_L \in \Omega_L$ pattern to occur. 
Note that rigidly shifted patterns on the $L$-torus occur with equal probabilities. 
We denote by $\bP_L$ the associated finite-volume probability measure, which is invariant 
to the cyclic shifts of the $L$-torus. Then the measure $\bP$ on $\Omega$ can be defined as the unique 
measure whose traces over $\bar \Omega_{2^n L}$ coincide with $\bP_{2^n L}$ for all $n=1,2,\ldots$, where
\begin{equation}
\bar \Omega_{L}= \{\mathcal L \in \Omega, \ \mathcal L \cap [-\tfrac{1}{2}L,\tfrac{1}{2}L)^2 = 
\mathcal L_L \cap [-\tfrac{1}{2}L,\tfrac{1}{2}L)^2 \ \mbox{for some} \ \mathcal L_L\in \Omega_L \}.
\end{equation} 
One important issue is whether the measure $\bP$ is ergodic, which at this point we 
must assume.

\vspace{0.2cm}

The model Hamiltonian used in our simulations is:
\begin{equation}\label{eq-modelH2}
H_\calL: \ell^2(\calL) \rightarrow \ell^2(\calL), \quad H_\calL 
  =\sum_{x,x'\in \calL} e^{\imath \theta \, x\wedge x'} e^{-3|x-x'|}\, |x \rangle \langle x' |.
\end{equation}
Here, $e^{\imath \theta \, x \wedge x'}$ is the usual Peierls phase factor \cite{PeierlsZP1933} encoding the 
presence of a magnetic field, with $x \wedge x'=\tfrac{1}{2}(x_1x'_2-x_2 x'_1)$ being the oriented area of the 
triangle made out of $x$, $x'$ and the origin, and $\theta$ is the strength of the magnetic field in some 
adjusted units. Note that no cutoff was introduced on the hopping range.

\begin{figure}
\center
\includegraphics[width=\textwidth]{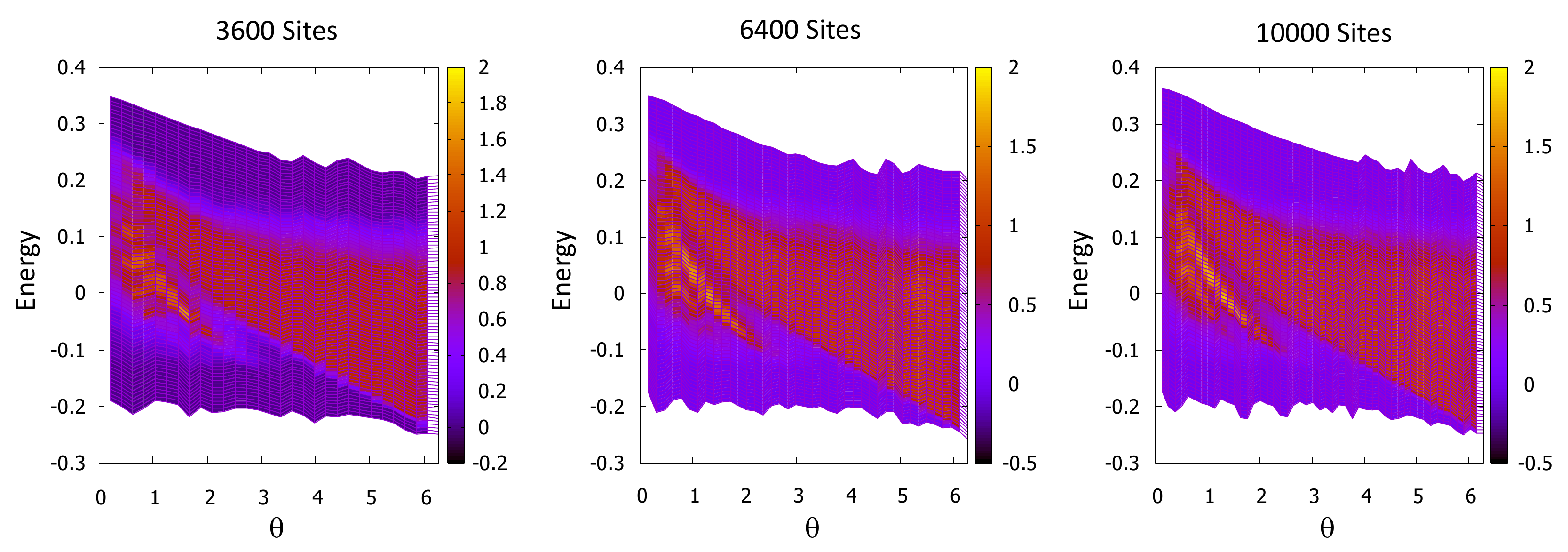}\\
  \caption{\small Map of the Hall conductance as function of Fermi energy and strength of magnetic field 
  for pattern parameters $d_{\rm min}=0.83$ and (left) $L=60$, (middle) $L=80$, (right) $L=100$. 
}
 \label{Fig-ChernButterfly}
\end{figure}

\subsection{Numerical implementation and results}
\label{SubSec-NemResults}

We selected from the configuration space $\Omega$ an $L$-periodic pattern $\calL$ and adapted 
the Hamiltonian \eqref{eq-modelH2} on the $L$-torus. To comply with the periodic boundary conditions, 
the values of the magnetic field were restricted to the discrete values 
$\theta_n=\frac{4\pi n}{L}$, $n=0,1,\ldots$. The energy spectrum 
of $H_{\calL_L}$ as function of $\theta$ is reported in Fig.~\ref{Fig-PattAndSpec}. It has been computed 
for a single pattern with $L=120$ and we have verified that there are no visible variations from one pattern 
to another. The spectrum displays a clear large gap and in fact a second smaller gap is also visible. Upon a 
more careful inspection, both gaps are filled with low density spectrum and so are in fact mobility gaps.

\vspace{0.2cm}

Next, we turn our attention to the Hall conductance \eqref{Eq-HallCond}. In \cite[Ch.~5]{ProdanSpringer2017}, 
a set of very general principles has been formulated for computing correlations of the type seen in \eqref{Eq-HallCond}. 
For disordered crystals, the finite-volume algorithms based on these principles have been shown to converge 
exponentially fast to the thermodynamic limit.   Given the periodic approximates discusses in the previous section, 
the amorphous solid is covered as well by those principles, which we implemented here to 
evaluate Equation \eqref{Eq-HallCond}. For this, we:
\begin{itemize}
\item Computed the Fermi projector $P_F^{(L)}=\chi_{(-\infty,E_F]}(H_{\calL_L})$ using standard routines from functional analysis.
\item Made the commutators compatible with the periodic boundary conditions, by using the optimal 
substitution \cite{ProdanAMRX2013,ProdanSpringer2017}:
\begin{equation}\label{Eq-ApproxDer}
\langle x|[P_F,X_j]|y\rangle \rightarrow \left ((x_j-y_j)-L\left [ \frac{2(x_j-y_j)}{L}\right ]\right ) \, \langle x|P_F^{(L)}|y\rangle,
\end{equation}
where on the right $x$ and $y$ represent the positions of the points inside $[0,L]\times[0,L]$ and 
the square brackets mean the integer part of a real number.
\item Evaluated the relevant matrix elements of the operator inside the trace in \eqref{Eq-HallCond}.
\item Computed the trace per volume using $\Tr_\mathrm{Vol}^{(L)}\{ \cdot \}=\frac{1}{N}\Tr \{\cdot\}$.

\end{itemize}

\begin{figure}
\center
\includegraphics[width=\textwidth]{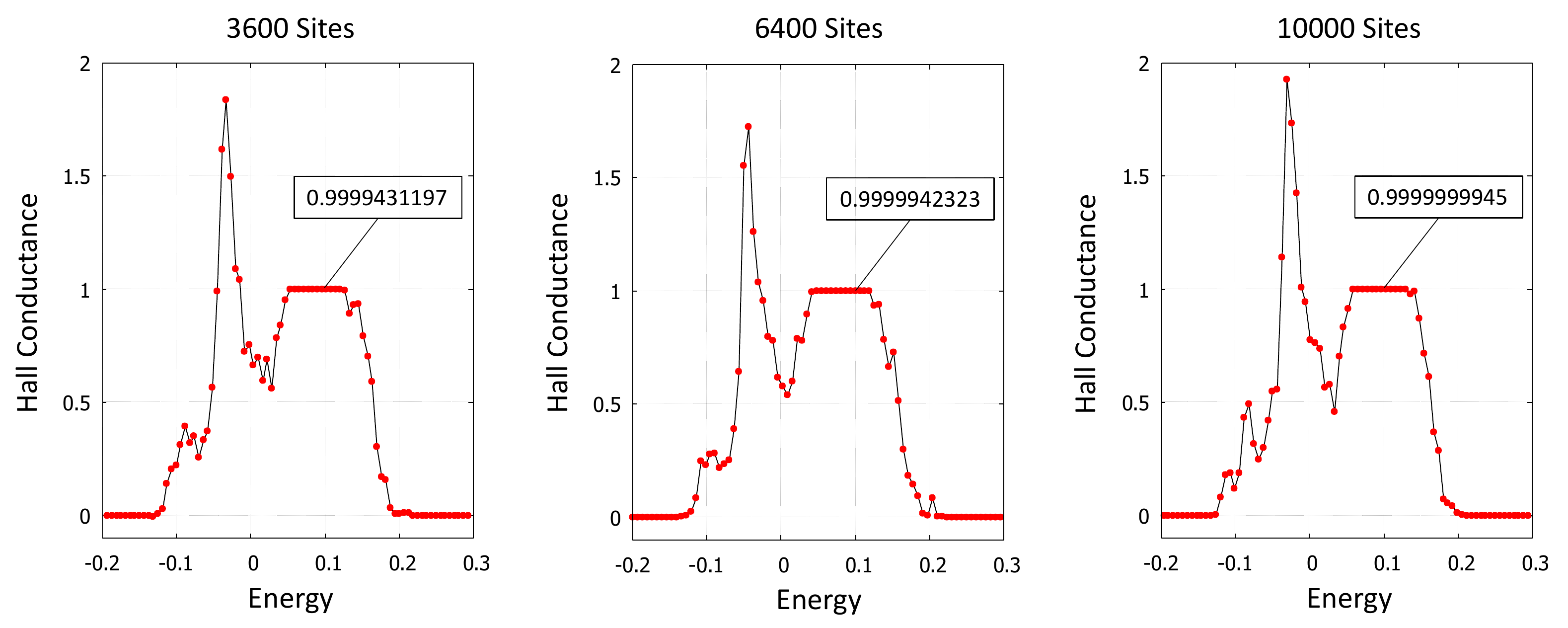}\\
  \caption{\small Plot of the Hall conductance as function of Fermi energy at magnetic field strength 
  $\theta=1.5$ for pattern parameters $d_{\rm min}=0.83$ and (left) $L=60$, (middle) $L=80$, 
  (right) $L=100$. Numerical values are displayed in the boxes.}
 \label{Fig-ChernVsEnergy}
\end{figure}

\vspace{0.2cm}

A map of the Hall conductance as function of Fermi energy and magnetic strength, as computed with the 
above algorithm, is reported in Fig.~\ref{Fig-ChernButterfly} for three increasing system sizes $L=60$, 80 and 100. 
There we can observe a broad band where the Hall conductance takes the quantized value 1 and several 
narrower bands where the Hall conductance takes appreciable values. These bands become sharper as the 
simulation size is increased. The broad band and the first narrow band is consistent with the mobility gaps seen 
in Fig.~\ref{Fig-PattAndSpec}, but notice that the broad band where $\sigma_H=1$ extends much further 
than the region of low spectral density visible with the eye in Fig.~\ref{Fig-PattAndSpec}.

\vspace{0.2cm}

Fig.~\ref{Fig-ChernVsEnergy} shows the sections $\theta \simeq 1.5$ of the intensity maps from 
Fig.~\ref{Fig-ChernButterfly}, together with the numerical values of $\sigma_H$ at $E_F=0.1$. As one can see, 
the quantization is extremely precise even for the smallest system size and for the largest system size the 
quantization holds with eight digits of precision. This is more remarkable given that $\sigma_H$ was computed 
from a single pattern configuration. This leaves very little doubt that, similar to disordered crystals, quantization 
principles are again at work for the amorphous solid; a fact which is confirmed in the following sections. 

\section{Groupoid algebra of a point pattern}
\label{Sec-GrupoidAlg}

\subsection{Delone sets and associated groupoids}

For the reader's convenience, we collect in this section the minimal and quite standard background on point patterns needed for following sections. We take this opportunity to fix our notation. We start by fixing positive numbers $0<r<R<\infty$.
\begin{defn}\label{defn-delone1}
Let $\calL \subset\R^d$ be discrete and infinite and let $B(x;M)$ denote the open ball at $x\in\R^d$ with radius $M>0$.
\begin{enumerate}
  \item $\calL$ is $r$-uniformly discrete if 
    $|B(x;r)\cap \calL| \leq 1$ for all $x\in\R^d$. 
  \item $\calL$ is $R$-relatively dense if 
   $|B(x;R)\cap \calL| \geq 1$ for all $x\in\R^d$.
\end{enumerate}
If $\calL$ is $r$-uniformly discrete and $R$-relatively dense, we call $\calL$ an 
$(r,R)$-Delone set.
\end{defn}

\begin{example} The amorphous patterns constructed in the previous section are Delone sets.$\, \Diamond$
\end{example}

\begin{remark}
Extra structure and properties of Delone sets, {\it e.g.} finite local complexity and repetitive lattices 
can also be considered which potentially give rise to more refined topological properties, 
see~\cite{BBG06} for example. Because our results only require a Delone hypothesis, we 
will not emphasise these extra properties, though we note that patterns of finite local complexity 
do present a strong interest in the study of quasi-crystals and meta-materials. 
$\, \Diamond$
\end{remark}

\begin{prop}[\cite{BHZ00,BBG06}]\label{prop-delonespace}
The set of $(r,R)$-Delone subsets of $\R^d$, $\mathrm{Del}_{(r,R)}$, is a compact and metrizable space, where 
given some $M>0$ and $\epsilon >0$, an $\epsilon$-neighbourhood 
of $\calL$ is given by the set
$$
  U_{M,\epsilon}(\calL) = \big\{ \calL' \in \mathrm{Del}_{(r,R)} \,:\,  d_H\big(\calL\cap B(0;M), \, \calL' \cap B(0;M) \big)  < \epsilon \big\}
$$
with $d_H$ is the Hausdorff distance between sets.
\end{prop}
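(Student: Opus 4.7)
The plan is to build an explicit metric on $\mathrm{Del}_{(r,R)}$ whose open balls agree with the prescribed neighbourhoods $U_{M,\epsilon}(\calL)$, and then establish compactness via a diagonal extraction exploiting the cardinality bound forced by $r$-uniform discreteness.

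First, I would define a candidate metric by something like
$$
d(\calL, \calL') \;=\; \inf\Bigl(\{1\} \cup \bigl\{\,\epsilon \in (0,1)\,:\, d_H\bigl(\calL\cap B(0;1/\epsilon),\,\calL'\cap B(0;1/\epsilon)\bigr) < \epsilon \bigr\}\Bigr),
$$
with the convention that $d_H$ takes the value $+\infty$ when one of the sets is empty. Verifying the metric axioms is routine, modulo the standard trick of allowing a small rigid shift on the window to absorb boundary jumps (points crossing $\partial B(0;M)$ under small perturbations). An equivalent, coordinate-free route is to identify $\calL\in\mathrm{Del}_{(r,R)}$ with the closed subset $\calL\subset\R^d$ and inherit the Chabauty--Fell topology on closed subsets, which is automatically compact and metrizable; the Fell neighbourhoods are then easily related to the $U_{M,\epsilon}$. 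I would then check that the two bases generate the same topology.

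The heart of the argument is sequential compactness. The key quantitative input is that $r$-uniform discreteness yields a constant $C_d>0$ with
$$
\bigl|\calL \cap \overline{B(0;M)}\bigr| \;\leq\; C_d\,(M/r)^d
$$
for every $\calL\in\mathrm{Del}_{(r,R)}$ and every $M>0$. Thus, for each fixed $M$, the restrictions $\calL_n\cap\overline{B(0;M)}$ of any sequence $(\calL_n)\subset\mathrm{Del}_{(r,R)}$ lie in a finite union of symmetric powers of the compact set $\overline{B(0;M)}$, which is compact in the Hausdorff metric. Choosing radii $M_k\nearrow\infty$ and extracting diagonally, I obtain a subsequence $(\calL_{n_j})$ whose restrictions to each $\overline{B(0;M_k)}$ converge in $d_H$ to a finite set $F_k$. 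Compatibility $F_k = F_{k+1}\cap\overline{B(0;M_k)}$ (up to the boundary subtleties mentioned above) yields a candidate limit $\calL := \bigcup_k F_k$.

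Finally, I need to verify $\calL\in\mathrm{Del}_{(r,R)}$. If two points of $\calL$ were at distance $<r$, they would sit inside some $B(0;M_k)$ and, by Hausdorff convergence, would force the same violation in $\calL_{n_j}$ for large $j$, contradicting uniform discreteness; thus the $r$-condition passes to the limit. For $R$-relative density, for any $x\in\R^d$ each $\calL_{n_j}$ provides a point $y_{n_j}\in B(x;R)$, and any accumulation point lies in $\calL$ by Hausdorff convergence on a ball large enough to contain $B(x;R)$. The main obstacle will be the metric's boundary artefacts: points hovering near $\partial B(0;M_k)$ can appear or disappear under arbitrarily small perturbations, so one must either work with slightly shrunken windows $B(0;M_k - \delta_k)$ at each diagonal step, or replace the sharp cut-off by a softened version in the metric; once this is handled, convergence in the $U_{M,\epsilon}$-sense follows from the Hausdorff convergence on each window, and the proposition is complete.
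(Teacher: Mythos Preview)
The paper does not prove this proposition; it is stated with citations to \cite{BHZ00,BBG06} and used as a black box. Your outline is correct and is essentially the standard argument found in those references: identify Delone sets with closed subsets of $\R^d$ carrying the Chabauty--Fell topology (which is automatically compact and metrizable), check that the Fell neighbourhoods match the $U_{M,\epsilon}$ basis, and verify that the Delone constraints are closed conditions so that $\mathrm{Del}_{(r,R)}$ is a closed (hence compact) subspace. Your direct sequential-compactness route via the cardinality bound and diagonal extraction is equally valid, and you have correctly identified the only genuine nuisance, namely points near $\partial B(0;M)$, together with the standard fix of slightly shrinking the window.
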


The space of Delone sets is obviously invariant to the $\R^d$ action $\calL \mapsto \calL + a$ for any $a\in\R^d$. 
These translations act as homeomorphisms in the topology of the space of Delone sets introduced above. 

\begin{defn}
Let $\wt{\calL}$ be an $(r,R)$-Delone subset of $\R^d$. The Hull of $\wt{\calL}$ is the dynamical system 
$(\Omega_{\wt{\calL}}, \R^d, T)$, where $\Omega_{\wt{\calL}}$ is the closure of the orbit of $\wt{\calL}$ 
under the translation action.
\end{defn}

\begin{remarks} 
\begin{enumerate}
  \item Note that $\Omega_{\wt{\calL}}$ is a closed subspace of the space of Delone sets, 
hence it is compact.
  \item To obtain compactness of $\Omega_{\wt{\calL}}$, we actually only require that $\wt{\calL}$ 
  is $r$-uniformly discrete~\cite[Theorem 1.6]{BHZ00}. While many of the results we consider 
  only require this weaker assumption, key results about traces and summability require also 
  an $R$-relatively dense assumption. Hence we generally work with the $(r,R)$-Delone lattices, 
  though we will highlight where this extra condition is required.
  $\, \Diamond$
\end{enumerate}
\end{remarks}

\vspace{0.2cm}

\begin{defn} The transversal of a Delone set $\wt{\calL}$ is given by the set
$$
   \Xi = \{ \calL \in \Omega_{\wt{\calL}}\,:\, 0 \in \calL \},
$$
which is closed and therefore compact.
\end{defn}

\begin{remark} Note that every 
element in $\Xi$ is itself a Delone set. One can think of $\Xi$ as 
the space of (discrete) configurations of an aperiodic lattice.$\, \Diamond$
\end{remark}

\begin{example}
If additional hypotheses are placed on the lattice $\wt{\calL}$, the space 
$\Xi$ can be explicitly characterised. For example, if $\wt{\calL}$ is constructed from 
a Penrose tiling or quasicrystal, $\Xi$ is a Cantor set~\cite{BHZ00}. 
For disordered crystals considered in \cite{PSBbook}, $\Xi$ was homeomorphic with the Hilbert cube.
$\, \Diamond$
\end{example}

A \emph{groupoid} is a small category where all morphisms are invertible. A more user-friendly characterisation 
of a groupoid is a set $\calG$ with an inverse map, 
$\calG \ni \gamma \mapsto \gamma^{-1} \in \calG$, partially defined multiplication, 
$\calG^{(2)} \ni (\gamma_1,\gamma_2)\mapsto \gamma_1\gamma_2 \in \calG$ for $\calG^{(2)} \subset \calG\times \calG$,
and space of units $\calG^{(0)}$. We can define the source and range maps $r,s:\calG\to\calG^{(0)}$ 
as $s(\gamma) = \gamma^{-1}\gamma$ and $r(\gamma)=\gamma\gamma^{-1}$. 
In particular $(\gamma_1,\gamma_2)\in\calG^{(2)}$ if and only if $s(\gamma_1)=r(\gamma_2)$. 
Topological structure can also be added if $\calG$ is a locally compact Hausdorff space, where we require 
the multiplication and inverse maps to be continuous.
A groupoid is called \emph{\'{e}tale} if $r$ is a local homeomorphism.

\begin{prop}[\cite{Kellendonk95}] \label{prop:etale}
Given a Delone set $\wt{\calL}$ and transversal $\Xi$, define the set 
$$
  \calG = \big\{ (\calL, x)\in \Xi \times\R^d \, :\, x\in \calL \big\}.
$$
Then $\calG$ is an \'{e}tale groupoid, where $(\calL,x)^{-1}=(\calL-x,-x)$, $\calG^{(0)}=\Xi$ and 
\begin{align} \label{eq:derivation_properties}
   &s(\calL,x) = \calL - x,   &&r(\calL,x) = \calL,  
   &&(\calL,x)\circ (\calL-x,y) = (\calL, x+y).
\end{align}
\end{prop}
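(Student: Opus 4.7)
The plan is to verify the proposition in three stages: first, that the given inverse and product maps endow $\calG$ with the algebraic structure of a groupoid; second, that with the subspace topology inherited from $\Xi \times \R^d$ the space $\calG$ is locally compact Hausdorff with continuous structure maps; and third, that the range map $r$ is a local homeomorphism. Throughout, the essential geometric input will be the $r$-uniform discreteness of the Delone sets.

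For the algebraic groupoid structure, I would first check that the inverse is well-defined on $\calG$: if $(\calL,x)\in\calG$ then $0\in\calL-x$, so $\calL-x\in\Xi$, and $-x\in\calL-x$ because $0\in\calL$. The partial product $(\calL,x)\circ(\calL-x,y)=(\calL,x+y)$ lies in $\calG$ since $y\in\calL-x$ gives $x+y\in\calL$. Associativity, the identities $(\calL,x)^{-1}(\calL,x)=(\calL-x,0)$ and $(\calL,x)(\calL,x)^{-1}=(\calL,0)$, and the unit axioms are then routine computations with the translation action; once we identify $\Xi$ with $\{(\calL,0):\calL\in\Xi\}\subset\calG$, these equalities produce exactly the stated formulas for $s$ and $r$.

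For the topological structure, I would give $\calG$ the subspace topology from $\Xi\times\R^d$. Compactness of $\Xi$ (Proposition~\ref{prop-delonespace}) together with local compactness of $\R^d$ makes $\Xi\times\R^d$ locally compact Hausdorff, so it suffices to show $\calG$ is closed: if $(\calL_n,x_n)\to(\calL,x)$ with $x_n\in\calL_n$, picking $M>|x|+1$ forces $d_H(\calL_n\cap B(0;M),\calL\cap B(0;M))\to 0$, hence $x\in\calL$. Continuity of the multiplication and the inverse then follows from continuity of vector addition on $\R^d$ and continuity of the $\R^d$-action on $\Xi$.

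For the étale property, fix $(\calL,x)\in\calG$ and set $\delta=r/3$, where $r$ is the uniform discreteness parameter. Then $x$ is the unique point of $\calL$ in $B(x;\delta)$. Using the Hausdorff-distance description of the Delone topology, I would produce a neighborhood $U$ of $\calL$ in $\Xi$ such that every $\calL'\in U$ contains exactly one point $\phi(\calL')\in B(x;\delta)$, and the resulting map $\phi:U\to\R^d$ is continuous. The set $V=\{(\calL',\phi(\calL')):\calL'\in U\}$ is then an open neighborhood of $(\calL,x)$ in $\calG$ on which $r$ restricts to a homeomorphism onto $U$ with inverse $\calL'\mapsto(\calL',\phi(\calL'))$, exhibiting $r$ as a local homeomorphism. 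The main technical point, and the only step that is not pure bookkeeping, is the construction of this continuous local section $\phi$: one needs both existence and uniqueness of the nearby point in each perturbed lattice, which is precisely where $r$-uniform discreteness (rather than mere discreteness) is indispensable.
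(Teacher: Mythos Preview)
The paper does not actually supply a proof of this proposition: it is stated with attribution to Kellendonk~\cite{Kellendonk95} and used as input, so there is no argument in the paper to compare your attempt against.

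That said, your proposal is a correct and complete verification along the standard lines. The algebraic checks are right; the closedness of $\calG$ in $\Xi\times\R^d$ follows exactly as you describe from the Hausdorff-distance characterisation of the Delone topology; and the heart of the matter, the local homeomorphism property of $r$, is handled properly. Your key observation is the right one: $r$-uniform discreteness forces any two lattice points to be separated by at least $r$, so for $\delta<r/2$ each $\calL'\in\Xi$ can have at most one point in $B(x;\delta)$, while Hausdorff proximity to $\calL$ guarantees existence of such a point. The identification $V=(U\times B(x;\delta))\cap\calG$ shows $V$ is open, and $r|_V$ is then a homeomorphism onto $U$ with the continuous section $\calL'\mapsto(\calL',\phi(\calL'))$. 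One cosmetic remark: your reference to Proposition~\ref{prop-delonespace} for compactness of $\Xi$ is slightly indirect, since that proposition concerns the full space $\mathrm{Del}_{(r,R)}$; the compactness of $\Xi$ is noted separately in the paper as a consequence of $\Xi$ being closed in $\Omega_{\wt{\calL}}$.
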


\begin{remark}It is a deep result that when we pass from the continuous 
dynamical system $(\Omega_{\wt{\calL}},\R^d,T)$ to 
the transversal $\Xi$ and groupoid dynamics, the key characteristics of our system are retained~\cite{MRW, SimsWilliams}.$\, \Diamond$
\end{remark}

\vspace{0.2cm}

If the lattice $\wt{\calL}$ is aperiodic, \emph{i.e.} 
there is no $x \neq 0 \in \R^d$ such that $\wt{\calL} - x = \wt{\calL}$, 
then $\calG$ can also be described as the groupoid 
from the \'{e}tale 
equivalence relation on $\Xi\times \Xi$, 
$$
  R_{\Xi} = \big\{(\calL_1,\calL_2)\in \Xi\times\Xi \,:\, 
    \calL_2 =  \calL_1 - a \text{ for some }a\in\R^d \big\}.
$$
Note that the topology on $R_{\Xi}$ is 
different than the subspace topology of $\Xi\times\Xi$. 
For lattices that are not aperiodic, the $C^*$-algebra of the groupoid 
coming from the orbit equivalence relation $R_\Xi$ will not be the correct algebra 
to model a physical system. See~\cite{Kellendonk97} for 
more on these issues.

\subsection{Algebra and representations}

Two groupoid elements $\gamma_1$ and $\gamma_2$ can be composed if 
$s(\gamma_1) = r(\gamma_2)$. Therefore for the case of the transversal groupoid, we can characterize the space of composable elements as
$$
  \calG^{(2)} = \big\{ \big((\calL, x), (\calL - x, y)\big) \big\} \subset \calG \times \calG.
$$ 
One uses this (partial) multiplication to construct a convolution algebra for the groupoid $\calG$. 
Here, the groupoid algebra \cite{Kellendonk95} will be twisted by a cocycle to account for the 
presence of a magnetic field. The interested reader may consult~\cite{Renault80} for a 
comprehensive overview of the general groupoid $C^*$-construction including the twisted case.

\begin{defn}
Let $\calG$ be a locally compact and Hausdorff groupoid. A continuous 
map $\sigma:\calG^{(2)}\to \T$ is a $2$-cocycle if 
\begin{equation}\label{eq-cocyclecond1}
  \sigma(\gamma_1,\gamma_2) \sigma(\gamma_1\gamma_2,\gamma_3) 
    = \sigma(\gamma_1, \gamma_2\gamma_3) \sigma(\gamma_2,\gamma_3)
\end{equation}
for any $(\gamma_1,\gamma_2),(\gamma_2,\gamma_3)\in\calG^{(2)}$, 
and
\begin{equation}\label{eq-cocyclecond2}
  \sigma(\gamma, s(\gamma)) = 1 = \sigma(r(\gamma),\gamma)
\end{equation}
for all $\gamma\in\calG$.
\end{defn}
As the name suggests, groupoid $2$-cocycles give rise to classes in the cohomolgy group 
$H^2(\calG,\T)$, where if $\sigma$ is cohomologous to $\sigma'$, then the corresponding 
(full or reduced) twisted groupoid $C^*$-algebras are isomorphic.  

\vspace{0.2cm}

We encode a magnetic twist on our groupoid via a construction from~\cite{BLM13}, 
which considered twisted crossed products of commutative $C^*$-algebras. 
We construct a magnetic field in $d$ dimensions as a 
$2$-form $B\in\bigwedge^2 \R^d$. Using coordinates 
$B$ can be seen as an anti-symmetric matrix $(B^{j,k})_{j,k=1}^d$ such that 
$$
\partial_j B^{k,l} + \partial_k B^{l,j} + \partial_l B^{j,k} =0.
$$
We then define, for $x,\,y,\,z\in\calL$,
$\Gamma_{\calL}\langle x,y,z \rangle = \int_{\langle x,y,z \rangle} B$
as the magnetic flux through the triangle 
$\langle x,y,z \rangle\subset \R^d\times \R^d$ 
with corners $x,\,y,\,z\in\calL$. In most cases of 
interest, $B$ is a closed $2$-form, $B = \mathrm{d}A$, and so we can 
write the magnetic flux in the more familiar expression
$\Gamma_{\calL}\langle x,y,z \rangle = \int_{\langle x,y,z \rangle} \mathrm{d}A$. 
For this work, we will only consider systems with constant magnetic field 
strength and so the magnetic flux 
can be written using the anti-symmetric matrix $(B^{j,k})$ and 
coordinates of $x,y,z\in\calL$.

With the preliminaries done, we define a magnetic twist via the 2-cocycle 
$\sigma:\calG^{(2)} \to \T$,
$$
  \sigma( (\calL,x),(\calL-x,y)) = \exp\big( -i\Gamma_\calL\langle 0, x, x+y \rangle \big)
$$
as $((\calL,x),(\calL-x,y))\in\calG^{(2)}$ implies $0,\,x,\,x+y\in\calL$.
It is 
straightforward to see that for a $2$-dimensional lattice 
with magnetic field strength $\theta$, 
our twist coincides with Peierls phase factor in \eqref{eq-modelH2}. 
The cocycle condition \eqref{eq-cocyclecond1} on $\sigma$ 
 translates into the condition that, 
for any $x$, $y$ and $z$ such that $x, \ x+y, \ x+y+z\in \calL$,
$$
  \Gamma_{\calL}\langle 0,x,x+y \rangle + \Gamma_{\calL} \langle 0,x+y,x+y+z \rangle 
  = \Gamma_{\calL}\langle 0,x,x+y+z \rangle + \Gamma_{\calL-x}\langle 0, y, y+z \rangle,
$$
which follows from Stokes' Theorem and the observation that 
$$
  \Gamma_{\calL-x}\langle 0, y, y+z \rangle = \Gamma_\calL \langle x , x+y, x+y+z \rangle.
$$
 We also note that our cocycle has the property that 
$\sigma( (\calL,x), (\calL-x,-x)) = 1$ for any $(\calL,x)\in\calG$, which 
will simplify many of our formulas.

\vspace{0.2cm}

Given the groupoid $\calG$ and cocycle $\sigma$, we can construct the twisted convolution 
$\ast$-algebra $C_c(\calG)$ where the elements are functions with compact support over 
$\calG$ and the operations are,
\begin{align*}
  (f_1\ast f_2)(\calL,x) &= \sum_{y\in\calL} f_1(\calL,y) f_2(\calL-y,x-y)
        \,\sigma((\calL,y),(\calL-y,x-y)) \\
    &= \sum_{y\in\calL} e^{-i\Gamma_\calL\langle 0, y, x \rangle } 
         f_1(\calL,y) f_2(\calL-y,x-y) \\
   f^*(\omega,x) &= \ol{f(\calL-x,-x) \sigma((\calL,x),(\calL-x,-x)) } 
     = \ol{f(\calL-x,-x)}
\end{align*}
The cocycle condition \eqref{eq-cocyclecond1} on $\sigma$ ensures that $C_c(\calG)$ is associative 
and it is a simple check that $(f_1\ast f_2)^\ast = f_2^*\ast f_1^*$. The algebra $C_c(\calG)$ is unital with the unit $1(\calL,x)= \delta_{x,0}$. Furthermore, it accepts a family of canonical representations, $\{\pi_\calL\}_{\calL\in\Xi}$, indexed by 
$\calL\in\Xi$ and defined by the maps $\pi_\calL:C_c(\calG) \to \calB[\ell^2(\calL)]$,
\begin{equation} \label{eq:Hspace_repn}
  \big(\pi_\calL(f)\psi)(x) = \sum_{y\in\calL}e^{-i\Gamma_{\calL-x}\langle 0,y-x,-x \rangle} 
     f(\calL-x,y-x) \psi(y).
\end{equation}
One can check that $\pi_\calL(f_1\ast f_2) = \pi_\calL(f_1)\pi_\calL(f_2)$ and 
$\pi_\calL(f^*)=\pi_\calL(f)^*$ so $\pi_\calL$ is indeed a $\ast$-representation. Also, $\pi_\calL(1)=1_{\calB[\ell^2(\calL)]}$.

\begin{remark} With the substitution $q=y-x$, \eqref{eq:Hspace_repn} becomes:
$$
  \big(\pi_\calL(f)\psi)(x) = \sum_{q\in\calL-x}e^{-i\Gamma_{\calL-x}\langle 0,q,-x \rangle} 
     f(\calL-x,q) \psi(x+q),
$$
which, apart from the Peierls factor, is identical to \eqref{Eq-StrHamiltonian} if the 
coefficients are properly identified. In other words, the canonical representations of $C_c(\calG)$ generate 
all the finite range Hamiltonians associated to $\calL$.$\, \Diamond$  
\end{remark}

\vspace{0.2cm}

The next result gives a covariance for representations that come from lattices in the same orbit, 
which needs to be verified with care when the cocycle is present. 

\begin{prop} \label{prop:repn_covariance}
Suppose that $\calL,\calL'\in\Xi$ are such that $\calL'=\calL-a$ for some $a\in \R^d$. 
Then there is a unitary operator $T_a:\ell^2(\calL)\to\ell^2(\calL-a)$ such that 
$T_a \pi_\calL(f) T_a^* = \pi_{\calL-a}(f)$ for all $f\in C_c(\calG)$.
\end{prop}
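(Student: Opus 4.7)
The natural candidate is a magnetic translation operator $T_a\colon \ell^2(\calL)\to \ell^2(\calL-a)$ of the form $T_a\,|x\rangle = e^{i\phi(x,a)}\,|x-a\rangle$ for $x\in\calL$, where the phase $\phi(x,a)\in\R$ is to be chosen. Unitarity is immediate because such a $T_a$ maps an orthonormal basis to an orthonormal basis with unimodular coefficients; its adjoint acts as $T_a^*\,|z\rangle = e^{-i\phi(z+a,a)}\,|z+a\rangle$ for $z\in\calL-a$.

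To fix $\phi$, I would compute both $T_a\pi_\calL(f)T_a^*$ and $\pi_{\calL-a}(f)$ on an arbitrary vector in $\ell^2(\calL-a)$ using the explicit formula \eqref{eq:Hspace_repn}, then match the two sides. After the change of summation variable $y=w+a$ that transports sums over $\calL$ to sums over $\calL-a$, the $f$-factors automatically agree because $(\calL-a)-z=\calL-(z+a)$, so the identity reduces to matching the magnetic phases. Writing $x=z+a$ and $y=w+a$ (both in $\calL$), this becomes the coboundary equation
$$
\phi(x,a)\;-\;\phi(y,a)\;=\;\Gamma_\calL\langle 0,\, y-x,\, -x\rangle\;-\;\Gamma_\calL\langle 0,\, y-x,\, a-x\rangle.
$$

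The real content of the proposition is that this right-hand side is a \emph{coboundary}, \emph{i.e.}\ splits as a function of $x$ minus the same function of $y$; without such a factorization, no unitary of the above form could intertwine the representations. This is a Stokes-type identity: the two triangles share the edge $[0, y-x]$ and their flux difference is the integral of $B$ over the signed region obtained by swapping the third vertex from $-x$ to $a-x$. For the constant magnetic field $(B^{j,k})$ considered in the paper, the flux takes the bilinear form $\Gamma\langle 0,q,r\rangle = \tfrac{1}{2}\sum_{j,k}B^{j,k}q_j r_k$, and a direct expansion yields the right-hand side equal to $\tfrac{1}{2}B(x,a)-\tfrac{1}{2}B(y,a)$, with $B(u,v)=\sum_{j,k}B^{j,k}u_j v_k$. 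One may therefore take $\phi(x,a)=\tfrac{1}{2}B(x,a)$.

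Identifying this precise phase correction is the only genuine obstacle; it is exactly the phase defining the classical magnetic translation operators for lattice systems, and one could alternatively derive it invariantly from the cocycle identity on $\sigma$ combined with the observation that $\Gamma_{\calL-x}\langle 0,y,y+z\rangle = \Gamma_\calL\langle x,x+y,x+y+z\rangle$. With this choice of $\phi$, the verification $T_a\pi_\calL(f)T_a^*=\pi_{\calL-a}(f)$ reduces to a direct matrix-element comparison on $\ell^2(\calL-a)$ using \eqref{eq:Hspace_repn}, and the proposition follows.
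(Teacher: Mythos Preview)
Your proposal is correct and follows the same overall strategy as the paper: define a phase-twisted translation $T_a|x\rangle = e^{i\phi(x,a)}|x-a\rangle$ and verify that it intertwines the two representations. The difference is in how the phase condition is established. The paper defines $T_a$ from the outset in terms of the abstract cocycle $\sigma$, namely $(T_a\psi)(x)=\sigma((\calL-x-a,-x-a),(\calL,a))\,\psi(x+a)$, and then verifies $T_a\pi_\calL(f)T_a^*=\pi_{\calL-a}(f)$ by a direct appeal to the 2-cocycle identity \eqref{eq-cocyclecond1}. Your argument instead derives the required coboundary equation and solves it explicitly using the bilinear form $\Gamma\langle 0,q,r\rangle=\tfrac12\sum_{j,k}B^{j,k}q_jr_k$ available for constant $B$. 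The paper's route is more invariant and would go through for any groupoid 2-cocycle (not just those arising from a constant magnetic flux), whereas yours is more hands-on and makes the phase $\phi(x,a)=\tfrac12 B(x,a)$ completely explicit, recovering the familiar magnetic translation. A minor slip: your displayed coboundary equation should carry $\Gamma_{\calL-x}$ rather than $\Gamma_\calL$, but for constant field strength the flux is independent of the subscript and the computation is unaffected.
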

\begin{proof}
Because our proof relies on the cocycle condition, we will work 
with the cocycle $\sigma$ directly. First we define a unitary maps 
$T_a:\ell^2(\calL)\to\ell^2(\calL-a)$ which stand for the magnetic shifts, where
$$
  (T_a\psi)(x) = \sigma((\calL-x-a,-x-a),(\calL,a)) \psi(x+a) 
    = e^{-i\Gamma_{\calL-x-a} \langle 0, -x-a,x \rangle } \psi(x+a).
$$
One then checks that the inverse $T_a^*:\ell^2(\calL-a)\to \ell^2(\calL)$ 
can be written in the form 
$$
  (T_a^* \phi)(y) = \sigma((\calL-y,-y),(\calL,a) )^{-1} \phi(y-a) 
   = e^{i\Gamma_{\calL-y}\langle 0, -y,a-y \rangle } \phi(y-a).
$$
We now verify the compatibility of our representation 
with this unitary map.
\begin{align*}
 \big( T_a \pi_\calL(f) T_a^* \psi \big)(x) &= \sigma((\calL-x-a,-x-a),(\calL,a)) 
   \big( \pi_\calL(f) T_a^* \psi \big)(x+a) \\
   &\hspace{-1cm}=  \sigma((\calL-x-a,-x-a),(\calL,a))  \sum_{y\in\calL} 
       \sigma((\calL-x-a,y-x-a),(\calL-y,-y)) \\
     &\qquad \times f(\calL-x-a,y-x-a) (T_a^* \psi)(y) \\
   &\hspace{-1cm}= \sigma((\calL-x-a,-x-a),(\calL,a))  \sum_{y\in\calL} 
       \sigma((\calL-x-a,y-x-a),(\calL-y,-y)) \\
     &\qquad \times f(\calL-x-a,y-x-a) 
        \sigma((\calL-y,-y),(\calL,a) )^{-1} \psi(y-a) \\
   &\hspace{-1cm}=  \sum_{u\in\calL-a} f(\calL-x-a,u-x)\psi(u) 
          \sigma((\calL-x-a,u-x),(\calL-a-u,-u-a)) \\
     &\qquad \times 
       \sigma((\calL-x-a,-x-a),(\calL,a)) \sigma((\calL-a-u,-u-a),(\calL,a))^{-1}  \\
   &\hspace{-1cm}= \sum_{u\in\calL-a}\sigma((\calL-a-x,u-x),(\calL-a-u,-u)) f(\calL-x-a,u-x)\psi(u) \\
   &\hspace{-1cm}= \big(\pi_{\calL-a}(f)\psi\big)(x),
\end{align*}
where in the second to last line we have used the cocycle identity
\begin{equation*}
  \sigma(\gamma_1,\gamma_2\gamma_3) = \sigma(\gamma_1,\gamma_2) 
     \sigma(\gamma_1\gamma_2,\gamma_3)\sigma(\gamma_2,\gamma_3)^{-1}, \qquad 
     (\gamma_1,\gamma_2), (\gamma_2,\gamma_3)\in\calG^{(2)}. \qedhere
\end{equation*}
\end{proof}

\begin{defn}
The twisted reduced groupoid $C^*$-algebra $C^*_r(\calG,\sigma)$ is given 
by the $C^*$-completion of $C_c(\calG)$ under the norm
$$ 
  \|f\| = \sup_{\calL\in\Xi} \|\pi_\calL(f)\|.
$$
\end{defn}

\begin{remark}
The family of representations $\{\pi_\calL\}_{\calL\in\Xi}$ of $C_c(\calG)$ extends to a family of 
representations of the $C^*$-closure $C^*_r(\calG,\sigma)$. In particular, the representations  
of the $C^*$-closure represent Hamiltonians associated to $\calL\in \Xi$ without 
the finite range assumption. Hence Hamiltonians such as Equations \eqref{Eq-ModelHam1} 
and \eqref{eq-modelH2} are represented in the $C^*$-closure. 
$\, \Diamond$
\end{remark}

\begin{remark}
We can easily extend our framework to the case of representations and Hamiltonians on 
$\C^N\otimes \ell^2(\calL)$ by working with the algbera $M_N(C^*_r(\calG,\sigma))$ 
of $N\times N$ matrices with entries in the $C^*$-algebra. In order to keep our presentation 
as clean as possible, we write the case of $N=1$ but note that our index theory 
results also apply to systems with $N$ degrees of freedom described in 
Section \ref{Sec:Patterned_resonators} by this matrix extension.
$\, \Diamond$
\end{remark}

\subsection{Differential calculus}

In this section we introduce a set of canonical derivations and invariant trace for $C^*_r(\calG,\sigma)$. 
Along the way, we introduce the smooth and Sobolev algebras which will play central roles. 

\subsubsection{Derivations and the smooth subalgebra} 
We would like to encode a differential structure on the dense subalgebra 
$C_c(\calG) \subset C^*_r(\calG,\sigma)$. To do this we first note that 
the algebra $C_c(\calG)$ has a family of  $d$ commuting 
one-parameter group of automorphisms 
$\{u_t^{(j)}\}_{j=1}^d$, where
$$
   (u_t^{(j)} f)(\calL,x) = e^{it x_j} f(\calL,x), \qquad t\in\R
$$ 
and $x_j$ the $j$-th component of $x\in\calL$.
The generators of these automorphisms are the derivations
$\{\partial_j\}_{j=1}^d$ on $C_c(\calG)$, 
where $(\partial_j f)(\calL,x) = x_j f(\calL,x)$ (pointwise multiplication). Similar 
groupoid dynamics appear in~\cite{MeslandGpoid}.
The representations $\{\pi_\calL\}_{\calL\in\Xi}$ relate the derivations $\partial_j$ 
on the algebra to the unbounded position operator $X_j:\Dom(X_j)\subset \ell^2(\calL)\to\ell^2(\calL)$ on 
the Hilbert space. Namely, basic computations give that 
\begin{align}  \label{eq:conv_derivation}
  &\partial_j(f_1\ast f_2) = f_1 \ast \partial_j f_2 + \partial_j f_1 \ast f_2, 
  && \pi_\calL(\partial_j f) = [X_j,\pi_\calL(f)], 
\end{align}
for any $j\in \{1,\ldots,d\}$ and $\calL\in\Xi$. We note that the operators $\{X_j\}_{j=1}^d$ 
also depend on the lattice $\calL$. We will slightly abuse notation and refer to the operator 
$X_j$ as the $j$-th position operator on any lattice $\calL\in\Xi$, where the particular 
space in which $X_j$ acts will be clear from the context.

\begin{remark} Note that it is precisely the commutators of Equation \eqref{eq:conv_derivation} that enter the 
expression of the Hall conductance \eqref{Eq-HallCond}. The link between these 
commutators and the derivations on the algebra was paramount for finding the optimal 
substitution in \eqref{Eq-ApproxDer} (see \cite[Sec.~4.5]{ProdanSpringer2017}).
$\, \Diamond$
\end{remark}

\vspace{0.2cm}

We note that $\partial_j(C_c(\calG)) \subset C_c(\calG)$ and so our subalgebra $C_c(\calG)$ is 
`smooth' under the derivations $\{\partial_j\}_{j=1}^d$. 
However, from the perspective of index theory, we need to make sure that we do not 
lose any information when working with a subalgebra, where for example the Fermi projection does not belong to 
$C_c(\calG)$ even if $E_F$ is located in a spectral gap of a finite range Hamiltonian. 
This problem is solved by completing $C_c(\calG)$ in a topology stronger than the $C^*$-norm so 
that elements in the completion remain sufficiently smooth, yet the completion is large enough so 
that all $K$-theoretic results extend to the $C^*$-algebra $C^*_r(\calG,\sigma)$.

\begin{defn}
The smooth algebra $\calA$ is defined as the completion of $C_c(\calG)$ under the topology induced by the norms
$$
   \| f \|_\alpha = \left\| \partial^\alpha f \right\|,  
    \qquad \partial^\alpha = \partial_1^{\alpha_1}\cdots \partial_d^{\alpha_d},  \quad \alpha \in \N^d.
$$
\end{defn}

\begin{prop}[\cite{RennieSmooth, PSBbook}]
The algebra $\calA$ is Fr\'{e}chet and stable under the holomorphic functional calculus. 
In particular $K_\ast(\calA) \cong K_\ast (C^*_r(\calG,\sigma))$, with $\ast=0,1$.
\end{prop}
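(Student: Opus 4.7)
The plan is to follow the standard strategy for showing that a subalgebra defined by a family of derivations is Fr\'{e}chet and closed under holomorphic functional calculus, which then yields the $K$-theory statement by a well-known density theorem.

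First I would verify the Fr\'{e}chet structure. The family $\{\|\cdot\|_\alpha\}_{\alpha\in\N^d}$ is countable and separates points of $C_c(\calG)$ because $\|\cdot\|_0$ is already the $C^*$-norm. Each $\|\cdot\|_\alpha$ is a seminorm, and the key point is that Leibniz \eqref{eq:conv_derivation} together with the fact that the $\partial_j$ commute pairwise gives the multi-index Leibniz identity
\begin{equation*}
\partial^\alpha(f_1\ast f_2) = \sum_{\beta\leq \alpha} \binom{\alpha}{\beta}\, \partial^\beta f_1 \ast \partial^{\alpha-\beta}f_2,
\end{equation*}
so that $\|f_1\ast f_2\|_\alpha \leq \sum_{\beta\leq\alpha}\binom{\alpha}{\beta}\|f_1\|_\beta\|f_2\|_{\alpha-\beta}$. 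This shows multiplication is jointly continuous, and completing in the induced metric topology yields a unital Fr\'{e}chet $\ast$-algebra $\calA$ in which each $\partial_j$ extends to a closed $\ast$-derivation.

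Second, I would establish stability under the holomorphic functional calculus. By a standard criterion, it suffices to show that if $a\in\calA$ is invertible in $C^*_r(\calG,\sigma)$ then $a^{-1}\in\calA$, i.e.\ $\partial^\alpha(a^{-1})$ is well-defined and norm-bounded for every $\alpha$. The natural candidate formula comes from differentiating $a\cdot a^{-1}=1$: for a single derivation,
\begin{equation*}
\partial_j(a^{-1}) = -a^{-1}\,(\partial_j a)\,a^{-1},
\end{equation*}
which lies in $C^*_r(\calG,\sigma)$ since $\partial_j a\in\calA\subset C^*_r(\calG,\sigma)$. One then argues inductively on $|\alpha|$, expressing $\partial^\alpha(a^{-1})$ as a finite sum of products of copies of $a^{-1}$ with various $\partial^\beta a$, $|\beta|\leq|\alpha|$; closedness of each $\partial_j$ promotes these identities from formal to genuine, and the resulting expression is manifestly $C^*$-bounded. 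This is the main obstacle, in the sense that one must be careful that each derivative computed for $a^{-1}$ stays in the domain of the next derivation; this is where the closedness of $\partial_j$ on $\calA$ and the fact that $\calA$ is invariant under left and right multiplication by its elements are both essential. This argument is exactly the one carried out in the references \cite{RennieSmooth, PSBbook}, and no new difficulty arises from the groupoid setting because the derivations act pointwise in the $x$-variable.

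Finally, for the $K$-theory statement I would invoke the standard density theorem: a Fr\'{e}chet $\ast$-subalgebra $\calA$ that is dense in a $C^*$-algebra $A$ and stable under the holomorphic functional calculus has $K_\ast(\calA)\cong K_\ast(A)$ (originally due to Bost; see the treatment in \cite{PSBbook}). Density of $\calA$ in $C^*_r(\calG,\sigma)$ follows by construction since $C_c(\calG)\subset\calA$ is already $C^*$-dense, and the stability was just established, so the isomorphism follows immediately.
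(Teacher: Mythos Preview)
Your proposal is correct and follows the standard argument from the cited references \cite{RennieSmooth, PSBbook}. The paper itself gives no proof of this proposition, deferring entirely to those citations; your sketch is precisely the approach one finds there, so there is nothing further to compare.
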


\begin{remark}\label{re-smoothproj} One can improve the above statement by observing that in fact $\calA$ is 
invariant to the smooth functional calculus \cite[Prop.~3.25]{ProdanSpringer2017}. 
Then one can automatically see that the spectral projections of any self-adjoint element 
from $\calA$ are also elements of $\calA$, provided the edges of the spectral intervals 
are located inside spectral gaps.$\, \Diamond$
\end{remark}

\subsubsection{Traces and Sobolev spaces}

Our next task is to construct a trace on our groupoid algebra. The following 
result will be useful.

\begin{prop}[\cite{BBG06, SavThesis}]
There is a one-to-one correspondence between measures on $\Omega_{\wt{\calL}}$, the continuous hull of $\wt{\calL}$, 
invariant under the $\R^d$-action and measures on the transversal $\Xi$ invariant under the groupoid action. 
\end{prop}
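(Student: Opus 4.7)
The plan is to exploit the local product structure of the continuous hull near the transversal to convert between the two classes of invariant measures, and to verify that the two constructions are mutually inverse.

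First, I would establish a \emph{flow box} description. Because $\wt{\calL}$ is $r$-uniformly discrete, there exists $\epsilon_0 < r/2$ such that for each $\calL \in \Xi$ one can choose an open neighbourhood $V \subset \Xi$ of $\calL$ for which the map
\[
\Phi : V \times B(0;\epsilon_0) \longrightarrow \Omega_{\wt{\calL}}, \qquad \Phi(\calL',a) = \calL' - a,
\]
is a homeomorphism onto an open subset of $\Omega_{\wt{\calL}}$. Compactness of $\Omega_{\wt{\calL}}$ then yields a finite cover of $\Omega_{\wt{\calL}}$ by such flow boxes, and the overlaps are controlled by bisections of $\calG$: if $\Phi_1(V_1\times B(0;\epsilon_0))$ and $\Phi_2(V_2\times B(0;\epsilon_0))$ meet, then on the overlap there is a continuous map $\phi\colon V_1 \supset V_1' \to V_2$ of the form $\phi(\calL')=\calL'-\tau(\calL')$ for some continuous $\tau$, so $(\calL',\tau(\calL'))$ traces out an open bisection of $\calG$.

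Second, going from an $\R^d$-invariant measure $\mu$ on $\Omega_{\wt{\calL}}$ to a measure $\nu$ on $\Xi$, I define, for each flow box $\Phi:V\times B(0;\epsilon_0)\to \Omega_{\wt{\calL}}$ and each Borel $B\subseteq V$,
\[
\nu(B) \;=\; \frac{1}{\vol(B(0;\epsilon_0))}\,\mu\bigl(\Phi(B\times B(0;\epsilon_0))\bigr).
\]
The $\R^d$-invariance of $\mu$ makes this independent of the radius $\epsilon_0$ and of the choice of flow box, and hence the local definitions patch to a Borel measure on $\Xi$. To see that $\nu$ is groupoid-invariant, take an open bisection $U\subset \calG$ with $\phi=s\circ (r|_U)^{-1}$ as above; shrinking if necessary, one can compare the two flow boxes over $r(U)$ and $s(U)$, and the Lebesgue factors being translation invariant implies $\nu(r(U))=\nu(s(U))$. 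Conversely, starting from a groupoid-invariant $\nu$ on $\Xi$, I use the same flow boxes to push forward $\nu|_V\times(\vol(B(0;\epsilon_0)))^{-1}\,\mathrm{Leb}|_{B(0;\epsilon_0)}$ via $\Phi$; groupoid invariance of $\nu$ is exactly what is required to make these locally defined measures agree on overlaps, and $\R^d$-invariance of the resulting $\mu$ is automatic from the translation invariance of the Lebesgue factor.

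Third, I would verify that the two assignments are mutually inverse. In a single flow box this is a direct computation using Fubini, and the globalisation follows from the uniqueness of measures consistent with a finite open cover. A standard partition-of-unity/approximation argument on $C_c(\Omega_{\wt{\calL}})$ and $C_c(\Xi)$ turns this into a statement about integrals of continuous functions, giving the bijection.

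The main obstacle is the bookkeeping on the overlaps of flow boxes: the transversal is neither open nor (in general) clopen in $\Omega_{\wt{\calL}}$, so one must be careful that the transition maps are precisely bisections of $\calG$ and that the Radon--Nikodym factor between the two local product representations is trivial. Once this is done cleanly, both directions and the inverse property reduce to standard Fubini-type manipulations in the flow boxes.
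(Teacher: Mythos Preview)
The paper does not give its own proof of this proposition; it is quoted from \cite{BBG06, SavThesis} and used as a black box to justify the existence of the invariant measure $\bP$ on $\Xi$. So there is no ``paper's proof'' to compare against.

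Your outline is the standard flow-box/disintegration argument and is correct in spirit. One point to tighten: the flow boxes $\Phi(V\times B(0;\epsilon_0))$ with $V\subset\Xi$ and $\epsilon_0<r/2$ only cover an $\epsilon_0$-neighbourhood of $\Xi$ inside $\Omega_{\wt{\calL}}$, not all of $\Omega_{\wt{\calL}}$, so compactness alone does not give you a finite cover of the full hull by such boxes. What one actually uses is $R$-relative density: every $\omega\in\Omega_{\wt{\calL}}$ has a point within distance $R$ of the origin, so $\Omega_{\wt{\calL}}=\bigcup_{|b|\le R}T_b(\Xi)$, and $\R^d$-invariance of $\mu$ then reduces the problem to a tube around $\Xi$ where your flow-box picture applies. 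Equivalently, one works with cross-sections $\Xi-b$ for finitely many $b$'s and checks they all induce the same transverse measure. Once this is said, your Fubini/bisection bookkeeping on overlaps goes through as written.
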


Every topological dynamical system admits faithful, normalized, invariant and ergodic measures. From  now on, 
we fix such a measure on $\Omega_{\wt{\calL}}$, which in turn gives a measure $\bP$ on $\Xi$. Using $\bP$,  
we can define the dual faithful trace
$$
   \calT(f) = \int_{\Xi} f(\calL,0)\,\mathrm{d}\bP(\calL), 
   \qquad f\in \calA.
$$
The following statement gives physical meaning to the trace we defined above.
\begin{prop} \label{prop:ergodic_trace_is_vol_trace}
For all $f \in \calA$ and $\bP$-almost all $\calL\in\Xi$,
$$
   \calT(f) = \Tr_\mathrm{Vol}( \pi_\calL(f)),
$$
where $\Tr_\mathrm{Vol}$ is the trace per unit volume in $\ell^2(\calL)$.
\end{prop}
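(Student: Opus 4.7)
The plan is to evaluate the diagonal matrix elements of $\pi_\calL(f)$ explicitly, identify $\Tr_\mathrm{Vol}(\pi_\calL(f))$ with a Birkhoff-type ergodic average along the orbit of $\calL$, and then invoke a pointwise ergodic theorem for the $\R^d$-action on the hull (equivalently, for the transversal groupoid) to identify the limit with $\calT(f)$.

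First, I would specialize formula \eqref{eq:Hspace_repn} to $y=x$ to compute the diagonal matrix elements. Evaluating the Peierls factor at a degenerate triangle $\langle 0,0,-x\rangle$ gives $\Gamma_{\calL-x}\langle 0,0,-x\rangle = 0$, so that
\begin{equation*}
   \langle x \,|\, \pi_\calL(f) \,|\, x\rangle = f(\calL - x, 0)
\end{equation*}
for every $x\in\calL$. Hence for any compact $\Lambda_M\subset\R^d$ that exhausts $\R^d$ as $M\to\infty$,
\begin{equation*}
   \frac{1}{\mathrm{vol}(\Lambda_M)} \Tr\bigl(\chi_{\Lambda_M} \pi_\calL(f) \chi_{\Lambda_M}\bigr) = \frac{1}{\mathrm{vol}(\Lambda_M)} \sum_{x \in \calL \cap \Lambda_M} f(\calL - x, 0).
\end{equation*}

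Second, I would recognise the right-hand side as an ergodic sum for the transversal groupoid. Because $f\in\calA$, the function $\calL' \mapsto f(\calL',0)$ is continuous on the compact transversal $\Xi$ (one can verify this on $C_c(\calG)$ and extend by density in the smooth topology, which dominates the sup-norm on the unit transversal slice), so in particular it lies in $L^1(\Xi,\bP)$. Using the correspondence between $\R^d$-invariant measures on $\Omega_{\wt\calL}$ and groupoid-invariant measures on $\Xi$ (cited just before the statement), together with the standard identification of the tangential measure on the hull as the product of $\bP$ and Lebesgue measure on a transversal chart, one shows that the normalised point sums above coincide, up to an $R$-relatively dense boundary correction, with the continuous-hull averages $\mathrm{vol}(\Lambda_M)^{-1}\int_{\Lambda_M} F(T_a \calL)\,\mathrm{d}a$ of a suitable continuous function $F$ on $\Omega_{\wt\calL}$ whose integral against the hull measure equals $\calT(f)$.

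Third, I would apply a pointwise ergodic theorem for the $\R^d$-action $(\Omega_{\wt\calL},\R^d,T)$, using the hypothesis that $\bP$ is ergodic, to conclude that the averages converge for $\bP$-a.e.\ $\calL\in\Xi$ to the stated limit. Lindenstrauss's pointwise ergodic theorem for amenable groups with tempered F\o lner sequences applies to $\R^d$ with, for example, balls $\Lambda_M = B(0;M)$; the $(r,R)$-Delone property of each $\calL\in\Xi$ ensures the discrepancy between the point sum and the continuous integral is $O(\mathrm{vol}(\partial \Lambda_M))$, which is negligible against $\mathrm{vol}(\Lambda_M)$.

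The main obstacle is the third step: establishing the right form of the pointwise ergodic theorem on the transversal, and controlling the boundary discrepancy between the counting sum along $\calL\cap\Lambda_M$ and the corresponding continuous-hull average. The uniform discreteness and relative density of the Delone lattices are exactly what is needed to carry out this comparison, which is one reason the $(r,R)$-Delone hypothesis (rather than just $r$-uniformly discrete) is highlighted in the preceding remark.
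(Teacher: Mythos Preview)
Your proposal is correct and follows the same overall strategy as the paper: compute the diagonal matrix elements $\langle x\,|\,\pi_\calL(f)\,|\,x\rangle = f(\calL-x,0)$ and then invoke a pointwise ergodic theorem to identify the spatial average with $\calT(f)$. The paper's proof is considerably terser: it normalises by the number of lattice points $|\Lambda|$ rather than by Lebesgue volume $\mathrm{vol}(\Lambda_M)$, and simply cites Birkhoff's ergodic theorem applied directly to the transversal sums $\frac{1}{|\Lambda|}\sum_{x\in\Lambda} f(\calL-x,0)$, without passing through the continuous hull. Your route through the hull $(\Omega_{\wt\calL},\R^d,T)$ and Lindenstrauss's theorem is more explicit about which ergodic theorem is actually being used and how the transversal and hull averages are compared; the paper's invocation of ``Birkhoff'' for what is really a groupoid/multiparameter ergodic average is somewhat informal by comparison. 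Both approaches rely on the $(r,R)$-Delone property in the same way, to ensure the averages converge and boundary effects are negligible.
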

\begin{proof}
Recall the representation
$$
  (\pi_\calL(f)\psi)(x) = \sum_{y\in\calL}
   e^{-i\Gamma_{\calL-x}\langle 0,y-x,-x \rangle} f(\calL-x,y-x) \psi(y). 
$$
On a finite sublattice $\Lambda \subset\calL$, $\pi_\calL(f)$ is trace-class 
and we can compute 
$$
  \Tr_{\Lambda}(\pi_\omega(f)) = \sum_{x\in\Lambda} 
    e^{-i\Gamma_{\calL-x}\langle 0,0,-x \rangle} f(\calL-x,x-x) 
  = \sum_{x\in\Lambda} f(\calL-x,0).
$$
Then, by the $R$-relative density of $\calL$ and Birkhoff's ergodic theorem \cite{BirKhoff1931},
$$
  \Tr_\mathrm{Vol}( \pi_\calL(f)) = \lim_{\Lambda \to \calL} 
  \frac{1}{|\Lambda|}\Tr_\Lambda(\pi_\calL(f)) 
    = \lim_{\Lambda \to \calL}\frac{1}{|\Lambda|} \sum_{x\in\Lambda} f(\calL-x,0) 
   = \int_\Xi f(\calL,0)\,\mathrm{d}\bP(\calL),
$$
which gives the result.
\end{proof}

\begin{remark} The Hall conductance \eqref{Eq-HallCond} can be expressed now without 
involving any Hilbert spaces but only the algebra of physical observables and its differential 
calculus. Indeed, let $h\in \calA$ be the element which generates the Hamiltonians 
associated to pattern $\calL$ and let $p_F$ be its Fermi projection (assuming 
$E_F$ in a spectral gap). Then (up to a physical constant)
\begin{equation}\label{eq-hallcond2}
\sigma_H= \calT(p_F [\partial_1 p_F,\partial_2 p_F]),
\end{equation} 
which is a direct translation of \eqref{Eq-HallCond}.$\, \Diamond$
\end{remark}

\vspace{0.2cm}

The trace $\calT$ gives us the GNS Hilbert space $L^2(\calG,\calT)$, which 
is the completion of $C_c(\calG)$ under the inner-product 
$\langle f_1,f_2 \rangle = \calT(f_1^*\ast f_2)$. The space $L^2(\calG,\calT)$ has the 
canonical representation of $C^*_r(\calG,\sigma)$ given by the extension of 
left-multiplication. Namely, we take the $C^*$-completion of the following action
$$
  \pi_{GNS}(f_1) f_2 = f_1 \ast f_2, \qquad f_1\in C_c(\calG), \quad f_2\in C_c(\calG) \subset L^2(\calG,\calT).
$$
Also of importance is the von Neumann algebra $L^\infty(\calG,\calT)$, which is the weak 
closure of the GNS representation of $C^*_r(\calG,\sigma)$ in $L^2(\calG,\calT)$. 
The von Neumann algebra $L^\infty(\calG,\calT)$ comes with the norm
\begin{equation}\label{eq-vneumannnorm}
\|f\|_{L^\infty} = \bP -\mathrm{ess.}\, \sup_{\calL\in\Xi} \|\pi_\calL(f)\|.
\end{equation}

\vspace{0.2cm}

When the Fermi level is not located inside a spectral gap, the Fermi projection is not 
even an element of $C_r^\ast(\calG,\sigma)$. In contrast, all spectral projections are elements 
of $L^\infty(\calG,\calT)$ since von Neumann algebras are invariant to the Borel functional calculus. 
When the Fermi levels are located in mobility gaps, then certain correlations become finite and 
the Fermi projections belong to a strict subalgebra of $L^\infty(\calG,\calT)$. 
We call this subalgebra the Sobolev algebra, which is defined using the 
derivations $\{\partial_j\}_{j=1}^d$ and 
the theory of non-commutative $L^p$-spaces 
associated to the von Neumann algebra $L^\infty(\calG,\calT)$ and trace $\calT$. 
The $L^p$-spaces are the Banach spaces given by the completion of $C_c(\calG)$ under the norms
$$
   \| f \|_p = \calT\big( |f|^p \big)^{1/p}, \qquad |f| = \sqrt{f^\ast \ast f}, \quad p\in [1,\infty).
$$
We can control the $L^p$-norm of products by using the H\"{o}lder's inequality, see~\cite{FK}.
\begin{equation}\label{eq-holder}
  \|f_1 \cdots f_k \|_p \leq \|f_1\|_{p_1} \cdots \|f_k\|_{p_k}, \qquad 
   \frac{1}{p_1} + \cdots + \frac{1}{p_k} = \frac{1}{p}.
\end{equation}

\begin{defn}
The Sobolev spaces $\calW_{r,p}$ are defined as the Banach spaces
obtained as the completion of $C_c(\calG)$ in the norms 
$$
  \| f\|_{r,p} = \sum_{|\alpha| \leq r} \calT\Big( |\partial^\alpha f|^p \Big)^{1/p}, \qquad r\in \N, \,\, p\in [1,\infty),
$$
where we use multi-index notation, $\alpha\in \N^d$, 
$\partial^\alpha = \partial_1^{\alpha_1}\partial_2^{\alpha_2}\cdots \partial_d^{\alpha_d}$ 
and $|\alpha| = \alpha_1 +\cdots +\alpha_d$.
\end{defn}

The Sobolev spaces are not closed under multiplication but taking their intersection gives an algebra structure.

\begin{defn}
The Sobolev algebra $\calA_\mathrm{Sob}$ is defined as the intersection of 
$L^\infty(\calG,\calT)$ with the Fr\'{e}chet algebra that comes from the 
completion of $C_c(\calG)$ in the topology defined by the norms 
$\|\cdot\|_{r,p}$ for $r\in\N$, $p\in \N_+$.
\end{defn}

The algebras $\calA_\mathrm{Sob}$ and $L^\infty(\calG,\calT)$ are naturally embedded 
within $\calB[L^2(\calG,\calT)]$, though we can also consider the representations of 
$L^\infty(\calG,\calT)$ on $\ell^2(\calL)$. Indeed, by \eqref{eq-vneumannnorm}, 
$f \in L^\infty(\calG,\calT)$ if and only if $\pi_\calL(f)$ defined in \eqref{eq:Hspace_repn}  
lands in $\calB[\ell^2(\calL)]$ for all $\calL \in \Xi$ less a set of zero $\bP$-measure. 
Note that this zero measure set changes from one element to another but, nevertheless, 
if $f,g \in L^\infty(\calG,\calT)$, then there is a subset of measure one in $\Xi$ where 
$\pi_\calL(f)$, $\pi_\calL(g)$ and $\pi_\calL(f\ast g)$ all land in $\calB[\ell^2(\calL)]$ 
and, on that subset, $\pi_\calL(f \ast g) = \pi_\calL(f)\, \pi_\calL(g)$. 
Furthermore, if this applies to $\calL$, then it applies to the entire orbit of $\calL$ in $\Xi$. 
In this sense, and only in this sense, we can consider the family $\{\wt{\pi}_\calL\}_{\calL\in\Xi}$ of representations of
the von Neumann algebra. We also note that 
by the canonical extension of $\calT$ to $L^\infty(\calG,\calT)$, an analogous version 
of Proposition \ref{prop:ergodic_trace_is_vol_trace} holds for the representations 
$\{\wt{\pi}_\calL\}_{\calL\in\Xi}$ as the work of Birkhoff \cite{BirKhoff1931} applies to measurable functions too.

\section{The Dirac spectral triples}

\subsection{The smooth version}

We use the differential structure on the smooth algebra $\calA$ to construct a Dirac-like 
operator and spectral triple, see the appendix for basic definitions and properties.
To put everything together, we use the (trivial) spin$^c$ structure on 
$\R^d$. Namely, for $\C^\nu$ with $\nu = 2^{\lfloor \frac{d}{2} \rfloor}$ 
there exist self-adjoint matrices $\{\Gamma^j\}_{j=1}^d\subset M_\nu(\C)$ such that 
$\Gamma^j\Gamma^k + \Gamma^k\Gamma^j = 2\delta_{j,k}\cdot 1_\nu$. The matrices 
$\{\Gamma^j\}_{j=1}^d$ can be constructed via tensor products of the $2\times 2$ 
Pauli matrices (see~\cite[Appendix A]{GSB16} for example). When 
$d$ is even, $\C^\nu$ is a graded vector space with grading operator  
$\Gamma_0 = (-i)^{d/2}\Gamma^1\cdots\Gamma^d$. 
Using this irreducible 
Clifford representation, we can construct the unbounded operator 
$X = \sum_{j=1}^d X_j\hat\otimes \Gamma^j$ on $\ell^2(\calL)\hat\otimes \C^\nu$. 
We can also diagonally extend the representations of our various algebras to representations on 
$\ell^2(\calL) \hat\otimes \C^\nu$.

\begin{prop} \label{prop:smooth_spectrip}
For any $\calL\in\Xi$, the triple
$$
  \lambda_{d}^\calL = 
  \bigg( \calA, \, {}_{\pi_\calL}\ell^2(\calL) \hat\otimes \C^\nu, \, 
    X= \sum_{j=1}^d X_j \hat\otimes \Gamma^j \bigg)
$$
is a $QC^\infty$ and $d$-summable spectral triple.
\end{prop}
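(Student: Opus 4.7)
The plan is to verify in turn the four defining properties: essential self-adjointness of $X$, boundedness of $[X,\pi_\calL(a)]$ for $a\in\calA$, $d$-summability, and $QC^\infty$ regularity. The first three will reduce quickly to the Clifford calculus combined with the $r$-uniform discreteness and $R$-relative density of $\calL$; the $QC^\infty$ step is the main technical obstacle.

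First, the Clifford relations on the $\Gamma^j$ and $[X_j,X_k]=0$ give $X^2=\bigl(\sum_j X_j^2\bigr)\hat\otimes 1_\nu$, acting as multiplication by $|x|^2$. Hence $X$ is essentially self-adjoint on the algebraic span of the finitely supported vectors $\{\delta_x\hat\otimes v\}$, and $|X|$ is multiplication by $|x|\cdot 1_\nu$. The derivation identity \eqref{eq:conv_derivation} then yields
\[
   [X,\pi_\calL(a)] \;=\; \sum_{j=1}^d \pi_\calL(\partial_j a)\hat\otimes\Gamma^j,
\]
which is bounded because $\calA$ is closed under every $\partial_j$. For $d$-summability, the operator $(1+X^2)^{-1/2}$ is diagonal with eigenvalues $(1+|x|^2)^{-1/2}$, so it suffices to control the counting function $N(R)=\#\{x\in\calL:|x|\le R\}$. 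An $r/2$-ball packing argument using $r$-uniform discreteness gives $N(R)\le c_1 R^d$, while a covering argument using $R$-relative density gives $N(R)\ge c_0 R^d$ for large $R$. Consequently the ordered singular values satisfy $\mu_n\bigl((1+X^2)^{-1/2}\bigr)\asymp n^{-1/d}$, placing the operator in the Dixmier ideal $\mathcal{L}^{d,\infty}$ and giving both compact resolvent and $d$-summability.

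For $QC^\infty$ I need to show that $\pi_\calL(a)$ and $[X,\pi_\calL(a)]$ lie in $\bigcap_{n\ge 0}\Dom(\delta^n)$ with $\delta(T)=[|X|,T]$. The matrix element of $\delta^n(\pi_\calL(a))$ at $(x,y)$ equals $(|x|-|y|)^n(\pi_\calL(a))_{x,y}$; using $\bigl||x|-|y|\bigr|\le|x-y|$ together with the envelope $|x-y|^n\le C_n\bigl(1+|x-y|^{2n}\bigr)$ and the expansion of $|x-y|^{2n}$ as a polynomial in the $(y_j-x_j)$, one reduces the task to bounding operators whose kernels are of the form $(y-x)^\alpha(\pi_\calL(a))_{x,y}$. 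Since the Peierls phase in \eqref{eq:Hspace_repn} is common to both kernels, this is exactly the matrix element of $\pi_\calL(\partial^\alpha a)$, which is bounded in the $C^*$-norm because $\partial^\alpha a\in\calA$. A Schur-type domination between the absolute kernels then controls $\|\delta^n(\pi_\calL(a))\|$, and applying the same argument to $\partial_j a$ in place of $a$ handles $[X,\pi_\calL(a)]$.

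The main obstacle is precisely this last step. Because $|x-y|$ is not polynomial in the coordinate differences, $[|X|,\cdot]$ is not a derivation on $\calA$ and one cannot directly identify $\delta^n(\pi_\calL(a))$ with the image of some smooth element. The technical effort lies in manufacturing the polynomial envelope and a kernel-domination inequality that converts pointwise bounds on matrix elements into operator-norm bounds, thereby reducing $QC^\infty$ regularity for $|X|$ to the smoothness of $\calA$ under the individual partial derivations $\partial_j$ that was built into the definition of $\calA$.
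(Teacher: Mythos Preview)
Your treatment of self-adjointness, the commutator bound $[X,\pi_\calL(a)]=\sum_j\pi_\calL(\partial_ja)\hat\otimes\Gamma^j$, and compactness plus $d$-summability of $(1+X^2)^{-1/2}$ matches the paper's route; your counting-function argument for the singular-value asymptotics is a bit more explicit than the paper's direct computation of $\Tr((1+X^2)^{-s/2})$, but amounts to the same Delone estimate. For $QC^\infty$ the paper gives only a one-line assertion (``since $\calA$ is invariant to derivations of any order''), so you are already going beyond it in flagging this as the nontrivial step and sketching a kernel argument.

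That said, your Schur-domination step has a real gap. Entrywise domination $|K_{x,y}|\le|L_{x,y}|$ with $L$ bounded does \emph{not} imply $\|K\|\le\|L\|$: take $K$ the $N\times N$ all-ones matrix and $L$ a random sign matrix, where $\|K\|=N$ but $\|L\|=O(\sqrt N)$. The valid principle requires a non-negative majorant kernel that itself defines a bounded operator; here the natural candidate has kernel $(1+|x-y|^{2n})\,|a(\calL-x,y-x)|$, and for general $a\in\calA$ it is not clear that this pointwise-absolute-value operator is bounded, since the Fr\'echet seminorms $\|\partial^\alpha a\|$ are $C^*$-norms and do not control Schur-type row and column sums of $|a|$. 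Two clean repairs: either establish an inequality $\|\delta^n(\pi_\calL(b))\|\le C_n\max_{|\alpha|\le N_n}\|\partial^\alpha b\|$ on $C_c(\calG)$ (where finite propagation makes this easy) and extend by density in the Fr\'echet topology, or bypass kernels entirely and invoke the standard equivalence of $\bigcap_k\Dom(\delta^k)$ with smoothness under $T\mapsto(1+X^2)^{-1/2}[X^2,T]$, which via $[X^2,T]=\sum_j\big(X_j[X_j,T]+[X_j,T]X_j\big)$ reduces directly to iterated $\partial_j$-derivatives and hence to precisely the invariance of $\calA$ that the paper's one-liner invokes.
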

\begin{proof} 
We first verify the defining properties stated in Definition \ref{defn-spectriple}. 
The strong regularity of elements in $f\in\calA$ in the spacial coordinate 
ensures that when we take the representation $\pi_\calL(f)\Dom(X) \subset \Dom(X)$.
Recall also Equation \eqref{eq:conv_derivation} which gives
\begin{equation}\label{eq-commutator1}
  [X, \pi_\calL(f)] = \sum_{j=1}^d [X_j,\pi_\calL(f)]\hat\otimes \Gamma^j = \sum_{j=1}^d \pi_\calL(\partial_j f)\hat\otimes \Gamma^j.
\end{equation}
Since $\calA\subset C_r^\ast(\calG,\sigma)$ is invariant under derivations,  the result is indeed a bounded operator for any $f\in\calA$. 

\vspace{0.2cm}

Next we note that 
$(1+X^2) = (1+|X|^2) \hat\otimes 1_{\C^{\nu}}$.
In the canonical basis $\{e_y\}_{y\in\calL}$ of $\ell^2(\calL)$, we have that 
$(1+ X^2)e_y\otimes \xi = (1+|y|^2)e_y \otimes \xi$ for all $\xi \in \C^\nu$. Therefore we can decompose
\begin{equation}\label{eq-rezdec}
  (1+ X^2)^{-1/2} = \sum_{y\in\calL} (1+|y|^2)^{-1/2} |e_y\rangle \langle e_y | \otimes 1_\nu.
\end{equation}
Because $\calL$ is $R$-relatively dense, 
we can write $(1+X^2)^{-1/2}$ as a norm-convergent sum of finite-rank operators. Therefore it is compact. 

\vspace{0.2cm}

The spectral triple is $QC^\infty$ (see Definition~\ref{defn-qccondition}), since $\calA$ is invariant to derivations of any order. 
Lastly, we verify summability (see Definition~\ref{defn-summability}). For this, we compute 
\begin{align*}
   \Tr((1+X^2)^{-s/2}) =  \sum_{y\in\calL} (1+|y|^2)^{-s/2} (\Tr_{\ell^2(\calL)} \otimes \Tr_{\C^\nu})( |e_y\rangle \langle e_y | \otimes 1_\nu ) 
     = \nu \sum_{y\in\calL} (1+|y|^2)^{-s/2},
\end{align*}
which is finite for $s>d$ and any Delone set $\calL\subset \R^d$.\footnote{We note that the compactness 
and summability of  $(1+X^2)^{-1/2}$ 
may fail if $\calL$ is only $r$-uniformly discrete and not $(r,R)$-Delone.}
\end{proof}

\begin{prop} \label{prop:orbit_equiv_spec_trip}
Suppose that $\calL,\calL'\in\Xi$ are such that $\calL'=\calL-a$, 
then the spectral triples $\lambda_d^{\calL}$ and $\lambda_d^{\calL'}$ 
define the same class in the $K$-homology of $C_r^*(\calG,\sigma)$.
\end{prop}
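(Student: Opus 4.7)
The key ingredient is the magnetic shift unitary from Proposition \ref{prop:repn_covariance}. Specifically, I would first take $T_a: \ell^2(\calL)\to\ell^2(\calL-a)$ from that proposition and extend it diagonally to a unitary $U = T_a\hat\otimes 1_\nu : \ell^2(\calL)\hat\otimes\C^\nu \to \ell^2(\calL-a)\hat\otimes \C^\nu$. By construction, $U\,\pi_\calL(f)\,U^* = \pi_{\calL-a}(f)$ for all $f\in \calA$, so $U$ realises a unitary equivalence of representations of $\calA$.

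The next step is to transport the Dirac operator. Writing $X^\calL_j$ for the $j$-th position operator on $\ell^2(\calL)$, a direct computation on the canonical basis $\{e_y\}_{y\in\calL}$ shows that $T_a X^\calL_j T_a^* = X^{\calL-a}_j + a_j$ on $\ell^2(\calL-a)$, since $T_a$ maps $e_y$ to a unimodular multiple of $e_{y-a}$. Tensoring with the Clifford matrices yields
\begin{equation*}
   U\,X^\calL\,U^* \;=\; \sum_{j=1}^d (X^{\calL-a}_j + a_j)\hat\otimes \Gamma^j \;=\; X^{\calL-a} + B_a,
   \qquad B_a := \sum_{j=1}^d a_j\,1\hat\otimes\Gamma^j,
\end{equation*}
where $B_a$ is a bounded self-adjoint operator. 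Consequently, the spectral triple $\lambda_d^\calL$ is unitarily equivalent (via $U$) to the spectral triple $(\calA, {}_{\pi_{\calL-a}}\ell^2(\calL-a)\hat\otimes\C^\nu, X^{\calL-a}+B_a)$, which defines the same class in $K$-homology as $\lambda_d^\calL$.

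To finish, I would argue that adding the bounded self-adjoint operator $B_a$ to the Dirac operator does not change the $K$-homology class. The standard way is to consider the straight-line homotopy $X^{\calL-a} + t B_a$ for $t\in[0,1]$: each member is self-adjoint on the same domain, has compact resolvent (since $B_a$ is bounded and $(1+(X^{\calL-a})^2)^{-1/2}$ is already compact by the decomposition in Proposition \ref{prop:smooth_spectrip}), and $\pi_{\calL-a}(\calA)$-commutators remain bounded because $B_a$ commutes with the image of $\pi_{\calL-a}$ up to the identity. This gives a continuous path of Kasparov modules, so the endpoint at $t=0$, which is $\lambda_d^{\calL-a}=\lambda_d^{\calL'}$, represents the same $K$-homology class as $\lambda_d^\calL$.

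The main obstacle, and the only non-routine point, is the computation $T_a X_j^\calL T_a^* = X_j^{\calL-a}+a_j$, which requires using the explicit magnetic phase $e^{-i\Gamma_{\calL-x-a}\langle 0,-x-a,x\rangle}$ defining $T_a$ and checking that the phases cancel when sandwiched around the diagonal position operator (they manifestly do since $X^\calL_j$ is diagonal and the phase is unimodular). Once this identity is in hand, both the unitary equivalence and the bounded-perturbation argument are standard moves in unbounded $KK$-theory.
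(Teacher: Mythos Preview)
Your argument is correct and follows essentially the same route as the paper: conjugate by the magnetic shift $T_a$ (tensored with $1_\nu$) to obtain a unitarily equivalent triple with Dirac operator $X^{\calL-a}+\sum_j a_j\hat\otimes\Gamma^j$, then run the straight-line operator homotopy to remove the bounded perturbation. The only cosmetic difference is that the paper parametrises the homotopy as $\sum_j (X_j+(1-t)a_j)\hat\otimes\Gamma^j$ and invokes the bounded transform at the end, while you write $X^{\calL-a}+tB_a$ and spell out the compact-resolvent and bounded-commutator checks; note also that $B_a$ actually commutes \emph{exactly} with the diagonally extended representation $\pi_{\calL-a}(f)\hat\otimes 1_\nu$, so your phrase ``up to the identity'' is unnecessary.
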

\begin{proof}
Recall Proposition \ref{prop:repn_covariance}, which defined a unitary operator 
$T_a: \ell^2(\calL) \to \ell^2(\calL-a)$ such that 
$T_a \pi_\calL(f) T_a^* = \pi_{\calL-a}$. Applying this unitary map to our 
spectral triple, we induce a shift in the unbounded operator 
$T_a X_j T_a^* = X_j +a_j$. Therefore, the isomorphism $T_a$ gives the 
unitarily equivalent spectral triple
$$
 \bigg( \calA, \, {}_{\pi_{\calL-a}}\ell^2(\calL-a)\hat\otimes \C^\nu, \, 
   \sum_{j=1}^d (X_j + a_j)\hat\otimes \Gamma^j \bigg).
$$ 
We can then take an operator homotopy 
$X_t = \sum_{j=1}^d (X_j + (1-t)a_j)\hat\otimes \Gamma^j$ for $t\in[0,1]$. 
This homotopy then directly connects us to $\lambda^{\calL-a}_d$. 
Taking the bounded transform, the $K$-homology classes of equivalent 
spectral triples will  coincide.
\end{proof}

\begin{cor} 
$\bP$-almost surely, all spectral triples $\lambda^\calL_d$ define the same $K$-homology class. 
As such, the index pairings (Definition \ref{defn-indexpairing}) return $\bP$-almost surely the same integer number.
\end{cor}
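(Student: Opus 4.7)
The plan is to leverage the ergodicity of $\bP$ together with Proposition \ref{prop:orbit_equiv_spec_trip}. First, I would note that $K^*(C_r^*(\calG,\sigma))$ is a countable abelian group, so the map $\kappa\colon \Xi \to K^*(C_r^*(\calG,\sigma))$ defined by $\kappa(\calL) = [\lambda_d^\calL]$ takes values in a countable, hence standard Borel, set. Proposition \ref{prop:orbit_equiv_spec_trip} says exactly that $\kappa$ is constant on each groupoid orbit in $\Xi$, since two lattices $\calL$ and $\calL' = \calL - a$ lie in the same orbit precisely when the covariance unitaries $T_a$ of Proposition \ref{prop:repn_covariance} exist.

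Next, I would invoke the correspondence between $\R^d$-invariant measures on the hull $\Omega_{\wt{\calL}}$ and groupoid-invariant measures on the transversal $\Xi$ that was used earlier to build $\bP$. Since the chosen measure on $\Omega_{\wt{\calL}}$ is ergodic under translations, the corresponding measure $\bP$ on $\Xi$ is ergodic with respect to the groupoid action, meaning any measurable orbit-saturated subset of $\Xi$ has $\bP$-measure $0$ or $1$. The preimages $\kappa^{-1}(c)$ for $c \in K^*(C_r^*(\calG,\sigma))$ form a countable partition of $\Xi$ into orbit-saturated sets, so assuming measurability, exactly one of them carries full $\bP$-measure, giving the first assertion.

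The main obstacle is the measurability of $\kappa$ as a function into the discrete countable set $K^*(C_r^*(\calG,\sigma))$. A priori, the $K$-homology class is defined through the bounded transform of $X$ up to homotopy, and one must verify that this class depends on $\calL$ in a Borel-measurable way. The natural route is to assemble the family $\{\lambda_d^\calL\}_{\calL\in\Xi}$ into a single object over the transversal—concretely, a Kasparov module for $C_r^*(\calG,\sigma)$ with coefficients in $C(\Xi)$ built from the fiberwise Hilbert spaces $\ell^2(\calL)\hat\otimes\C^\nu$ and the Dirac operator $X$—and then fiber the class pointwise. Continuity of the representation and of the Dirac operator in $\calL$ then forces the preimages $\kappa^{-1}(c)$ to be Borel.

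The statement about index pairings is then immediate from the almost-sure constancy of $\kappa$: the analytic index pairing of Definition \ref{defn-indexpairing} factors through $K$-homology, so any two spectral triples representing the same class in $K^*(C_r^*(\calG,\sigma))$ pair with any fixed $K$-theory class of $C_r^*(\calG,\sigma)$ to give the same integer. Hence on a set of full $\bP$-measure the integer pairing is independent of the choice of $\calL$.
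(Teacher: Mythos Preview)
Your argument follows the same essential route as the paper: orbit-invariance of the $K$-homology class (Proposition \ref{prop:orbit_equiv_spec_trip}) combined with ergodicity of $\bP$. The paper's two-line proof simply asserts that ergodicity implies ``all elements $\calL\in\Xi$ are almost surely orbit equivalent'' and then invokes the proposition; this phrasing is loose, since orbits in $\Xi$ under the groupoid action are countable and typically $\bP$-null. Your formulation---that an orbit-invariant map into a countable set is $\bP$-almost surely constant---is the rigorous mechanism that actually makes the argument go through, and your flagging of measurability of $\kappa$ as the one point requiring care is well placed. The paper does not address measurability at all; your proposed resolution via assembling the triples into a continuous family over $\Xi$ is a reasonable route. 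Note, though, that for the second assertion of the corollary (constancy of the index pairings) one can sidestep the $K$-homology class map entirely: for each fixed $[p]$ or $[u]$, the integer-valued function $\calL \mapsto \Index\big(\pi_\calL(p)(F_X\otimes 1_N)_+\pi_\calL(p)\big)$ (resp.\ the odd pairing) is orbit-invariant by the compact-perturbation argument already used in the paper, and its measurability is much easier to see directly than that of $\kappa$, since the Fredholm index is locally constant and the family of operators varies continuously in $\calL$.
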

\begin{proof}
Our working hypothesis that $\bP$ is ergodic implies that all elements $\calL\in\Xi$ are 
almost surely orbit equivalent. The result then follows from 
Proposition \ref{prop:orbit_equiv_spec_trip}.
\end{proof}

\subsection{The Sobolev version}

We can also construct a spectral triple from the much larger Sobolev algebra. This spectral triple 
 will retain finite summability 
and enough regularity so that we can extend the index pairing.

\begin{prop} \label{prop:sobolev_spectrip}
The family 
$$
 \wt \lambda_{d}^\calL = 
       \bigg( \calA_\mathrm{Sob}, \, {}_{\wt{\pi}_\calL}\ell^2(\calL) \hat\otimes \C^\nu, \, \sum_{j=1}^d X_j\hat\otimes\Gamma^j \bigg),
$$
indexed by $\calL \in (\Xi,\bP)$, is a $\bP$-almost sure family of spectral triples 
(see Definition~\ref{defn-sobspectriple}), which is $\bP$-almost surely $QC^m$ and 
$d$-summable for $m=\mathrm{max}\{2,d-2\}$. 
\end{prop}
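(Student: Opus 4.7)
The plan is to parallel the proof of Proposition~\ref{prop:smooth_spectrip}, replacing the smooth regularity of $\calA$ with the weaker Sobolev regularity of $\calA_\mathrm{Sob}$. Several parts carry over without change: the decomposition \eqref{eq-rezdec} shows that $(1+X^2)^{-1/2}$ is compact and $d$-summable for any $(r,R)$-Delone $\calL$, since this depends only on $X$ and the Hilbert space, not on the algebra. For $f\in\calA_\mathrm{Sob}\subset L^\infty(\calG,\calT)$, the operator $\wt\pi_\calL(f)$ is bounded on $\ell^2(\calL)$ for $\bP$-almost every $\calL$ by the remarks preceding the proposition, so it preserves the dense domain of $X$ on those lattices.

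The critical step is to show that the commutators $[X,\wt\pi_\calL(f)]=\sum_j \wt\pi_\calL(\partial_j f)\hat\otimes\Gamma^j$ and the iterated versions needed for $QC^m$ define a.s.\ bounded operators. A priori, for $f\in\calA_\mathrm{Sob}$ the derivatives $\partial^\alpha f$ are controlled only in $L^p(\calG,\calT)$ for all finite $p$. To promote these to bounded operators on $\ell^2(\calL)$ I would employ a non-commutative Sobolev-type embedding in the spirit of~\cite{PSBbook}, bounding $\|\wt\pi_\calL(g)\|_{\calB[\ell^2(\calL)]}$ by a finite combination of the Sobolev norms $\|g\|_{r,p}$. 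The necessary ingredients are H\"{o}lder's inequality \eqref{eq-holder}, the $d$-summability of $(1+X^2)^{-1/2}$, and Proposition~\ref{prop:ergodic_trace_is_vol_trace}, which transfers trace-norm estimates on $g$ into operator-norm estimates on $\wt\pi_\calL(g)$ via the identification of $\calT$ with the trace per volume. Since all Sobolev norms of all derivatives of $f\in\calA_\mathrm{Sob}$ are finite, the resulting operators are a.s.\ bounded and the first-level regularity condition follows.

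To iterate and handle $\delta^k=[|X|,\cdot]^k$ one must replace $X$ by $|X|=\bigl(\sum_j X_j^2\bigr)^{1/2}$, which is not a linear combination of the $X_j$. I would use an integral representation such as $|X|=\pi^{-1}\int_0^\infty\lambda^{-1/2}X^2(X^2+\lambda)^{-1}\,d\lambda$, so that each commutator with $|X|$ can be traded, modulo resolvent factors, for commutators with the individual $X_j$ which in turn produce derivatives of $f$. Combining this with the Sobolev embedding of the previous step gives a loss of roughly two extra derivatives overall, and this is precisely what dictates the threshold $m=\max\{2,d-2\}$: controlling $\delta^m$ requires $|\alpha|\leq m+2$ derivatives of $f$ in $L^p$ classes compatible with $d$-summability, whence $m+2\leq d$ in high dimensions, while a direct argument gives the fallback $m=2$ in low dimensions.

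The main obstacle is the non-commutative Sobolev embedding itself and the need to make it $\bP$-a.s.\ uniform across the family $\{\wt\pi_\calL\}_{\calL\in\Xi}$, since each representation lives on a different Hilbert space $\ell^2(\calL)$ and the defining essential supremum in \eqref{eq-vneumannnorm} discards a (lattice-dependent) null set. Once this embedding is in hand, $QC^m$ and $d$-summability reduce to routine bookkeeping, and the almost-sure character of the statement is inherited from the definition of $L^\infty(\calG,\calT)$ together with the orbit-equivalence argument used in Proposition~\ref{prop:orbit_equiv_spec_trip}.
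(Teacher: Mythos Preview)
Your approach diverges from the paper's in two substantial ways, and in both cases the paper's route is considerably simpler.

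First, for the boundedness of $[X,\wt\pi_\calL(f)]=\sum_j \wt\pi_\calL(\partial_j f)\hat\otimes\Gamma^j$, you propose a non-commutative Sobolev embedding to pass from $L^p$-control on $\partial_j f$ to an operator-norm bound on $\wt\pi_\calL(\partial_j f)$. The paper avoids this entirely: it simply uses that $\calA_\mathrm{Sob}$ is \emph{invariant under the derivations} $\partial_j$, so $\partial_j f\in\calA_\mathrm{Sob}\subset L^\infty(\calG,\calT)$ by the very definition of $\calA_\mathrm{Sob}$ as the intersection with $L^\infty$. The essential-sup characterisation \eqref{eq-vneumannnorm} then gives $\bP$-a.s.\ boundedness directly. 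Your ``ingredients'' for the embedding are also suspect: Proposition~\ref{prop:ergodic_trace_is_vol_trace} identifies $\calT$ with the trace per volume, but this transfers $L^1$-type information, not operator-norm bounds, so the direction of the estimate you describe does not obviously go through.

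Second, for $QC^m$ you propose the integral representation $|X|=\pi^{-1}\int_0^\infty \lambda^{-1/2}X^2(X^2+\lambda)^{-1}\,\mathrm{d}\lambda$ to reduce $[|X|,\cdot]$ to commutators with the $X_j$. The paper instead computes the kernel of $\delta^m(\wt\pi_\calL(a))$ explicitly: it is just $(|x|-|y|)^m$ times the kernel of $\wt\pi_\calL(a)$. The reverse triangle inequality $\big||x|-|y|\big|\leq |x-y|$ then bounds this pointwise by the kernel of $\wt\pi_\calL(|\partial|^m a)$, where $(|\partial|g)(\calL,x)=|x|\,g(\calL,x)$. The elementary estimate $\||\partial|^m a\|_{r,p}\leq C_m\|a\|_{r+m,p}$ (and likewise with one extra $\partial_j$) shows $|\partial|^m a,\ |\partial|^m\partial_j a\in\calA_\mathrm{Sob}$, which is all that is needed. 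No resolvent calculus, no integral representation, no loss-counting.

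In short: you are working much harder than necessary because you are not exploiting that $\calA_\mathrm{Sob}$ already sits inside $L^\infty$ and is stable under $\partial_j$ and $|\partial|$. The paper's argument is essentially three lines once that is observed.
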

\begin{proof}
As in Proposition \ref{prop:smooth_spectrip}, we find
$$
  [X, \wt \pi_\calL(f)] = \sum_{j=1}^d [X_j,\wt\pi_\calL(f)]\hat\otimes \Gamma^j = \sum_{j=1}^d \wt \pi_\calL(\partial_j f)\hat\otimes \Gamma^j.
$$
Since the Sobolev algebra is invariant under derivations, 
$\sum_{j=1}^d \partial_j f\hat\otimes \Gamma^j \in L^\infty(\calG,\calT) \otimes \C^\nu$ and 
from \eqref{eq-vneumannnorm} we see that the commutator is $\bP$-almost surely bounded. 
Note, however, that the zero-measure set where the commutator may be unbounded depends 
on the element $f\in \calA_{\rm Sob}$.\footnote{Indeed, this observation is what motivated our 
definition of a $\bP$-almost sure family of spectral triples.}

\vspace{0.2cm}

Because our Hilbert space and operator $X$ are the same as the smooth case, the decomposition \eqref{eq-rezdec} 
still holds. Therefore compactness of the resolvent and finite summability then carries over. 
Lastly, we recall that the family of spectral triples is $\bP$-almost surely $QC^m$ if, $\bP$-almost surely,
\begin{align*}
     &\pi_\calL(f), \pi_\calL(\partial_j f) \in \bigcap_{k\leq m} \Dom(\delta^k),  
     &&\delta(T) = [|X|,T], 
     &&(|X|\psi)(x) = |x|\psi(x)
\end{align*}
for any $j\in\{1,\ldots,d\}$ and $f \in \calA_{\rm Sob}$. Simple computations give that for $a\in\calA_\mathrm{Sob}$ considered 
as a measurable function of $\calG$,
$$
   \big(\delta^m(\pi_\calL(a))\psi\big)(x) = \sum_{y\in\calL} e^{-i\Gamma_{\calL-x}\langle 0, y-x, -x \rangle} 
   \big(|x| - |y|\big)^m a(\calL-x,y-x) \psi(y).      
$$
Using the bound $|x|-|y|\leq |x-y|$, the result will follow if we can show that 
$|\partial |^m a \in \calA_\mathrm{Sob}$ and $|\partial |^m (\partial_j a) \in \calA_\mathrm{Sob}$ 
for $a\in \calA_\mathrm{Sob}$ and $(|\partial |a)(\calL,x) = |x| a(\calL,x)$ a partial derivation. 
We then note that 
$$
  \| |\partial |^m(a)\|_{r,p} \leq C_m \|a\|_{r+m,p}, \qquad
  \| |\partial |^m(\partial_ja) \|_{r,p} \leq C_m \|a\|_{r+m+1,p}
$$
as required.
\end{proof}

For regular spectral triples such as $\lambda_{d}^\calL$ from Proposition \ref{prop:smooth_spectrip}, 
there is a well-defined $\Z$-valued pairing with $K$-theory elements. We now construct 
an analogous pairing for the almost sure family of 
spectral triples $\{\wt \lambda_{d}^\calL \big \}_{\calL \in \Xi}$ with $K_*^{\rm Alg}(\calA_{\rm Sob})$. 
Let us point out that, for separable and stabilized $C^\ast$, Banach and classes of Fr\'{e}chet algebras, the algebraic and 
topological K-theories are isomorphic \cite{Cortinas2008}, but $\calA_{\rm Sob}$ is 
not separable and most likely such relation can not be established. 
For example, it is known that the dimension function associated to the trace 
$\calT$ changes under continuous deformations of projections inside $\calA_{\rm Sob}$. 

\begin{prop}\label{prop-sobindexpairing} 
The integer pairings in Definition~\ref{defn-indexpairing} can be extended to integer 
pairings between the family $\big \{\wt \lambda_{d}^\calL \big \}_{\calL \in \Xi}$ and the 
appropriate $K_\ast^{\rm Alg}$-groups of $\calA_{\rm Sob}$.
\end{prop}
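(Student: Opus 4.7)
The strategy is to reduce the $K_\ast^{\mathrm{Alg}}$-pairing to a Fredholm index computation that survives the passage from the smooth algebra $\calA$ to the Sobolev algebra $\calA_{\mathrm{Sob}}$, at the cost of keeping track of a single $\bP$-null exceptional set that may depend on the $K$-theory class. Represent a class in $K_0^{\mathrm{Alg}}(\calA_{\mathrm{Sob}})$ by a projection $p \in M_k(\calA_{\mathrm{Sob}})$ and a class in $K_1^{\mathrm{Alg}}(\calA_{\mathrm{Sob}})$ by a unitary $u$ in the minimal unitization of $M_k(\calA_{\mathrm{Sob}})$; matrix amplification of $\wt\lambda_d^{\calL}$ is harmless, so we work with $k=1$. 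Let $F = X(1+X^2)^{-1/2}$ be the bounded transform of the Dirac operator and, in the even case, denote by $F_\pm$ its off-diagonal components with respect to $\Gamma_0$. For each $\calL$ in a subset of full $\bP$-measure, I will define the pairing as
\[
  \bigl\langle [p], [\wt\lambda_d^{\calL}] \bigr\rangle
     \;=\; \Index\bigl( \wt\pi_\calL(p)\, F_+\, \wt\pi_\calL(p) \bigr)
     \quad (d \text{ even}),
\]
and analogously as the Fredholm index of $P_+\wt\pi_\calL(u)P_+$ in the odd case, where $P_+ = \tfrac{1}{2}(1+F)$.

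The analytic heart of the argument is to show that these operators are actually Fredholm despite the finite regularity of $\calA_{\mathrm{Sob}}$. I would first use Proposition~\ref{prop:sobolev_spectrip} to ensure that $\wt\pi_\calL(p)$, $\wt\pi_\calL(\partial_j p)$, and the higher derivatives up to order $m = \max\{2,d-2\}$ are bounded operators outside a $\bP$-null set $\calN_p$. From the identity $[X_j,\wt\pi_\calL(a)] = \wt\pi_\calL(\partial_j a)$ and the integral representation $F_j = X_j(1+|X|^2)^{-1/2} = c_d \int_0^\infty X_j(1+s+|X|^2)^{-1}\, s^{-1/2}\,\mathrm{d}s$, commutators $[F_j,\wt\pi_\calL(a)]$ expand into polynomials of $\wt\pi_\calL(\partial^\alpha a)$ sandwiched by resolvents $(1+s+|X|^2)^{-1}$. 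Combining these expressions with the non-commutative H\"older inequality \eqref{eq-holder} applied in $\calA_{\mathrm{Sob}}$, together with the fact that $(1+|X|^2)^{-1/2}$ lies in the weak Schatten class $\calL^{d,\infty}$ (from the summability computation in the proof of Proposition~\ref{prop:smooth_spectrip}), one obtains $[F,\wt\pi_\calL(p)] \in \calL^{d+\epsilon}$ for every $\epsilon>0$, $\bP$-almost surely. A Calderon-type identity then shows that $\wt\pi_\calL(p)F_+\wt\pi_\calL(p)$ is Fredholm on the range of $\wt\pi_\calL(p)\hat\otimes (\C^\nu)^+$; the odd case is similar using that $[P_+,\wt\pi_\calL(u)]$ lies in the same Schatten ideal.

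Integrality is then automatic since the pairing is defined as a Fredholm index. For $\bP$-almost sure independence of $\calL$, I would copy the operator-homotopy argument of Proposition~\ref{prop:orbit_equiv_spec_trip}: the magnetic shift $T_a$ of Proposition~\ref{prop:repn_covariance} intertwines $\wt\pi_\calL$ and $\wt\pi_{\calL-a}$, and the translation $X \mapsto X+a$ is a compact perturbation of $X$ through the bounded transform, so the Fredholm index is invariant along orbits. Ergodicity of $\bP$ then promotes orbit-invariance to $\bP$-almost sure constancy of the integer. Finally, one checks that the pairing descends to $K_\ast^{\mathrm{Alg}}$ by verifying invariance under the algebraic equivalences: homotopy invariance follows because a continuous path of projections in $M_\infty(\calA_{\mathrm{Sob}})$ gives a norm-continuous path of Fredholm operators whose index is locally constant; conjugation and stabilization invariance are immediate from the definition.

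The main obstacle I anticipate is the interaction between the $\bP$-null exceptional sets and the algebraic relations that generate $K$-theory: each relation (Murray-von Neumann equivalence, addition of a trivial projection, conjugation by a unitary) potentially enlarges the null set. Controlling this requires showing that the pairing only depends on the $K$-theory class through countably many representatives, so that a single full-measure subset of $\Xi$ suffices for the entire class. This is a standard but technically delicate point which, combined with the finite-regularity Schatten estimates above, constitutes the real work; all remaining ingredients are standard consequences of the Sobolev algebra calculus and the already-established properties of $\wt\lambda_d^\calL$.
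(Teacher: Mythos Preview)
Your overall architecture matches the paper's proof closely: define the pairing as a Fredholm index of $\wt\pi_\calL(p)(F_X)_+\wt\pi_\calL(p)$ (resp.\ $\Pi\wt\pi_\calL(u)\Pi-(1-\Pi)$), verify Fredholmness $\bP$-almost surely, use the magnetic shift $T_a$ and the compactness of $F_{X+a}-F_X$ to get orbit invariance, then invoke ergodicity. Two points of divergence are worth noting.

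First, your Fredholm argument via Schatten-class estimates is more elaborate than necessary. The paper simply observes that once $[X,\wt\pi_\calL(a)]$ is $\bP$-almost surely bounded (Proposition~\ref{prop:sobolev_spectrip}) and $(1+X^2)^{-1/2}$ is compact, the commutator $[F_X,\wt\pi_\calL(a)]$ is automatically compact by the standard Baaj--Julg argument; Fredholmness then follows by exhibiting an inverse modulo compacts. Your integral-representation and H\"older route works but buys nothing extra here.

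Second, and more substantively, your treatment of the descent to $K_\ast^{\mathrm{Alg}}$ is slightly off-target. You invoke homotopy invariance through continuous paths of projections, but the paper explicitly works with \emph{algebraic} $K$-theory precisely because homotopy arguments are not available in $\calA_{\mathrm{Sob}}$ (it is not separable, and the paper remarks that the algebraic and topological $K$-theories likely differ). The relevant equivalence for $K_0^{\mathrm{Alg}}$ is conjugation by an invertible $w\in M_\infty(\calA_{\mathrm{Sob}})$, which you dismiss as ``immediate from the definition''. It is not: one must check that $\wt\pi_\calL(w^{-1})F_X\wt\pi_\calL(w)-F_X$ is $\bP$-almost surely compact, and this is exactly where the paper spends its effort. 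Your null-set concern, by contrast, is harmless: for each fixed algebraic relation one takes a finite union of null sets, and since the resulting index is a constant integer on a full-measure set, the map to $\Z$ is well-defined class by class without needing a single universal full-measure set.
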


\begin{proof}  
For $d$ even and $p\in M_N(\calA_\mathrm{Sob})$ a projection, we show that the $\bP$-almost sure index
\begin{equation}\label{eq-sobpairing1}
  \Big \langle [p], \big \{\wt \lambda_{d}^\calL \big \}_{\calL \in \Xi} \Big \rangle := 
    \Index\big( \wt \pi_\calL(p)( F_X \otimes 1_N)_+ \wt \pi_\calL(p)\big), \quad F_X=\frac{X}{\sqrt{1+X^2}},
\end{equation}
is a well defined map $K_0^{\rm Alg}(\calA_\mathrm{Sob})\to \Z$, where $( F_X \otimes 1_N)_+$ indicates the bottom-left corner 
 of the operator $F_X\otimes 1_N$ when we 
decompose in the grading $1_N \otimes \Gamma_0$. For $d$ odd and $u\in M_N(\calA_\mathrm{Sob})$ unitary, we consider 
\begin{equation}\label{eq-sobpairing2}
  \Big \langle [u], \big \{\wt \lambda_{d}^\calL \big \}_{\calL \in \Xi} \Big \rangle := 
    \Index\big( \Pi_N \wt \pi_\calL(u) \Pi_N - (1-\Pi_N) \big). \quad \Pi_N = \tfrac{1}{2}(1+F_X)\otimes 1_N,
\end{equation}
as a $\bP$-almost sure map $K_1^{\rm Alg}(\calA_\mathrm{Sob})\to \Z$.

\vspace{0.2cm}

First we observe that $\bP$-almost surely, the operators inside Index in \eqref{eq-sobpairing1} and \eqref{eq-sobpairing2} are Fredholm. 
This is a pure functional analytic which follows from the fact that $F_X$ is Fredholm 
and that $[\wt\pi_\calL(p),F_X]$ or $[\wt\pi_\calL(u),F_X]$ are $\bP$-almost surely 
bounded and compact. Then indeed, the relevant operators are invertible up to compacts. 
Furthermore, if the Fredholm index is well defined for $\calL \in \Xi$, then it is well 
defined and constant for the entire orbit of $\calL$. 
Indeed, for $\calL,\calL'\in\Xi$ with $\calL'=\calL-a$,
$$
  \Index(\wt{\pi}_\calL(p) (F_X\otimes 1_N)_+ \wt{\pi}_\calL(p)) 
   = \Index( \wt{\pi}_{\calL-a}(p) (F_{X+a}\otimes 1_N)_+ \wt{\pi}_{\calL-a}(p)),
$$
where $F_{X+a}$ is the bounded 
transform of $\sum_j (X_j+a_j)\otimes \Gamma^j$.
The bounded perturbation of $X$ by $a$ implies that the perturbation 
$F_{X+a}-F_X$ is compact. Therefore
\begin{align*}
  \Index(\wt{\pi}_\calL(p) (F_X\otimes 1_N)_+ \wt{\pi}_\calL(p)) &= \Index( \wt{\pi}_{\calL-a}(p) (F_{X}\otimes 1_N)_+ \wt{\pi}_{\calL-a}(p) + K ) \\
    &= \Index( \wt{\pi}_{\calL-a}(p) (F_{X}\otimes 1_N)_+ \wt{\pi}_{\calL-a}(p))
\end{align*}
for $K$ compact. The odd index follows the same argument.
Since the measure is ergodic with respect to the translations, the right hand sides 
of \eqref{eq-sobpairing1} and \eqref{eq-sobpairing2} are $\bP$-almost surely well defined and take value in $\Z$.

\vspace{0.2cm}

Now suppose that $[p]=[p']$ in $K_0^{\rm Alg}(\calA_\mathrm{Sob})$ and so there exists an invertible 
element $w \in M_\infty(\calA_{\rm Sob})$ such that $p' = w^{-1}p w$. Then the invariance of the index 
under conjugation by an invertible and the fact that 
$\wt \pi_\calL(w^{-1})F_X\pi_\calL(w) -F_X$ is $\bP$-almost surely 
compact ensures that the index is constant over the $K_0^{\rm Alg}$-classes. 
The additive property of the Fredholm index then ensures that our map is a 
well-defined group homomorphism $K_0^{\rm Alg}(\calA_\mathrm{Sob}) \to \Z$. 
The odd case follow from similar arguments and we obtain a group 
homomorphism $K_1^{\rm Alg}(\calA_\mathrm{Sob})\to\Z$.
\end{proof}

\section{The local index formulas}
\label{Sec-IndexFormulas}

In this section we derive local formulas for the index pairings defined in the previous section. 
The starting points for our calculations are the general local index theorems in 
non-commutative geometry~\cite{CoM,CPRS2,CPRS3}, which are re-stated 
Section \ref{subsec_localindexappendix} for the reader's convenience. 
An important step still remains to be completed if we 
want to connect the general index formulas with the physical response coefficients of a system, 
such as the Hall conductance \eqref{eq-hallcond2}. For the spectral triples 
we consider and after some algebraic manipulation, 
the index formulas reduce to the computation of a residue trace. 
 We apply this formula first to the smooth case, 
which fits into the standard setting of the local index theorems, and then show that the results can be 
pushed into the regime of a mobility gap.

\subsection{The smooth case}

In Proposition~\ref{prop:smooth_spectrip}, we have verified that, in the smooth setting,  
the conditions of the general index formulas 
(Theorem \ref{th-genericindexformula1} and \ref{th-genericindexformula2}) are met by the family 
$\{\lambda_d^\calL\}_{\calL \in \Xi}$ of spectral triples. Hence, our computations of the index pairings  
$\langle [p],[\lambda_d^\calL]\rangle$ (in even dimensions) and 
$\langle [u],[\lambda_d^\calL]\rangle$ (in odd dimensions), see Definition \ref{defn-indexpairing}, 
can proceed from \eqref{eq-genindexfeven} and \eqref{eq-genindexfodd}, respectively. 
Our main tool for evaluating these formulas is the following result, whose proof can be found in 
Section \ref{subsec:app_restrace} in the appendix.

\begin{lemma}\label{lem-reztrace} \label{lemma:d_dim_trace_per_unit_volume_related_to_residue_trace}
Let $f\in\calA_\mathrm{Sob}$. Then, $\bP$-almost sure, 
$$  
 \calT(f)=\Tr_\mathrm{Vol}(\wt{\pi}_\calL(f)) = \frac{1}{\mathrm{Vol}_{d-1}(S^{d-1})} 
  \res_{s=d}\Tr\!\left(\wt{\pi}_\calL(f)(1+|X|^2)^{-s/2}\right). 
$$
\end{lemma}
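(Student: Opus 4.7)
The plan is to compute the trace in the natural basis, reduce it to a weighted lattice zeta function, and apply a Tauberian argument. Since $(1+|X|^2)^{-s/2}$ is diagonal in the basis $\{e_x\}_{x\in\calL}$ of $\ell^2(\calL)$ with eigenvalues $(1+|x|^2)^{-s/2}$, and the diagonal matrix element of $\wt{\pi}_\calL(f)$ at $x$ is $f(\calL-x,0)$ (the magnetic phase $e^{-i\Gamma_{\calL-x}\langle 0, 0, -x \rangle}$ vanishes on the degenerate triangle), the trace collapses to
$$
\Tr\!\left(\wt{\pi}_\calL(f)(1+|X|^2)^{-s/2}\right)=\sum_{x\in\calL}f(\calL-x,0)\,(1+|x|^2)^{-s/2}.
$$
Since $f\in\calA_\mathrm{Sob}\subset L^\infty(\calG,\calT)$ gives $|f(\calL-x,0)|\leq\|f\|_{L^\infty}$ $\bP$-almost surely, and $\calL$ is $(r,R)$-Delone so $|\calL\cap B(0,R)|=O(R^d)$, the sum converges absolutely for $\Re s>d$.

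I would then introduce the weighted counting function $N_\phi(R):=\sum_{x\in\calL\cap B(0,R)}\phi(\calL-x)$ with $\phi(\calL):=f(\calL,0)$, a bounded measurable function on $\Xi$. Following exactly the argument used in the proof of Proposition~\ref{prop:ergodic_trace_is_vol_trace}, Birkhoff's ergodic theorem applied to the bounded function $\phi$ and the ergodic $\R^d$-action on $\Omega_{\wt\calL}$ yields, for $\bP$-almost every $\calL$,
$$
\lim_{R\to\infty}\frac{N_\phi(R)}{\omega_d\, R^d}=\int_\Xi\phi\,d\bP=\calT(f),
$$
where $\omega_d:=\mathrm{Vol}(B(0;1))$, so that $\mathrm{Vol}_{d-1}(S^{d-1})=d\,\omega_d$.

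The residue is then extracted by Abel summation. For $\Re s>d$, integration by parts against $dN_\phi$ gives
$$
\sum_{x\in\calL}f(\calL-x,0)(1+|x|^2)^{-s/2}=s\int_0^\infty r\,(1+r^2)^{-s/2-1}\,N_\phi(r)\,dr,
$$
the boundary term at infinity vanishing because $N_\phi(R)=O(R^d)$. Writing $N_\phi(r)=\calT(f)\,\omega_d\,r^d+E(r)$ with $E(r)=o(r^d)$, the main term becomes a beta integral (substitute $u=r^2$)
$$
\calT(f)\,\omega_d\,s\int_0^\infty r^{d+1}(1+r^2)^{-s/2-1}dr=\calT(f)\,\omega_d\,\frac{s}{2}\,\frac{\Gamma(\tfrac{d}{2}+1)\,\Gamma(\tfrac{s-d}{2})}{\Gamma(\tfrac{s}{2}+1)},
$$
which has a simple pole at $s=d$ of residue $\calT(f)\,\omega_d\,d=\calT(f)\,\mathrm{Vol}_{d-1}(S^{d-1})$.

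The main obstacle is showing the error piece contributes no residue. I would handle this with a standard $\varepsilon$-argument: given $\varepsilon>0$, pick $R_0$ so that $|E(r)|\leq\varepsilon r^d$ for $r\geq R_0$. The integral over $[0,R_0]$ is an entire function of $s$ (a finite sum times $(1+r^2)^{-s/2}$ on a compact interval), while the tail over $[R_0,\infty)$ is dominated in absolute value by $\varepsilon$ times the main-term integral, contributing a residue at $s=d$ of at most $\varepsilon\,\mathrm{Vol}_{d-1}(S^{d-1})$; letting $\varepsilon\to 0$ kills this contribution. A secondary subtlety is that Birkhoff's theorem provides pointwise convergence only outside a $\bP$-null set that depends on $\phi$, and hence on $f$; this is accommodated by the $\bP$-almost sure clause in the statement, since we argue for one fixed $f\in\calA_\mathrm{Sob}$ at a time.
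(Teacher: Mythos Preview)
Your argument is correct and takes a genuinely different route from the paper's. Both proofs begin by writing the trace as the lattice sum $G(\calL,s)=\sum_{x\in\calL}f(\calL-x,0)(1+|x|^2)^{-s/2}$. The paper then shows, via a Laplace-transform/gradient computation, that $G(\calL+a,s)-G(\calL,s)$ extends holomorphically to $\Re s>d-1$, so the residue is constant along orbits; integrating $G(\cdot,s)$ over $\Xi$ against the invariant measure and using the substitution $\calL\mapsto\calL-u$ then computes this common value as an explicit Beta integral. You instead invoke Birkhoff directly to obtain the asymptotics $N_\phi(R)\sim\calT(f)\,\omega_d R^d$, and extract the residue by Abel summation followed by a Tauberian $\varepsilon$-splitting. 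Your approach is more elementary and bypasses the orbit-difference estimate, while the paper's approach makes the orbit-constancy of the residue structurally manifest and does not need to re-invoke Proposition~\ref{prop:ergodic_trace_is_vol_trace}.

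One caveat worth flagging: your $\varepsilon$-argument establishes $\lim_{s\to d^+}(s-d)\,G(\calL,s)=\calT(f)\,\mathrm{Vol}_{d-1}(S^{d-1})$ for real $s$, but does not by itself show that $G(\calL,s)$ extends meromorphically to a complex neighbourhood of $s=d$; the bound $|E(r)|\leq\varepsilon r^d$ with no rate gives no control for $\Re s\leq d$. The paper's method does yield this extension (Equation~\eqref{eq:residue_integral} writes $G(\calL_0,s)$ as an explicit meromorphic function minus an integral of functions holomorphic for $\Re s>d-1$). If ``$\res$'' is read as the one-sided real limit---which is enough for Dixmier-trace purposes---your proof is complete; if a genuine complex residue is required by the local-index-formula machinery of Theorems~\ref{th-genericindexformula1}--\ref{th-genericindexformula2}, you should add a line justifying meromorphic continuation (e.g.\ by borrowing the paper's orbit-difference estimate, or by citing the general analytic continuation built into the resolvent cocycle).
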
 

We now state one of our main results.

\begin{thm}[Even formula]  \label{thm:complex_bulk_pairing_even}
Let $p$ be a  projection in $M_N(\calA)$ and 
suppose $d$ is even. Then the pairing of $p$ with the smooth spectral 
triple can $\bP$-almost surely be computed by the formula
\begin{align*}
  \Index\big( \pi_\calL(p)( F_X \otimes 1_N)_+ \pi_\calL(p) \big) & = C_{d}   \sum_{\rho\in S_d} 
  (-1)^\rho\, (\Tr_{\C^N}\otimes \calT) \bigg( p \prod_{j=1}^d \partial_{\rho(j)} p \bigg) \\
\nonumber  & = C_{d}   \sum_{\rho\in S_d} 
  (-1)^\rho\, (\Tr_{\C^N}\otimes \Tr_\mathrm{Vol}) \bigg( \pi_\calL(p) \prod_{j=1}^d [X_{\rho(j)},  \pi_\calL(p)] \bigg), 
\end{align*}
with $C_{2n} = \frac{(-2\pi i)^n}{n!}$, $\Tr_{\C^N}$ the matrix trace on $\C^N$ 
and $S_d$ the permutation group on $d$ letters. The formula is $\bP$-almost surely 
constant for any choice of $\calL\in\Xi$.
\end{thm}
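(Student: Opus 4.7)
The plan is to instantiate the general even local index theorem (Theorem~\ref{th-genericindexformula1}) for the smooth spectral triple $\lambda_d^\calL$ verified in Proposition~\ref{prop:smooth_spectrip}, reduce the resulting residue cocycle to a single top-degree term, trace out the Clifford generators to produce the antisymmetrization over $S_d$, and finally convert the operator-theoretic residue trace into the algebraic trace $\calT$ via Lemma~\ref{lem-reztrace}.

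First, since $\{\lambda_d^\calL\}_{\calL\in\Xi}$ is $QC^\infty$ and $d$-summable with isolated spectral dimension $d$, the residue cocycle formula of Connes--Moscovici collapses: for a projection $p\in M_N(\calA)$, only the component at the top degree $n=d$ survives (the lower-degree terms vanish on projections for parity/combinatorial reasons in the $d$-summable case, as spelled out in \cite{CPRS2,CPRS3}). This reduces the pairing to an expression of the shape
\begin{equation*}
\Index\bigl(\pi_\calL(p)(F_X\otimes 1_N)_+\pi_\calL(p)\bigr)
= c_d \, \res_{s=d}(\Tr_{\C^N}\otimes \Tr)\Bigl(\Gamma_0\, \pi_\calL(p)\bigl[X,\pi_\calL(p)\bigr]^{d}(1+X^2)^{-s/2}\Bigr),
\end{equation*}
with $c_d$ a universal constant coming from the $n!$ in the Chern character and the normalization of the residue cocycle.

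Next I would expand the commutator $[X,\pi_\calL(p)]=\sum_j \pi_\calL(\partial_j p)\hat\otimes \Gamma^j$ via \eqref{eq-commutator1} and carry out the Clifford trace. For each multi-index $(j_1,\dots,j_d)$ one gets $\Tr_{\C^\nu}(\Gamma_0 \Gamma^{j_1}\cdots\Gamma^{j_d})$, which by the standard chirality identity equals $\nu\cdot i^{d/2}\cdot (-1)^{\rho}$ when $(j_1,\dots,j_d)$ is a permutation $\rho$ of $(1,\dots,d)$ and vanishes otherwise. This produces exactly the antisymmetrized sum $\sum_{\rho\in S_d}(-1)^\rho\, p\,\prod_j \partial_{\rho(j)}p$ inside the residue, absorbed into the Hilbert-space trace over $\ell^2(\calL)$.

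At this stage, the remaining object is $\res_{s=d}\Tr\bigl(\wt\pi_\calL(f)(1+|X|^2)^{-s/2}\bigr)$ for the element $f=p\prod_j\partial_{\rho(j)}p \in M_N(\calA)$, and Lemma~\ref{lem-reztrace} converts it, $\bP$-almost surely, into $\tfrac{\mathrm{Vol}_{d-1}(S^{d-1})}{1}\cdot(\Tr_{\C^N}\otimes\calT)(f)$; equivalently, by Proposition~\ref{prop:ergodic_trace_is_vol_trace}, into the trace per volume of the representative. Combining the Clifford factor $\nu\, i^{d/2}$, the universal constant $c_d$ from the local index theorem, and the volume factor $\mathrm{Vol}_{d-1}(S^{d-1})=2\pi^{d/2}/\Gamma(d/2)$ yields the stated normalization $C_{2n}=(-2\pi i)^n/n!$. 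Almost-sure independence of the chosen $\calL\in\Xi$ was already established in the corollary following Proposition~\ref{prop:orbit_equiv_spec_trip}.

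I expect the main obstacle to be purely bookkeeping: matching the constants from (i) the normalization of the residue Chern character, (ii) the Clifford supertrace $\Tr(\Gamma_0\Gamma^{j_1}\cdots\Gamma^{j_d})$, and (iii) the surface area $\mathrm{Vol}_{d-1}(S^{d-1})$ produced by Lemma~\ref{lem-reztrace}, so that the three factors conspire into the clean expression $(-2\pi i)^n/n!$. The analytic content itself is straightforward once the reduction to the top-degree residue cocycle is justified from Proposition~\ref{prop:smooth_spectrip}.
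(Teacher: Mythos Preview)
Your approach matches the paper's: both instantiate the even local index formula, reduce to the top-degree residue term via the Clifford supertrace, expand $[X,\pi_\calL(p)]^d$ into permutations, and convert the residue trace to $\calT$ via Lemma~\ref{lem-reztrace}. The only detail you elide that the paper makes explicit is that $\mathrm{Ch}_d(p)$ carries a leading factor $(2p-1)$ rather than $p$, and one must argue separately (by a symmetry of the $\Gamma_0$-eigenspaces) that the constant $-1$ contributes a term holomorphic at $\Re(z)=d$ and hence drops out of the residue.
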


\begin{proof} 
We will omit some of the details of the proof since, with Lemma \ref{lem-reztrace} in place, 
the arguments (in the even and odd setting) are exactly the same as in~\cite{BSBWeak}.
We consider the case of $N=1$ as case of general matrices is a simple extension. From \eqref{eq-genindexfeven}, 
$$
  \Index \big ( \pi_\calL(p) (F_X)_+ \pi_\calL(p) \big ) =   \res_{r=(1-d)/2} \Big( \sum_{m=1, \mathrm{even}}^{2N} 
     \phi^r_m ( \mathrm{Ch}_m(p) ) \Big)
$$
Because our space is flat 
and the Dirac operator globally defined, algebraic manipulation of the Dirac operator and the  corresponding Clifford generators means that 
only the top degree term in the local index formula will have non-zero residue as in~\cite[Appendix]{BCPRSW}. 
Hence the formula reduces to the residue of $\phi_d^r(\mathrm{Ch}^d(p))$.
To take the contour integral in $\phi_d^r(\mathrm{Ch}^d(p))$, we move all the resolvent terms to 
the right, which can be done up to a holomorphic correction. 
We can then compute the Cauchy integral and write the result in the form
$$
   \Index \big ( \pi_\calL(p) (F_X)_+ \pi_\calL(p) \big )  =  (-1)^{d/2}  \tfrac{1}{d} \res_{z=d} \,
     (\Tr\otimes \Tr_{\C^\nu}) \big(\Gamma_0 (2{p}-1)([X,{p}])^d(1+X^2)^{-z/2} \big), 
$$
where we recall $\Gamma_0 = (-i)^{d/2}\Gamma^1\cdots\Gamma^d$. There is a 
symmetry of the eigenspaces of $\Gamma_0$ that implies that the trace of
$\Gamma_0 ([X,{p}])^d(1+X^2)^{-z/2}$ will be holomorphic at $\Re(z)=d$ and so does  
not contribute to the index pairing. Writing the power $([X,{p}])^d$ in terms of permutations,  
and applying Lemma \ref{lemma:d_dim_trace_per_unit_volume_related_to_residue_trace} 
with the extra spinor degrees of freedom gives the result. 
We again refer the reader to~\cite{BSBWeak} for the complete algebraic details. 
The index formula is almost sure constant as for the ergodic measure, the 
spectral triples $\lambda_d^\calL$ have the same index pairing. Similarly, 
Lemma \ref{lemma:d_dim_trace_per_unit_volume_related_to_residue_trace} 
ensures that the residue trace is almost surely constant in $\calL$.
\end{proof}

\begin{thm}[Odd formula]  \label{thm:complex_bulk_pairing_odd}
Let $u$ be a complex unitary in $M_N(\calA)$ and and suppose 
the dimension $d$ is odd. 
Then the pairing of $u$ with the smooth spectral triple can $\bP$-almost surely be expressed by 
the formula
\begin{align*}
 \Index\big(\Pi_N \pi_\calL(u) \Pi_N - (1-\Pi_N)\big) &= \tilde{C}_{d} \sum_{\rho\in S_d}(-1)^\rho\,   (\Tr_{\C^N}\otimes \calT) 
 \bigg( \prod_{j=1}^d u^* \, \partial_{\rho(j)}u  \bigg)  \\
 &= \tilde{C}_{d} \sum_{\rho\in S_d}(-1)^\rho\, 
  (\Tr_{\C^N}\otimes \Tr_\mathrm{Vol}) 
    \bigg(\prod_{j=1}^d \pi_\calL(u)^* [X_{\rho(j)}, \pi_\calL(u)] \bigg), 
\end{align*}
where $\tilde{C}_{2n+1} = \frac{ 2(2\pi i)^n n!}{(2n+1)!}$, $\Tr_{\C^N}$ is the matrix trace on $\C^N$ 
and $S_d$ is the permutation group on $d$ letters. The formula is $\bP$-almost surely 
constant for any choice of $\calL\in\Xi$.
\end{thm}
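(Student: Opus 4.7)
The plan is to mirror the proof of Theorem \ref{thm:complex_bulk_pairing_even} in the odd-dimensional setting, replacing the even local index formula \eqref{eq-genindexfeven} with its odd analogue \eqref{eq-genindexfodd}. Since Proposition \ref{prop:smooth_spectrip} shows that $\lambda_d^\calL$ is a $QC^\infty$, $d$-summable spectral triple, all analytic hypotheses of the general local index theorem (Theorem \ref{th-genericindexformula2}) are satisfied, so the pairing is well defined and equals the residue cocycle $\sum_{m\text{ odd}}\phi_m^r(\mathrm{Ch}_m(u))$ evaluated as a residue at $r=(1-d)/2$. I will restrict to $N=1$ as the matrix-valued extension is routine.

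First I would argue that only the top-degree term $\phi_d^r(\mathrm{Ch}_d(u))$ contributes. This is where flatness of $\R^d$ is essential: the Dirac operator $X=\sum_j X_j\hat\otimes \Gamma^j$ has constant-in-$\calL$ Clifford coefficients, and the Clifford relations $\Gamma^j\Gamma^k+\Gamma^k\Gamma^j=2\delta_{jk}$ combined with a trace-over-spinors argument (of the type used in \cite{BCPRSW}) force the traces involving lower-order Chern terms to be holomorphic at $r=(1-d)/2$, hence contribute no residue. This step is purely algebraic and follows verbatim from the odd analogue of the computation in \cite{BSBWeak}.

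Next I would simplify $\phi_d^r(\mathrm{Ch}_d(u))$ itself. The plan is to commute all factors of $(1+X^2)^{-r-j/2}$ to the right of the commutators $[X,\pi_\calL(u)]=\sum_j \pi_\calL(\partial_j u)\hat\otimes\Gamma^j$ and $[X,\pi_\calL(u)^\ast]$, each commutation producing a holomorphic correction near $r=(1-d)/2$ that drops out of the residue. After gathering the resolvents into a single $(1+X^2)^{-z/2}$, the Cauchy integral contracts to a single residue, and the odd normalisation gives the combinatorial prefactor $\tilde C_d=\tfrac{2(2\pi i)^{(d-1)/2}((d-1)/2)!}{d!}$. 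At this stage one is left with
\[
\tilde C_d\,\res_{z=d}\,(\Tr\otimes \Tr_{\C^\nu})\!\left(\pi_\calL(u)^\ast[X,\pi_\calL(u)]\bigl([X,\pi_\calL(u)^\ast][X,\pi_\calL(u)]\bigr)^{(d-1)/2}(1+X^2)^{-z/2}\right).
\]
Expanding the Clifford monomials and using the normalised trace $\tfrac{1}{\nu}\Tr_{\C^\nu}(\Gamma^{j_1}\cdots\Gamma^{j_d})=(-i)^{(d-1)/2}\varepsilon^{j_1\cdots j_d}$ converts the $\Tr_{\C^\nu}$ into an antisymmetrisation over $S_d$, yielding the permutation sum with the correct sign $(-1)^\rho$.

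Finally I would apply Lemma \ref{lemma:d_dim_trace_per_unit_volume_related_to_residue_trace} to replace $\tfrac{1}{\mathrm{Vol}_{d-1}(S^{d-1})}\res_{z=d}\Tr(\wt\pi_\calL(f)(1+|X|^2)^{-z/2})$ with $\calT(f)=\Tr_\mathrm{Vol}(\pi_\calL(f))$, absorbing the volume factor $\mathrm{Vol}_{d-1}(S^{d-1})$ into the prefactor to produce exactly $\tilde C_{2n+1}=\tfrac{2(2\pi i)^n n!}{(2n+1)!}$. Since $[X,\pi_\calL(u)]=\pi_\calla(\partial u)\hat\otimes \Gamma^\bullet$ and $\calT$ is the image of $\Tr_\mathrm{Vol}$ under $\pi_\calL$, the two stated forms of the formula are equivalent. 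The $\bP$-almost sure constancy in $\calL$ follows from Proposition \ref{prop:orbit_equiv_spec_trip} and ergodicity of $\bP$, exactly as in the even case.

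The main obstacle I anticipate is the careful bookkeeping in the Clifford/permutation reduction: ensuring that the dropped holomorphic corrections really do not contribute and that the combinatorial constant $\tilde C_d$ emerges with the correct sign after the antisymmetrisation of the spinor trace. Everything else is a direct odd-parity translation of the even proof, supported by Lemma \ref{lem-reztrace} to convert residue traces into the physically meaningful trace per unit volume.
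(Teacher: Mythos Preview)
Your approach is essentially the same as the paper's: reduce to the top-degree term via the flat-Dirac/Clifford argument, move the resolvents to the right up to holomorphic corrections, take the Cauchy integral, and then use Lemma~\ref{lemma:d_dim_trace_per_unit_volume_related_to_residue_trace} to convert the residue trace to $\calT$. The one step you leave implicit but the paper makes explicit is the identity $[X,u^*]=-u^*[X,u]u^*$, which rewrites the alternating product $u^*[X,u][X,u^*]\cdots[X,u]$ as $(-1)^n\bigl(u^*[X,u]\bigr)^d$ \emph{before} expanding in Clifford generators; without this conversion the antisymmetrised spinor trace would produce terms of the form $u^*(\partial_{\rho(1)}u)(\partial_{\rho(2)}u^*)\cdots(\partial_{\rho(d)}u)$ rather than the desired $\prod_j u^*\partial_{\rho(j)}u$. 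This is precisely the ``careful bookkeeping'' you flag, and once you insert this identity the remainder of your argument goes through exactly as you outline.
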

\begin{proof}
As in the even case, only the top term contributes to the index pairing and so 
$$
   \Index\big(\Pi_q u \Pi_q - (1-\Pi_q)\big) =  \frac{-1}{\sqrt{2\pi i}} \res_{r=(1-d)/2} \phi_d^r(\mathrm{Ch}^d(u)).
$$
As before, we take the Cauchy integral and after some rearranging  
\begin{align*}
     &\Index\big(\Pi_q u \Pi_q - (1-\Pi_q)\big) \\
     &\hspace{2cm}  = (-1)^{n+1}\  \frac{n! \Gamma(k/2)}{d!\sqrt{\pi}}  \res_{z=d} \,
   (\Tr \otimes \Tr_{\C^\nu}) \big(u^*[X,u][X,u^*]\cdots[X,u](1+X^2)^{-z/2}\big).
\end{align*}
for $d=2n+1$ and with a product of $d$ commutators in the trace on the right-hand side. 
We use the identity $[X,u^*]=-u^*[X,u]u^*$, which implies 
$$
  u^*[X,u][X,u^*]\cdots[X,u] = (-1)^n \big( u^*[X,u]\big)^d.
$$
We express this power using permutations and compute the spinor and residue trace, where 
Lemma \ref{lemma:d_dim_trace_per_unit_volume_related_to_residue_trace} then 
gives the result.
\end{proof}

\subsection{The mobility gap regime} \label{subsec:mobility_gap_index}

The definition below is the operator theoretic formulation of a mobility gap, which is usually 
done using representations on Hilbert spaces. The latter will be difficult in the present general 
context because the Hilbert spaces of the representations change from one configuration to another.

\begin{defn}[Mobility gap]
Let $h \in M_N(C^\ast(\calG,\sigma))$ be self-adjoint. 
We call an interval $\Delta \subseteq {\rm Spec}(h)$ a mobility 
gap of $h$ if we have a \emph{continuous} morphism:
\begin{equation}\label{eq-mobgap}
L^\infty(\Delta \subset \R) \ni \varphi \mapsto 
   \varphi(h) \in M_N(\calA_\mathrm{Sob}).
\end{equation}
\end{defn}

\begin{remark} According to the above definition, the Fermi projector $p_F = \chi_{(-\infty,E_F]}$ 
of an electronic system does belong to the Sobolev algebra if $E_F$ resides in a mobility gap. 
This automatically implies that Anderson's localization length is finite and, furthermore, that all 
linear and non-linear direct transport coefficients vanish as the temperature goes to zero \cite{PB2016,ProdanSpringer2017}. 
Hence, if we are talking about electronic systems, \eqref{eq-mobgap} ensures that the systems are insulators.
We note that
 a proof of \eqref{eq-mobgap} for disordered crystals can be found in \cite{ProdanSpringer2017}. 
 It relies on the Aizenman-Molchanov bound \cite{AM93}.$\, \Diamond$
\end{remark}

\begin{lemma}[\cite{CPRS1}, Theorem 10] \label{lemma:sobolev_cocycle}
For $\bP$-almost all $\calL\in\Xi$, the multilinear functional
\begin{align} \label{eq:Sobolev_Hochschild}
   \phi(a_0,\ldots,a_d) &= \res_{z=d} \, (\Tr_{\C^\nu} \otimes \Tr) \big( \Gamma_0\, \wt{\pi}_\calL(a_0) [X,\wt{\pi}_\calL(a_1)]\cdots [X,\wt{\pi}_\calL(a_d)](1+X^2)^{-z/2} \big) 
\end{align}
is well-defined and continuous in the topology of $\calA_\mathrm{Sob}$ (where $\Gamma_0=1_\nu$ if $d$ is odd).
\end{lemma}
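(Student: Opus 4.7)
The plan is to unwind the spectral data in the trace using the commutator identity $[X,\wt\pi_\calL(a)]=\sum_{k=1}^d \wt\pi_\calL(\partial_k a)\hat\otimes\Gamma^k$ (as established in the proof of Proposition~\ref{prop:sobolev_spectrip}), thereby reducing the residue trace in \eqref{eq:Sobolev_Hochschild} to a finite linear combination of residue traces of the form $\res_{z=d}\Tr(\wt\pi_\calL(f_{k_1,\ldots,k_d})(1+|X|^2)^{-z/2})$ multiplied by constants of the form $\Tr_{\C^\nu}(\Gamma_0\Gamma^{k_1}\cdots\Gamma^{k_d})$. Here $f_{k_1,\ldots,k_d}:=a_0\ast\partial_{k_1}a_1\ast\cdots\ast\partial_{k_d}a_d$, which lies in $\calA_\mathrm{Sob}$ because the Sobolev algebra is closed under both the twisted convolution product and the derivations $\partial_j$ by construction.

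Next I would invoke Lemma~\ref{lem-reztrace} to evaluate each of these residues $\bP$-almost surely as $\mathrm{Vol}_{d-1}(S^{d-1})\cdot \calT(f_{k_1,\ldots,k_d})$. The point is that Lemma~\ref{lem-reztrace} is a statement about arbitrary elements of $\calA_\mathrm{Sob}$, and since the commutator expansion produces only finitely many terms, one may take a common full-measure subset of $\Xi$ on which all the residues exist and agree with the volume trace. This step converts the spectral expression \eqref{eq:Sobolev_Hochschild} into a purely algebraic one,
\begin{equation*}
 \phi(a_0,\ldots,a_d)=\mathrm{Vol}_{d-1}(S^{d-1})\sum_{k_1,\ldots,k_d=1}^d \Tr_{\C^\nu}\!\big(\Gamma_0\Gamma^{k_1}\cdots\Gamma^{k_d}\big)\,\calT\!\big(a_0\ast\partial_{k_1}a_1\ast\cdots\ast\partial_{k_d}a_d\big),
\end{equation*}
which makes no reference to $\calL$ and is therefore constant in $\calL$ on the almost-sure set.

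Continuity in the Sobolev topology then becomes the standard consequence of H\"older's inequality \eqref{eq-holder} for the non-commutative $L^p$-spaces associated to $(L^\infty(\calG,\calT),\calT)$. Specifically, choosing exponents $p_0=\cdots=p_d=d+1$ we obtain
\begin{equation*}
 \big|\calT(a_0\ast\partial_{k_1}a_1\ast\cdots\ast\partial_{k_d}a_d)\big|\le \|a_0\|_{d+1}\,\prod_{j=1}^d\|\partial_{k_j}a_j\|_{d+1}\le C\,\|a_0\|_{0,d+1}\prod_{j=1}^d\|a_j\|_{1,d+1},
\end{equation*}
and each factor is dominated by one of the defining seminorms of $\calA_\mathrm{Sob}$. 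Summing over the finitely many multi-indices gives joint continuity of $\phi$ as a multilinear functional on $\calA_\mathrm{Sob}^{\,d+1}$.

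The delicate point, which I expect to be the main obstacle, is patching the $\bP$-almost sure statements. The representation $\wt\pi_\calL$ and the identification $\wt\pi_\calL(f\ast g)=\wt\pi_\calL(f)\wt\pi_\calL(g)$ hold only off a zero-measure set that may depend on the elements involved. To obtain a single full-measure set on which $\phi$ is well-defined on all of $\calA_\mathrm{Sob}^{\,d+1}$, I would first fix a countable $\|\cdot\|_{\calA_\mathrm{Sob}}$-dense subset $\mathcal{D}\subset\calA_\mathrm{Sob}$, take the countable intersection of the relevant full-measure sets over all $(d+1)$-tuples drawn from $\mathcal{D}$ (still of full measure), verify the identity above on this set for tuples in $\mathcal{D}$, and then extend by the continuity bound already established to all of $\calA_\mathrm{Sob}^{\,d+1}$. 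This yields a single set of full $\bP$-measure on which $\phi$ is a well-defined continuous multilinear functional, matching the statement attributed to~\cite{CPRS1}.
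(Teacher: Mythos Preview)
Your approach is essentially what the paper does. The paper does not give an independent proof of this lemma; it cites \cite[Theorem 10]{CPRS1} and then, in the paragraph immediately following the statement, carries out exactly your computation: expand the commutators via $[X,\wt\pi_\calL(a)]=\sum_k\wt\pi_\calL(\partial_k a)\hat\otimes\Gamma^k$, apply Lemma~\ref{lem-reztrace} to each term, and arrive at the cyclic-cocycle expression $C_d\sum_{\rho}(-1)^\rho\,\calT(a_0\,\partial_{\rho(1)}a_1\cdots\partial_{\rho(d)}a_d)$, whose continuity then follows from the H\"older inequality \eqref{eq-holder}. So the core of your argument matches the paper's treatment.

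One caution on your density step. The paper explicitly records (just before Proposition~\ref{prop-sobindexpairing}) that $\calA_{\rm Sob}$ is \emph{not} separable, so the existence of a countable dense $\mathcal D\subset\calA_{\rm Sob}$ is not automatic. The non-separability presumably refers to the $L^\infty$-norm; in the Fr\'echet topology generated by the Sobolev seminorms $\|\cdot\|_{r,p}$---the topology in which continuity is claimed and in which your H\"older estimate lives---$C_c(\calG)$ is dense by construction and is itself separable since $\calG$ is second-countable. Even granting this, extending the identity $\phi_{\mathrm{res}}=\phi_{\mathrm{alg}}$ from $\mathcal D^{d+1}$ to all tuples by continuity presupposes continuity of the residue side, which is what you are proving, so the argument as written is circular. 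In practice this does not matter for the application: Theorem~\ref{prop:mobilityindex_even} only evaluates $\phi$ at specific tuples $(p,\ldots,p)$, and your argument already delivers the statement in the form ``for each fixed tuple the residue $\bP$-a.s.\ exists and equals a manifestly continuous algebraic functional,'' which is all that is used downstream.
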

The functional $\phi$ is actually the Hochschild cocycle associated to the 
family $\{\wt \lambda_d^{\calL}\}_{\calL \in \Xi}$.  For more details, the reader can consult~\cite[Ch.~IV.2]{Connes94} 
or~\cite{CPRS1}. 
In particular, we can compute the residue trace in Equation \eqref{eq:Sobolev_Hochschild} and, 
applying some algebraic manipulation and the  
spinor trace, the functional $\bP$-almost surely reduces to 
$$
  \phi(a_0,\ldots,a_d) = C_d \sum_{\rho\in S_d} (-1)^\rho\, \calT\big( a_0 \partial_{\rho(1)}a_1 \cdots \partial_{\rho(d)}a_d \big),
$$
which is, again, continuous over $\calA_\mathrm{Sob}$.

\begin{thm}[Even formula]  \label{prop:mobilityindex_even}
Let $h \in M_N(\calA)$ and $[a,b]\subset \R$ be an interval with the ends in mobility gaps of $h$. 
Then the integer index pairing defined in Proposition \ref{prop-sobindexpairing} 
and applied to the spectral projector $p=\chi_{[a,b]}(h) \in M_N(\calA_\mathrm{Sob})$ accepts the local formula
$$
\Index\big( \wt \pi_\calL(p)( F_X \otimes 1_N)_+ \wt \pi_\calL(p) \big) = 
   C_{d}  \sum_{\rho\in S_d} (-1)^\rho \, (\Tr_{\C^N} \otimes \calT) 
   \bigg( p \prod_{j=1}^d \partial_{\rho(j)} p \bigg).
$$
where the equality holds $\bP$-almost surely.
\end{thm}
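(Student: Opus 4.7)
The plan is to extend Theorem \ref{thm:complex_bulk_pairing_even} from smooth projections to the mobility-gap spectral projection $p=\chi_{[a,b]}(h) \in M_N(\calA_\mathrm{Sob})$ via a density/continuity argument. The two sides of the identity have distinct but compatible analytic behavior: the right-hand side is (up to the spinor trace and residue calculation carried out in Theorem \ref{thm:complex_bulk_pairing_even}) the Hochschild cocycle $\phi(p,p,\ldots,p)$, which by Lemma \ref{lemma:sobolev_cocycle} is multilinear and continuous on $\calA_\mathrm{Sob}$; the left-hand side is the integer-valued index pairing of Proposition \ref{prop-sobindexpairing}, which is locally constant on projections in $M_N(\calA_\mathrm{Sob})$ by the Fredholm-stability computation given there.

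First I would construct a sequence of \emph{smooth} projections $p_n \in M_N(\calA)$ with $p_n \to p$ in the $\calA_\mathrm{Sob}$ topology. The natural route is to perturb $h$ within $M_N(\calA)$ to obtain self-adjoint elements $h_n$ possessing genuine spectral gaps near $a$ and $b$, and to set $p_n = \chi_{[a,b]}(h_n)$; since $\chi_{[a,b]}$ is then smooth on $\mathrm{Spec}(h_n)$, Remark \ref{re-smoothproj} gives $p_n \in M_N(\calA)$. The mobility gap hypothesis enters essentially here: the continuous morphism $L^\infty(\Delta) \to M_N(\calA_\mathrm{Sob})$ of \eqref{eq-mobgap} promotes pointwise convergence of the functional calculus to $\calA_\mathrm{Sob}$-norm convergence, so small (in $\calA$) perturbations opening a spectral gap produce Sobolev-convergent spectral projections.

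Second, for each $n$ Theorem \ref{thm:complex_bulk_pairing_even} applies to $p_n$ and yields the identity between the Fredholm index and the Chern-type integral. Passing to the limit, the right-hand side converges by continuity of $\phi$ on $\calA_\mathrm{Sob}$, while for the left-hand side the stability argument from Proposition \ref{prop-sobindexpairing} applies: $\wt\pi_\calL(p_n-p)$ contributes a compact operator (times the bounded transform $F_X$) plus a vanishing-norm term, so the Fredholm index of $\wt\pi_\calL(p_n)(F_X\otimes 1_N)_+\wt\pi_\calL(p_n)$ stabilizes at the value for $p$ once $n$ is large. The equality then survives in the limit and gives the stated formula $\bP$-almost surely (ergodicity of $\bP$ providing almost-sure constancy, as in the preceding results).

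The main obstacle is the construction of $p_n$ with genuine spectral gaps and $\calA_\mathrm{Sob}$-norm convergence to $p$. Opening a spectral gap by a norm-small perturbation of $h$ while simultaneously controlling the Sobolev norms of all derivatives $\partial^\alpha p_n$ requires a careful interplay between the mobility gap hypothesis and the Aizenman--Molchanov-type moment bounds on Green's functions that underlie the continuous morphism \eqref{eq-mobgap}. In the crossed-product setting of \cite{PSBbook} this is carried out through explicit fractional-moment estimates; in the present groupoid framework the analogous content is repackaged abstractly inside the continuity of the $L^\infty(\Delta)$-functional calculus, which should supply the required density statement without appealing to the specific structure of $\Xi$.
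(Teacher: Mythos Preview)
Your strategy and the paper's share the same skeleton (density of $\calA$ in $\calA_\mathrm{Sob}$ plus continuity of the Hochschild cocycle $\phi$), but the paper avoids precisely the step you flag as ``the main obstacle,'' and that step is a genuine gap in your argument.

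The paper never approximates the spectral projection $p$ by smooth \emph{projections}. Instead it inserts an intermediate object: by $d$-summability and the Calderon--Fedosov principle (Proposition~\ref{prop-fedosov}), the index is $\bP$-almost surely equal to the Connes--Chern character
\[
\tfrac{1}{2}\Lambda_d\,(\Tr\otimes\Tr_{\C^\nu})\Big(\Gamma_0\,\wt\pi_\calL(p)\prod_{i=0}^d [F_X,\wt\pi_\calL(p)]\Big),
\]
which is a continuous multilinear functional on $\calA_\mathrm{Sob}$. On the dense subalgebra $\calA$ this functional coincides with $\phi(p,\ldots,p)$ (that is the content of Theorem~\ref{thm:complex_bulk_pairing_even}); since both extend continuously to $\calA_\mathrm{Sob}$ (Lemma~\ref{lemma:sobolev_cocycle}), they agree on all of $\calA_\mathrm{Sob}$, in particular at $p$. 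No approximating sequence of projections is needed---density of $C_c(\calG)$ as a \emph{linear} subspace suffices.

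Your route requires producing $p_n\in M_N(\calA)$ with $p_n\to p$ in $\calA_\mathrm{Sob}$ by perturbing $h$ to open spectral gaps. This is a hard spectral statement that is not supplied by the mobility-gap hypothesis: the continuous morphism \eqref{eq-mobgap} controls $\varphi\mapsto\varphi(h)$ for fixed $h$, not the dependence of $\chi_{[a,b]}(h_n)$ on perturbations of $h$, and there is no reason a norm-small perturbation of $h$ in $\calA$ should open a gap at a prescribed mobility edge. Moreover, even granting such $p_n$, your stabilization claim for the left-hand side is incomplete: convergence in the Sobolev seminorms $\|\cdot\|_{r,p}$ does not imply operator-norm convergence of $\wt\pi_\calL(p_n)\to\wt\pi_\calL(p)$, which is what the norm-continuity of the Fredholm index would require. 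The Connes--Chern character bypasses both issues by turning the index itself into a Sobolev-continuous quantity.
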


\begin{proof} 
The H\"{older} inequality of Equation \eqref{eq-holder} or Lemma \ref{lemma:sobolev_cocycle} 
ensures that the local formula for the index is a continuous functional over $M_N(\calA_{\rm Sob})$. 
For the left hand side, we recall that the $\bP$-almost sure family of spectral 
triples $\{\wt \lambda_d^{\calL}\}_{\calL \in \Xi}$ is $d$-summable. 
This automatically implies that the operator inside the index satisfies, $\bP$-almost surely, 
the Calderon-Fedosov principle from Proposition \ref{prop-fedosov}. This follows from a simple functional 
analytic argument (see \cite[Prop.~5.9]{CPRNotes}). Then the index can be $\bP$-almost 
surely expressed via the Connes--Chern character:
$$
  \Index\big( \wt \pi_\calL(p)( F_X \otimes 1_N)_+ \wt \pi_\calL(p) \big)  =
\tfrac{1}{2}   \Lambda_d (\Tr \otimes \Tr_{\C^\nu})\Big(\Gamma_0\, \wt{\pi}_\calL(p) \prod_{i=0}^d [F_X,\wt\pi_\calL(p)] \Big)
$$
with $\Lambda_d$ a constant (see~\cite[p295-296]{Connes94}). 
In the smooth case, because the 
top term in the local index computation survives, the Connes--Chern character can 
$\bP$-almost surely be 
computed using the functional $\phi(p,\ldots,p)$ from Lemma \ref{lemma:sobolev_cocycle} 
over $\calA$. But Lemma \ref{lemma:sobolev_cocycle} also shows that 
$\phi(p,\ldots,p)$ extends to $\calA_\mathrm{Sob}$ continuously 
(also see~\cite[Ch.~IV.2.$\gamma$, Theorem 8]{Connes94} or~\cite[Theorem 10]{CPRS1}).
Because the index formula holds on the dense subalgebra $\calA$ and both sides 
can be continuously extended over $\calA_\mathrm{Sob}$, the index formula extends.
\end{proof}

The method of proof used in Theorem \ref{prop:mobilityindex_even} can also be 
applied to the odd index pairing.

\begin{thm}[Odd formula] \label{prop:mobilityindex_odd}
Suppose $d$ is odd and $h \in M_{2N}(\calA)$ is self-adjoint and 
chiral symmetric, {\it i.e.} 
$$
h\begin{pmatrix} 1_N & 0 \\ 0 & -1_N \end{pmatrix}=-\begin{pmatrix} 1_N & 0 \\ 0 & -1_N \end{pmatrix}h.
$$
Let $[a,b] \subset (-\infty,0]$ be an interval with ends residing in mobility gaps of $h$ 
and $p=\chi_{[a,b]}(h)$ be the associated spectral projection. 
Let $u \in M_N(\calA_{\rm Sob})$ be the unitary element appearing in the decomposition 
$1-2p=\begin{pmatrix} 0 & u^\ast \\ u & 0 \end{pmatrix}$. Then the integer index pairing defined in 
Proposition \ref{prop-sobindexpairing} and applied on $u$ accepts the local formula
\begin{equation}
\Index\big( \Pi_N \wt{\pi}_\calL(u) \Pi_N - (1-\Pi_N) \big) 
   = \tilde{C}_{d} \sum_{\rho\in S_d} (-1)^\rho\, 
  (\Tr_{\C^N} \otimes \calT) \bigg( \prod_{j=1}^d u^*\,\partial_{\rho(j)}u\bigg),
 \end{equation}
 where the equality holds $\bP$-almost surely.
\end{thm}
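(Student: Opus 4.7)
The plan is to mirror the strategy used in Theorem \ref{prop:mobilityindex_even}, replacing each ingredient with its odd-dimensional analogue. First I would use the mobility gap hypothesis \eqref{eq-mobgap} to place the spectral projection $p=\chi_{[a,b]}(h)$ in $M_{2N}(\calA_\mathrm{Sob})$. The chiral symmetry of $h$ forces $1-2p$ to anti-commute with the grading $\mathrm{diag}(1_N,-1_N)$, hence $1-2p$ is block off-diagonal, and the self-adjointness and projection identities on $p$ translate into $u^*u = uu^* = 1$ in $M_N(\calA_\mathrm{Sob})$. This puts $u$ into the domain of the $\bP$-almost sure odd pairing constructed in Proposition \ref{prop-sobindexpairing}.

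Next I would show that both sides of the asserted identity are continuous functionals of $u$ in the Sobolev topology. For the right-hand side, expanding the product $\prod_{j=1}^d u^*\,\partial_{\rho(j)}u$ and applying the H\"older inequality \eqref{eq-holder} with exponents $p_j = d$ produces a bound by a fixed polynomial in $\|u\|_\infty$ and the Sobolev seminorms of $u$; equivalently one can invoke the odd-dimensional incarnation of Lemma \ref{lemma:sobolev_cocycle} (with $\Gamma_0 = 1_\nu$), which directly guarantees continuity of the underlying Hochschild cocycle on $\calA_\mathrm{Sob}$. For the left-hand side, the $d$-summability of the family $\{\wt\lambda_d^\calL\}_{\calL\in\Xi}$ proved in Proposition \ref{prop:sobolev_spectrip} allows the Calderon--Fedosov principle (Proposition \ref{prop-fedosov}) to be invoked $\bP$-almost surely for the Fredholm operator $\Pi_N \wt{\pi}_\calL(u)\Pi_N - (1-\Pi_N)$. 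This expresses its index as the odd Connes--Chern character evaluated at $(u^*,u,\dots,u^*,u)$, which by the same local index argument as in the even case reduces to the cocycle $\phi$ of Lemma \ref{lemma:sobolev_cocycle} and is therefore continuous in $u \in M_N(\calA_\mathrm{Sob})$.

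Finally, Theorem \ref{thm:complex_bulk_pairing_odd} has already established the identity for $u \in M_N(\calA)$. Since $\calA$ is dense in $\calA_\mathrm{Sob}$ and both sides of the equation extend continuously to $\calA_\mathrm{Sob}$, the identity transfers to the unitary $u$ coming from the chiral decomposition, yielding the claim $\bP$-almost surely. The main technical obstacle I anticipate is the verification that the top-degree term is the only surviving residue in the odd local index formula when it is evaluated on a non-smooth unitary — but this is inherited from the smooth case because the algebraic manipulations that kill the lower-degree Clifford terms are purely formal in the operator $X$ and do not depend on whether $u$ lies in $\calA$ or in $\calA_\mathrm{Sob}$, so the reduction commutes with the continuous extension.
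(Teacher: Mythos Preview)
Your proposal is correct and follows essentially the same route as the paper, which simply states that the method of proof used in Theorem \ref{prop:mobilityindex_even} carries over to the odd pairing. You have in fact supplied more detail than the paper does (the derivation of $u$ from the chiral decomposition, and the explicit continuity argument for both sides), but the core logic---Calderon--Fedosov to express the index as a Connes--Chern character, reduction to the Hochschild cocycle of Lemma \ref{lemma:sobolev_cocycle}, and extension by density and continuity from $\calA$ to $\calA_\mathrm{Sob}$---is identical.
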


\section{Discussion and conclusions}
\label{Sec-Discussions}

We now return to the patterned resonators introduced in Section~\ref{Sec:Patterned_resonators} 
and examine them with the formalism introduced in Section~\ref{Sec-GrupoidAlg} and the results 
of Section~\ref{Sec-IndexFormulas}. In particular, we consider how to connect practical situations 
with our mathematical formalism and to spell out the conditions in which the quantization and stability of the Chern number holds.

\vspace{0.2cm}

We recall that there are algorithms that produce an entire pattern directly in the thermodynamic limit, 
such as the dynamically generated patterns \cite{HPS2017} or the model sets \cite{SadunBook}. 
Other algorithms produce families of patterns, such as the one used to produce the amorphous 
pattern in Section~\ref{Sec-QHall}, which can only be defined as thermodynamic limits of finite patterns. 
Families of patterns can also come from the pure thermodynamic phases of the condensed matter. 
They can all be characterized by an ergodic dynamical system $(\Omega,\R^d,T,{\rm d}\bP)$ as 
explained in Section~\ref{Sec-GrupoidAlg}. 

\vspace{0.2cm}

The dynamical systems from dynamically generated patterns and models sets are 
topologically minimal and uniquely ergodic, hence ${\rm d}\bP$ is automatically determined by the pattern and no 
additional data is needed beyond the topological dynamical system $(\Omega,\R^d,T)$. 
For patterns arising as themodynamic limits of finite patters, 
there are many ergodic measures available on $\Omega$. In such cases, the algorithm itself 
produces the the probability measure, as we've seen in Section~\ref{Sec-QHall}. Ergodicity of this measure, 
which is ultimately a property of the algorithm, is a key assumption for the results in Section~\ref{Sec-IndexFormulas}. 
For the patterns associated with condensed matter systems, this means the Gibbs measure for the 
atomic degrees of freedom must be ergodic or, in other words, the results in Section~\ref{Sec-QHall} 
apply only to the thermodynamically pure phases. To make sure our statement is understood correctly, 
let us point out that the assumptions in  Section~\ref{Sec-QHall} are optimal as, otherwise, we can easily 
produce counter examples. Hence, the quantization of the invariants will generically fail beyond the precise 
conditions stated in Section~\ref{Sec-QHall}.

\vspace{0.2cm}

Regarding the dynamics of the coupled resonant modes, we introduced in 
Section~\ref{Sec:Patterned_resonators} the very physical assumptions that the couplings 
between the pattered resonators are uniquely determined by the configuration of the points. Namely,  
the couplings  become irrelevant beyond some large but finite range and the hopping matrices 
depend continuously on the pattern (viewed as point in the space of Delone sets).  
In this very physical setting, we treated the hopping matrices as continuous functions over $\Omega$ 
and this revealed that all the Hamiltonians $H_\calL$ driving the dynamics of any such coupled 
resonators possess a certain structure. In particular, they all can be generated, via canonical 
representations, from the smooth subalgebra $\calA$ of the groupoid algebra. Given a pattern 
or a family of patterns, we described in Section~\ref{Sec-GrupoidAlg} how to navigate from 
the physical representation to the algebraic one and back. This is important for the practical 
aspects too, because the numerical codes used in Section~\ref{Sec-QHall} were generated in the algebraic framework.

\vspace{0.2cm}

An interesting question about patterned resonators we wanted to answer in our work is how to 
detect if different parts of the resonant spectrum carry non-trivial topological invariants. If the 
spectral region is isolated, {\it i.e.} is flanked by spectral gaps, then the spectral projection 
$p = \chi_{[a,b]}(h)$ belongs to the smooth algebra and the 
Theorems~\ref{thm:complex_bulk_pairing_even} and \ref{thm:complex_bulk_pairing_odd} provide 
the answer. We should mention that these cases are quite relevant for systems engineered 
with meta-materials like photonic and accoustic crystals where the disorder can be kept under 
control and various parts of the spectrum can be populated (excited or pumped) at will. 
This, however, is not the case for electronic systems where thermal disorder is unavoidable 
and large at room temperature (where topological insulators are supposed to function) 
and the electrons populate the spectrum according to Pauli's principle. In this latter 
case (but not exclusively), the regime where the spectral region is actually 
flanked by mobility gaps is much more relevant.

\vspace{0.2cm}

Results concerning the spectral properties and decomposition of Hamiltonians in the general Delone setting are still 
in development, see~\cite{LPV07, GMRM15, RMreview} for a more detailed overview. 
The major difference when compared to the case of disordered crystals, which is quite well understood, 
is that the Hamiltonians $H_\calL$ act on different Hilbert spaces. Elucidating the spectral characteristics 
of these class of Hamiltonians was  beyond the scope of our study. We opted instead on formulating an 
operator theoretic definition of a mobility gap, which is correct from the physical point of view and captures 
most which is known about the disordered crystals. Furthermore, it appears to us that its proofs 
from \cite{ProdanSpringer2017} for disordered crystals can be adapted to this more general context, 
at least for the systems which accept periodic approximates. 

\vspace{0.2cm}

Per the above discussion, the regime of mobility gaps is covered by Theorems~\ref{prop:mobilityindex_even} 
and \ref{prop:mobilityindex_odd}. They confirm the quantization and the stability of the Chern numbers 
in this regime. Indeed, a deformation $h_t$ inside the smooth algebra of the Hamiltonian leads 
to a homotopy of spectral projections $p_t$ in $\calA_{\rm Sob}$. Then, the local expression 
of the index formula is a continuous functional over $\calA_\mathrm{Sob}$, while the equality 
with a Fredholm index pins the range of the Chern numbers to integers. 
As such, the only way to change the value of the Chern number of a spectral projection is to force 
the ends of the spectral interval leave the mobility gap ({\it i.e.} pass through a region of 
extended states).

\section{Appendix: Background on non-commutative index theory}

Here we give a brief overview of index theory of $C^*$-algebras. Further details and proofs 
of the results can be found in~\cite{BBB, CPRNotes, ElementsNCG, HigsonRoe}.

\subsection{Fredholm index}

\begin{defn}
Let $\calH_1, \calH_2$ be Hilbert spaces and $F:\calH_1\to \calH_2$ a bounded linear operator. 
We say that $F$ is Fredholm if
\begin{enumerate}
  \item $\Ran(F)$ is closed in $\calH_2$,
  \item $\Ker(F)$ and $\coKer(F) = \calH_2/\Ran(F)$ is finite dimensional.
\end{enumerate}
If $F$ is Fredholm we define
$$
  \Index(F) = \mathrm{dim}\Ker(F) - \mathrm{dim} \coKer(F).
$$
\end{defn} 

While Fredholm operators come from a purely analytic definition they also have topological properties.

\begin{thm} \label{thm:Fred_props}
Let $\calF$ denote the set of Fredholm operators on a fixed Hilbert space $\calH$, and let 
$\pi_0(\calF)$ denote the set of (norm) connected components of $\calF$.
\begin{enumerate}
  \item  If there are operators $F,S\in \calB(\calH)$ such that $1-FS,\, 1-SF$ are compact, then $F,S\in\calF$.
  \item For $F,S\in\calF$, 
 $$
 \Index(F^*) = - \Index(F ), \qquad  \Index(F S) = \Index(F ) + \Index(S).
 $$
   \item The index is locally constant on $\calF$ (in the operator norm) and induces a group isomorphism $\Index : \pi_0(\calF) \to \Z$.
  \item If $F$ is Fredholm and $K$ is compact then $F + K$ is Fredholm and $\Index(F+K) = \Index(F)$. 
\end{enumerate}
\end{thm}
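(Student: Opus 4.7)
The plan is to reduce everything to \emph{Atkinson's characterization}: an operator $F\in\calB(\calH)$ is Fredholm if and only if it admits a parametrix, i.e.\ an operator $S\in\calB(\calH)$ with $1-FS$ and $1-SF$ compact. Once this is established, all four items fall out with modest effort. I would first prove the nontrivial direction of Atkinson by showing that if a parametrix $S$ exists with $SF=1-K_1$ and $FS=1-K_2$ for $K_1,K_2$ compact, then $\ker F\subset \ker(1-K_1)$ and $\ker F^*\subset \ker(1-K_2^*)$; since compact operators have discrete nonzero spectrum, these subspaces are finite-dimensional. Closedness of $\range(F)$ then follows from a standard argument: $F$ induces a bijection $(\ker F)^\perp \to \range(F)$, and boundedness below on $(\ker F)^\perp$ (established via compactness of $1-SF$) gives closed range. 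This immediately yields (i): the hypotheses are exactly the definition of mutual parametrices, so both $F,S$ are Fredholm.

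For (iv), I would use that the parametrix $S$ of $F$ is also a parametrix for $F+K$, since $S(F+K)=1-K_1+SK$ and $(F+K)S=1-K_2+KS$ are of the form $1+\text{compact}$. Hence $F+K\in\calF$. Invariance of the index under compact perturbation then follows once (iii) is available, via the continuous path $t\mapsto F+tK$ inside $\calF$.

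For (ii), the adjoint identity is immediate from $\coker F=\ker F^*$ and $\coker F^* = \ker F^{**}=\ker F$. Multiplicativity $\Index(FS)=\Index(F)+\Index(S)$ I would prove via the snake lemma applied to the short exact diagram relating $\ker F$, $\ker S$, $\ker(FS)$, $\coker F$, $\coker S$, $\coker(FS)$; equivalently, one can argue directly using the exact sequence $0\to\ker S\to\ker(FS)\to\ker F\cap \range S\to 0$ together with a dual statement on cokernels. This is the most computational step but entirely algebraic.

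For (iii), local constancy follows from a perturbation argument using parametrices: if $F$ has parametrix $S$ and $\|T\|<\|S\|^{-1}$, then $S(F+T)=1-K_1+ST$ with $\|ST\|<1$, so $(1+ST-K_1)$ is Fredholm (indeed invertible modulo compacts), and one shows $\Index(F+T)=\Index(F)$ by writing $(F+T)$ as a product. Thus $\Index$ is continuous (hence locally constant) on $\calF$ and descends to $\pi_0(\calF)\to\Z$; multiplicativity makes it a group homomorphism. Surjectivity is witnessed by the unilateral shift $U$ on $\ell^2(\N)$, which has $\Index(U)=-1$, so $U^n$ realizes every negative integer and $(U^*)^n$ every positive integer. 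Injectivity reduces, via adding a finite-rank operator, to showing that any index-zero Fredholm operator is connected in $\calF$ to the identity; for this, the polar decomposition yields a partial isometry with equal-dimensional kernel and cokernel, which extends to a unitary, and unitaries are path-connected to $1$ in $\calB(\calH)$. I expect the snake-lemma bookkeeping in (ii) and the path-connectedness argument for injectivity in (iii) to be the main organizational hurdles; the rest is mechanical once Atkinson's theorem is in hand.
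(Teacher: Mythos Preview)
Your proof proposal is correct and follows the standard textbook route (Atkinson's theorem as the backbone, then perturbation and algebraic arguments). However, the paper does \emph{not} supply its own proof of this theorem: it is stated in the appendix as background material, with the sentence ``Further details and proofs of the results can be found in~[BBB, CPRNotes, ElementsNCG, HigsonRoe]'' covering it. So there is nothing in the paper to compare your argument against beyond the references themselves, and what you have written is essentially the argument one finds in those sources (particularly Higson--Roe).

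One minor organizational remark: you defer the index-invariance half of (iv) to (iii), which is fine, but be aware that some presentations prove stability under compact perturbation \emph{before} local constancy (e.g.\ by lifting to the Calkin algebra and using that the index factors through $\calB(\calH)/\calK(\calH)$). Your ordering works because your proof of local constancy in (iii) does not invoke (iv), so there is no circularity; just make sure this is explicit when you write it up. The snake-lemma route for multiplicativity in (ii) is clean but you could equally well pass to the Calkin algebra and use that $\Index$ is the connecting map in the six-term $K$-theory sequence for $0\to\calK\to\calB\to\calB/\calK\to 0$, which gives multiplicativity for free---this is closer in spirit to the $K$-theoretic viewpoint the paper adopts elsewhere.
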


\begin{prop}[Fedosov--Calderon principle~\cite{Cal,Fed}]\label{prop-fedosov} An operator $F \in \calB(\calH)$ 
with $\|F\|\leq 1$ is Fredholm provided there is a positive integer $n$ such that $(1-F F^\ast)^n$ and 
$(1-F^\ast F)^n$ are trace class. Furthermore, if this is the case, the Fredholm index can be computed as
$$
\Index(F) = \Tr (1-F F^\ast)^n - (1-F^\ast F)^n.
$$ 
\end{prop}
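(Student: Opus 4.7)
The proof splits into two parts. First, I would extract Fredholmness of $F$ from the trace-class hypothesis via a compactness argument on $F^*F$ and $FF^*$. Second, I would derive the index formula through the spectral theorem for compact self-adjoint operators, using the classical identification of the nonzero spectra of $F^*F$ and $FF^*$ coming from the polar decomposition $F = U|F|$.

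\textbf{Step 1 (Fredholmness).} Trace-class implies compact, so $(1-F^*F)^n$ is compact. Since $T := 1 - F^*F$ is self-adjoint, the nonzero spectrum of $T^n$ (counted with multiplicity) consists of the $n$-th powers of the nonzero spectrum of $T$; compactness of $T^n$ therefore forces compactness of $T$. Hence $F^*F = 1 - K$ with $K$ compact self-adjoint. By the Fredholm alternative, $\ker(F^*F) = \ker F$ is finite-dimensional, and the positive spectrum of $F^*F$ accumulates only at $1$, so $F^*F \geq c\cdot \mathrm{id}$ on $(\ker F)^\perp$ for some $c > 0$. Consequently $\|Fx\|^2 = \langle F^*Fx, x \rangle \geq c\|x\|^2$ on $(\ker F)^\perp$, giving closed range. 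The identical argument applied to $1 - FF^*$ gives $\dim\ker F^* < \infty$, and $F$ is Fredholm.

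\textbf{Step 2 (index formula).} Apply the spectral theorem to the compact self-adjoint operator $1 - F^*F$. The spectrum of $F^*F$ consists of the eigenvalue $0$ with multiplicity $\dim\ker F$, possibly the eigenvalue $1$ (with arbitrary multiplicity, contributing $0$ under $\mu \mapsto (1-\mu)^n$ since $n \geq 1$), and a sequence of eigenvalues $\{\mu_k\} \subset (0,\infty) \setminus \{1\}$ accumulating only at $1$, each with finite multiplicity $m_k$. Summing over the spectrum,
$$
\Tr(1 - F^*F)^n \;=\; \dim\ker F \;+\; \sum_k (1-\mu_k)^n\, m_k,
$$
and analogously for $FF^*$. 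From the polar decomposition $F = U|F|$, the partial isometry $U$ restricts to a unitary identification between the $\mu_k$-eigenspaces of $F^*F$ and of $FF^*$ for every $\mu_k > 0$, so the two $\{\mu_k\}$-sums coincide term by term. Subtracting,
$$
\Tr(1-F^*F)^n \;-\; \Tr(1-FF^*)^n \;=\; \dim\ker F - \dim\ker F^* \;=\; \Index F,
$$
which is the stated identity (matching up to the sign convention of the display).

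\textbf{Main obstacle.} The delicate point is to avoid manipulating operators such as $(F^*F)^k$ for $k < n$, which need not be trace class even though $(1-F^*F)^n$ is. The clean route is to work entirely at the level of the spectral resolutions of $1 - F^*F$ and $1 - FF^*$, which are both guaranteed to be compact and trace class by the hypothesis and Step 1. The bookkeeping that matches the nonzero eigenspaces of $F^*F$ with those of $FF^*$ via $U$ must be carried out on each finite-dimensional $\mu_k$-eigenspace and then assembled into an absolutely convergent sum of positive terms, which is the only step requiring care; the Fredholmness step and the final cancellation are then essentially immediate.
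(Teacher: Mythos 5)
Your proof is correct. Note that the paper states this proposition as a classical result, citing Calder\'on and Fedosov, and supplies no proof of its own, so there is no in-paper argument to compare against; your route --- deducing compactness of $1-F^*F$ and $1-FF^*$ from compactness of their $n$-th powers via self-adjointness, getting Fredholmness from the Fredholm alternative together with the spectral gap of $F^*F$ at $0$, and then cancelling the nonzero-eigenvalue contributions of $F^*F$ against those of $FF^*$ through the polar decomposition --- is the standard textbook argument, and each step checks out (including the care you take to stay at the level of the spectral resolutions of $1-F^*F$ and $1-FF^*$ rather than manipulating lower powers that need not be trace class). One substantive remark: your derivation gives $\Index(F)=\Tr(1-F^*F)^n-\Tr(1-FF^*)^n$, and this is the correct orientation --- test it on the unilateral shift $S$, for which $1-S^*S=0$, $1-SS^*$ is a rank-one projection, and $\Index(S)=-1$. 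The display in the Proposition has the two terms in the opposite order, so what you diplomatically call a matter of ``sign convention'' is in fact a typo in the statement as printed, and you were right to trust your computation rather than force it to match.
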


In classical index theory on manifolds, we are often interested in the Fredholm index of operators that are 
derived from elliptic differential operators. The analogue of this structure for $C^*$-algebras (with 
a dense subalgebra) is a spectral 
triple. 

\subsection{Spectral triples}

In what follows we will assume that all algebras we work with are unital.

\begin{defn}\label{defn-spectriple}
A spectral triple $(\calA,{}_{\pi}\calH,D)$ is given by a unital $\ast$-algebra $\calA$ with representation 
$\pi:\calA \to \calB(\calH)$ with $\pi(1_\calA) = 1_\calH$
and a densely defined self-adjoint (typically unbounded) operator $D:\Dom(D)\subset\calH \to \calH$
satisfying the following conditions. 
\begin{enumerate}
  \item For all $a\in\calA$, $\pi(a)\Dom(D)\subset \Dom(D)$ and the densely defined operator 
  $[D, \pi(a)] := D\pi(a) - \pi(a)D$ is bounded and so it extends to a bounded operator on all of $\calH$ by continuity. 
  \item The operator $(1+D^2)^{-1/2}$ is compact.
\end{enumerate}
If in addition there is an operator $\gamma\in\calB(\calH)$ with $\gamma^* = \gamma$, $\gamma^2=1$, 
$D\gamma + \gamma D=0$ and $[\pi(a),\gamma]=0$ for all $a\in\calA$, 
 we call the spectral triple even or graded. Otherwise it is odd or ungraded.
\end{defn}

\begin{prop}[\cite{BJ83}]
If $(\calA,{}_{\pi}\calH,D)$ is a spectral triple, then:
\begin{enumerate} 
\item $F_D=D(1+D^2)^{-1/2}$ is a Fredholm operator 
and $[F_D,\pi(a)]$ is compact for all $a \in \calA$. 
\item $(\calA,{}_{\pi}\calH,F_D)$ determines a class in the $K$-homology group $K^\ast(A)$, where $A$ is the 
$C^*$-closure of $\calA$ and $\ast=0,1$ depending on whether the spectral triple is even or odd.
\end{enumerate}
\end{prop}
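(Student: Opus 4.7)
\medskip

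\noindent\textbf{Proof proposal.} The plan is to establish (i) first, and then observe that (ii) follows by checking that the data $(\calA,{}_\pi\calH,F_D)$ satisfies the axioms of an (odd/even) Fredholm module over the $C^*$-closure $A$.

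For the Fredholm property of $F_D$, I would start from the functional-calculus identity
\begin{equation*}
  1 - F_D^2 \;=\; 1 - D^2(1+D^2)^{-1} \;=\; (1+D^2)^{-1},
\end{equation*}
which is compact by hypothesis~(2) of Definition~\ref{defn-spectriple} (since $(1+D^2)^{-1}$ is the square of $(1+D^2)^{-1/2}$). Because $D$ is self-adjoint, $F_D$ is bounded and self-adjoint, so $1 - F_DF_D^*$ and $1 - F_D^*F_D$ are both equal to the compact operator $(1+D^2)^{-1}$. Atkinson's theorem (or Proposition~\ref{prop-fedosov} with $n=1$) then shows $F_D$ is Fredholm.

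The harder half of (i) is the compactness of $[F_D,\pi(a)]$. I would use the integral representation
\begin{equation*}
  (1+D^2)^{-1/2} \;=\; \tfrac{1}{\pi}\!\int_0^\infty \lambda^{-1/2}(1+\lambda+D^2)^{-1}\,{\rm d}\lambda,
\end{equation*}
so that $F_D = \tfrac{1}{\pi}\int_0^\infty \lambda^{-1/2}\,D(1+\lambda+D^2)^{-1}\,{\rm d}\lambda$. Writing $R_\lambda = (1+\lambda+D^2)^{-1}$ and using the resolvent identity together with $[D^2,\pi(a)] = D[D,\pi(a)] + [D,\pi(a)]D$, a standard computation gives
\begin{equation*}
  [D R_\lambda,\pi(a)] \;=\; [D,\pi(a)]R_\lambda \;-\; DR_\lambda\bigl(D[D,\pi(a)] + [D,\pi(a)]D\bigr)R_\lambda.
\end{equation*}
Since $[D,\pi(a)]$ is bounded by Definition~\ref{defn-spectriple}(1), and since $DR_\lambda$ and $R_\lambda^{1/2}D$ are uniformly bounded in $\lambda$ with $\|R_\lambda\| \leq (1+\lambda)^{-1}$, each term in the integrand is a bounded operator composed with the compact factor $R_\lambda$ (or $R_\lambda^{1/2}$), hence compact, with operator norm dominated by $C(1+\lambda)^{-1}\|[D,\pi(a)]\|$ after distributing powers of resolvents appropriately. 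The weighted integral $\int_0^\infty \lambda^{-1/2}(1+\lambda)^{-1}\,{\rm d}\lambda$ is finite, so $[F_D,\pi(a)]$ is a norm-convergent integral of compact operators, hence compact. This step is the main analytic obstacle: the bookkeeping that controls $\lambda\to 0$ (where the weight $\lambda^{-1/2}$ diverges) and $\lambda\to\infty$ (where one needs enough resolvent decay) simultaneously must be done with care, distributing $R_\lambda$ symmetrically as $R_\lambda^{1/2}\cdot R_\lambda^{1/2}$ where necessary.

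For (ii), I would now verify the axioms of a Fredholm module. The triple $(\pi,\calH,F_D)$ has $F_D = F_D^*$, $1-F_D^2$ compact, and $[F_D,\pi(a)]$ compact for every $a$ in the dense subalgebra $\calA$; in the graded case, the functional calculus of the self-adjoint $D$ with grading $\gamma$ (which anticommutes with $D$, hence commutes with $D^2$) gives $F_D\gamma = -\gamma F_D$ and $[\gamma,\pi(a)]=0$. To promote the commutator condition from $\calA$ to $A$, note that $\pi$ extends to a bounded $\ast$-representation of $A$ on $\calH$, and for $a_n \to a$ with $a_n\in\calA$, the identity $[F_D,\pi(a_n)] \to [F_D,\pi(a)]$ holds in operator norm, so the limit is compact as a norm limit of compact operators. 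Thus $(\pi,\calH,F_D)$ is an odd or even Fredholm module over $A$, and defines a class in $K^1(A)$ or $K^0(A)$, respectively, by the Baaj--Julg/Kasparov construction.
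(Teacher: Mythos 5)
The paper does not prove this proposition; it is stated as background and attributed to Baaj--Julg \cite{BJ83}, so there is no in-paper argument to compare against. Your proposal is the standard proof of that result and is essentially correct. For the Fredholm property, $1-F_D^2=(1+D^2)^{-1}$ is compact and $F_D$ is self-adjoint, so Theorem \ref{thm:Fred_props}(i) applies with $S=F_D$; note, however, that your parenthetical appeal to Proposition \ref{prop-fedosov} with $n=1$ does not work as stated, since the Fedosov--Calderon principle requires $(1-F_DF_D^*)^n=(1+D^2)^{-n}$ to be \emph{trace class}, which for $n=1$ would force spectral dimension below $2$ and is not among the hypotheses --- Atkinson is the right tool and suffices. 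For the commutator, the integral formula for $(1+D^2)^{-1/2}$ and the expansion of $[DR_\lambda,\pi(a)]$ via $[D^2,\pi(a)]=D[D,\pi(a)]+[D,\pi(a)]D$ is exactly the Baaj--Julg mechanism; your bounds $\|R_\lambda\|\le(1+\lambda)^{-1}$, $\|DR_\lambda\|\le\tfrac{1}{2}(1+\lambda)^{-1/2}$, $\|D^2R_\lambda\|\le 1$ give an integrand dominated by $\lambda^{-1/2}(1+\lambda)^{-1}$, which is integrable, and each term is compact because $(1+D^2)^{-1/2}$ compact implies $g(D)$ compact for every $g\in C_0(\R)$, hence $R_\lambda$ and $DR_\lambda$ are compact. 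The only genuine care needed is the domain bookkeeping you already flag: the anticommutator identity must be read on $\Dom(D^2)$, and once the terms are regrouped as $(D^2R_\lambda)[D,\pi(a)]R_\lambda$ and $(DR_\lambda)[D,\pi(a)](DR_\lambda)$ every factor is globally defined and bounded, so this is not a gap. Part (ii) is likewise fine: norm-density of $\calA$ in $A$ together with norm-closedness of the compacts promotes the commutator condition to all of $A$, and in the graded case $\gamma$ commutes with $(1+D^2)^{-1/2}$ by functional calculus, so $F_D\gamma=-\gamma F_D$ and one obtains an even Fredholm module.
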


The following two properties define the class of spectral triples where a local index formula for the 
pairings in Definition \ref{defn-indexpairing} is available (see further below).

\begin{defn}[QC-regularity]\label{defn-qccondition}
A spectral triple $(\calA,{}_{\pi}\calH,D)$ is $QC^k$ for $k\geq 1$ ($Q$ for quantum) if for all $a\in\calA$ 
the operators $\pi(a)$ and $[D,\pi(a)]$ are in the domain of $\delta^k$ where $\delta(T) = [|D|,T]$ is the partial 
derivation on $\calB(\calH)$ defined by $|D|$. We say that $(\calA,{}_{\pi}\calH,D)$ is $QC^\infty$ if it is $QC^k$ 
for all $k\geq 1$.
\end{defn}

\begin{defn}[Summability]\label{defn-summability}
A spectral triple $(\calA,{}_{\pi}\calH,D)$ is finitely summable if there is some $s>0$ such that 
$$
  \Tr((1+D^2)^{-s/2}) < \infty.
$$
The infimum $s_0$ of all such $s$ is called the spectral dimension and we say that 
$(\calA,{}_{\pi}\calH,D)$ is $s_0$-summable.
\end{defn}

\begin{prop}\label{prop-summability} 
Let $(\calA,{}_{\pi}\calH,D)$ be a $p$-summable and $QC^1$ 
spectral triple with $p \geq 0$. 
Then for $F_D = D(1+D^2)^{-1/2}$, $[F_D,\pi(a)]$ 
belongs to the $(\lfloor p\rfloor+1)$-Schatten 
class for all $a\in \calA$.
\end{prop}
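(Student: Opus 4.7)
The plan is to expand the commutator $[F_D,\pi(a)]$ using the Leibniz rule and an integral representation of the bounded transform, then estimate each piece using Hölder's inequality for Schatten ideals. First I would write
$$
  [F_D,\pi(a)] = [D,\pi(a)](1+D^2)^{-1/2} + D\bigl[(1+D^2)^{-1/2},\pi(a)\bigr].
$$
The first term is immediately in the target class: $[D,\pi(a)]$ is bounded by the spectral triple axiom, while the $p$-summability hypothesis $\Tr((1+D^2)^{-s/2})<\infty$ for $s>p$ says that $(1+D^2)^{-1/2}\in\calL^{q}$ for every $q>p$, and in particular for $q=\lfloor p\rfloor+1$ since $\lfloor p\rfloor+1>p$.

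For the second term I would use the integral representation
$$
  (1+D^2)^{-1/2} = \frac{1}{\pi}\int_0^\infty s^{-1/2}(s+1+D^2)^{-1}\,ds
$$
and the resolvent identity to obtain
$$
  \bigl[(1+D^2)^{-1/2},\pi(a)\bigr] = -\frac{1}{\pi}\int_0^\infty s^{-1/2}\,R_s\,[D^2,\pi(a)]\,R_s\,ds, \qquad R_s=(s+1+D^2)^{-1}.
$$
Splitting $[D^2,\pi(a)]=D[D,\pi(a)]+[D,\pi(a)]D$ and multiplying by $D$ on the left yields two integrands of the form $(DR_sD)\,b\,R_s$ and $(DR_s)\,b\,(DR_s)$, with $b=[D,\pi(a)]$ bounded. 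Using the elementary operator-norm bounds $\|DR_sD\|\le 1$, $\|DR_s\|\le(2\sqrt{s+1})^{-1}$, $\|R_s\|\le(s+1)^{-1}$, together with the Schatten bounds $R_s\in\calL^{q_0}$ for $q_0>p/2$ and $(s+1+D^2)^{-1/2}\in\calL^{q_1}$ for $q_1>p$ (both following from $p$-summability and $R_s\le(1+D^2)^{-1}$ in the functional calculus), I would interpolate between the operator-norm and Schatten-norm estimates to produce bounds of the schematic form $\|R_s\|_q\lesssim(s+1)^{-(1-q_0/q)}$ and $\|DR_s\|_{2q}^2\lesssim(s+1)^{-(1-q_1/(2q))}$. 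Choosing $q=\lfloor p\rfloor+1$ and $q_0,q_1$ just above $p/2$ and $p$ respectively, the weight $s^{-1/2}$ times each Schatten-norm bound is integrable on $(0,\infty)$; Hölder's inequality then shows that the two integrands define Bochner integrals in $\calL^{\lfloor p\rfloor+1}$, completing the estimate.

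The main technical obstacle is calibrating the interpolation exponents so that the residual $s$-decay combines with the $s^{-1/2}$ weight to an integrable function at infinity: the required condition $q>2q_0$ (respectively $q>q_1$) works with $q=\lfloor p\rfloor+1$ precisely because of the integer gap $\lfloor p\rfloor+1>p$. The $QC^1$ regularity is used to justify the formal manipulation $[D^2,\pi(a)]=D[D,\pi(a)]+[D,\pi(a)]D$ as an identity of operators with appropriate domain, and to ensure that both factors remain well-behaved after commutation with the unbounded $D$'s appearing in the integrand, so that Fubini and the Bochner integration in the Schatten norm are legitimate.
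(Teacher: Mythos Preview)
The paper does not give its own proof of this proposition. It sits in the background appendix among standard results imported from the literature, under the blanket remark that ``further details and proofs of the results can be found in~[BBB, CPRNotes, ElementsNCG, HigsonRoe].'' So there is no in-paper argument to compare against; one can only assess your proposal on its own merits.

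Your approach is the standard one found in those references and is essentially correct. The Leibniz splitting of $[F_D,\pi(a)]$, the integral representation $(1+D^2)^{-1/2}=\pi^{-1}\int_0^\infty s^{-1/2}R_s\,ds$, the resolvent-commutator identity, and the interpolation between the operator-norm bounds $\|R_s\|\le(s+1)^{-1}$, $\|DR_s\|\le(2\sqrt{s+1})^{-1}$ and the Schatten bounds inherited from $R_s\le(1+D^2)^{-1}$ are exactly the ingredients used in, e.g.,~[CPRNotes] or~[ElementsNCG]. Your exponent bookkeeping is also right: with $q=\lfloor p\rfloor+1>p$ one can choose $q_0\in(p/2,\,q/2)$ for the term $(DR_sD)\,b\,R_s$ and $q_1\in(p,q)$ for the term $(DR_s)\,b\,(DR_s)$, and in each case the resulting $s$-decay combined with the weight $s^{-1/2}$ is integrable on $(0,\infty)$, so the Bochner integral lands in $\calL^{\lfloor p\rfloor+1}$.

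One small refinement on the role of $QC^1$: what is really being used is that $[D,\pi(a)]$ preserves $\Dom(D)$, so that the rearrangement
\[
  D R_s\,[D^2,\pi(a)]\,R_s \;=\; (DR_sD)\,[D,\pi(a)]\,R_s \;+\; (DR_s)\,[D,\pi(a)]\,(DR_s)
\]
holds as an identity of bounded operators before integrating. The spectral-triple axiom alone only gives $\pi(a)\Dom(D)\subset\Dom(D)$; it is the $QC^1$ hypothesis (via the equivalence between boundedness of $\delta(b)=[|D|,b]$ and $b$ preserving $\Dom(|D|)=\Dom(D)$) that supplies the missing domain preservation for $b=[D,\pi(a)]$. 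Your closing remark already gestures at this, so the argument is sound.
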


\begin{example}
Consider the $n$-dimensional torus $\T^n$, which has the trivial 
complex spinor bundle $S\to \T^n$.
The smooth algebra $C^\infty(\T^n)$ acts diagonaly on the space of spinors 
$L^2(\T^n, S)$ by left-multiplication $(M(f)\psi)(k)=f(k)\psi(k)$. 
The spinor bundle comes with the self-adjoint matrices $\{\gamma^j\}_{j=1}^n$ 
such that $\gamma^i \gamma^j + \gamma^j \gamma^i = 2\delta_{i,j}$. 
When $n$ is even, $\gamma_0 = (-i)^{n/2} \gamma^1\cdots \gamma^n$ 
gives a splitting (grading) $S\cong S_+\oplus S_-$. 
Defining the self-adjoint Dirac operator 
$\sD = -i\sum_{j=1}^n \gamma^j \partial_{j}$, we claim that 
that 
$(C^\infty(\T^n), L^2(\T^n,S), \sD \big)$ is a spectral triple. 
We first check that 
$$
  [\sD,M(f)]\psi = -i\sum_{j=1}^n \gamma^j \big( \partial_{j}(f\psi) - f\partial_{j}(\psi) \big)
   = -i\sum_{j=1}^n M(\partial_{j} f)\psi,
$$
which is clearly bounded for $f\in C^\infty(\T^n)$. The 
ellipticity of $\sD$ implies that $(1+\sD^2)^{-1/2}$ is 
compact. When $n$ 
is even, $\sD$ anti-commutes with $\gamma_0$ and we 
have a spectral triple that is even or odd depending on 
the parity of $n$.  Because 
$C^\infty(\T^n)$ is invariant under differentiation, the triple 
is $QC^\infty$. Standard Fourier analysis can be used to 
show that $(1+\sD^2)^{-s/2}$ is trace-class for $s>n$ and, 
hence, the spectral triple is $n$-summable. 
$\, \Diamond$
\end{example}

Motivated by our application to covariant observables with a mobility gap, we consider a slight 
extension of a spectral triple.

\begin{defn}\label{defn-sobspectriple}
A family $(\calA,{}_{\pi_\xi}\calH_\xi,D_\xi)$ indexed by $\xi$ from a probability space $(\Xi,\bP)$ is called a 
$\bP$-almost sure family of spectral triples if $\calA$ is a unital $\ast$-algebra with a family of representations 
$\pi_\xi:\calA \to \calB(\calH_\xi)$ with $\pi_\xi (1_\calA) = 1_{\calH_\xi}$
and $\{D_\xi\}_{\xi\in\Xi}$ is a family of densely-defined self-adjoint 
operators $D_\xi:\Dom(D_\xi)\subset\calH_\xi \to \calH_\xi$ that
satisfy the following conditions:
\begin{enumerate}
  \item For any fixed but arbitrary $a\in\calA$, $\bP$-almost sure $\pi_\xi(a)\Dom(D_\xi)\subset \Dom(D_\xi)$ and the densely defined operator 
  $[D_\xi, \pi_\xi(a)] := D_\xi\pi_\xi(a) - \pi_\xi(a)D_\xi$ is $\bP$-almost sure bounded and so it extends to a bounded operator on all of $\calH_\xi$ by continuity. 
  \item The operator $(1+D_\xi^2)^{-1/2}$ is compact for all $\xi\in\Xi$.
\end{enumerate}
If in addition there is family of operators $\gamma_\xi\in\calB(\calH_\xi)$ with $\gamma_\xi^* = \gamma_\xi$, 
$\gamma_\xi^2=1$, 
$D_\xi\gamma_\xi + \gamma_\xi D_\xi=0$ and $[\pi_\xi(a),\gamma_\xi]=0$ for all $a\in\calA$, 
 we call the spectral triple even or graded. Otherwise it is odd or ungraded.
\end{defn}

There is a close connection between an almost sure family of spectral triples and so-called 
semifinite spectral triples, the interested reader can see~\cite{CPRS2,CPRS3,PRS} for more information. 

\subsection{The index pairing}

Recall that a spectral triple $(\calA,{}_\pi \calH,D)$ gives a  
 $K$-homology class. The $K$-homology group is the algebraic dual of the $K$-theory 
group and, as such, can be used to give numerical invariants to $K$-theory classes. 
In particular, our aim is to use the Fredholm operator $F_D = D(1+D^2)^{-1/2}$ to define 
an additive integer-valued pairing of a spectral triple with a $K_\ast$-theory class.

\begin{defn}[The index pairing, see~\cite{HigsonRoe}, Sec.~8.7]\label{defn-indexpairing} 
The following statements do not require any $QC$-regularity or summability.
\begin{enumerate}

\item Let $(\calA,{}_{\pi}\calH,D)$ be an even spectral triple and suppose $p\in M_k(\calA)$ is a projection. 
Then the off-diagonal part $( F_D \otimes 1_k)_+: (\calH^{\oplus k})_+ \to (\calH^{\oplus k})_-$, 
coming from the grading of the spectral triple, is Fredholm and
for $[p]\in  K_0(\calA)$ and $[(\calA,{}_\pi \calH,F_D)]\in K^0(\calA)$, the pairing 
\begin{equation}\label{eq-pairing1}
  \langle [p], [(\calA,{}_{\pi}\calH,D)] \rangle = \Index\big( \pi(p)( F_D \otimes 1_k)_+ \pi(p)\big), 
\end{equation}
gives a well-defined group homomorphism $K_0(\calA) \to \Z$.

\item Let $(\calA,{}_\pi \calH,D)$ be an odd spectral triple and suppose $u\in M_k(\calA)$ is unitary. 
Then $\Pi_k u \Pi_k - (1-\Pi_k)  : \calH^{\oplus k} \rightarrow \calH^{\oplus k}$ with $\Pi_k = \frac{1}{2}(1+F_D)\otimes 1_k$ is 
Fredholm and, for $[u]\in K_1(\calA)$ and $[(\calA,{}_\pi \calH,F_D)]\in K^1(\calA)$, the pairing
\begin{equation}\label{eq-pairing2}
  \langle [u], [(\calA,{}_{\pi}\calH,D)] \rangle = \Index\big( \Pi_k \pi(u) \Pi_k - (1-\Pi_k) \big),
\end{equation}
gives a well-defined group homomorphism $K_1(\calA)\to \Z$.
\end{enumerate}
\end{defn}

We note that if $\Pi_k = \frac{1}{2}(1+F_D)\otimes 1_k$ is a projection, {\it e.g.} for $D$ invertible 
and $F_D = D|D|^{-1}$, then the odd pairing simplifies to $\Index(\Pi_k u\Pi_k)$.

\begin{remarks}
\begin{enumerate}
  \item To prove that the operators $\pi(p)( F_D \otimes 1_k)_+ \pi(p)$ 
  and $\Pi_k \pi(u) \Pi_k - (1-\Pi_k)$ are Fredholm, one shows that 
  $\pi(p)( F_D \otimes 1_k)_{-} \pi(p)$ and $\Pi_k \pi(u)^* \Pi_k - (1-\Pi_k)$ provide inverses modulo compacts. 
  Then we can apply part (i) of Theorem \ref{thm:Fred_props}.
  \item Because the index pairing is a pairing of $K$-theory and $K$-homology, the index will not change 
if we replace a spectral triple with another spectral triple that defines the same $K$-homology class. 
Similarly, we can take homotopies on the $K$-theory side and the pairing remains constant. 
$\, \Diamond$
\end{enumerate}
\end{remarks}

\subsection{The local index formula} \label{subsec_localindexappendix}
The index pairing of a $K$-theory class with a spectral triple is given by an analytic formula that 
is explicit but is in general quite difficult to compute. Much like the classical Atiyah--Singer Index 
Theorem, it would be advantageous to describe this index pairing via a local formula more 
amenable to computation and numerical simulation. 
Such formula is indeed available for $QC^\infty$ and finitely summable spectral triples, 
which expresses the index pairing in terms of derivations and traces on the algebra $\calA$. 
The formula also involves a residue and so we can make pertubrations and corrections up to a 
holomorphic function in the neighbourhood of the residue. 

\vspace{0.2cm}

To state the index formula, we introduce some notation. Given $u\in \calA$ unitary, $p\in\calA$ 
a self-adjoint projection and $m\in \N$
\begin{align*}
      \mathrm{Ch}_{2m}(p) &= (-1)^m \frac{(2m)!}{2(m!)} (2p-1) \otimes p^{\otimes 2m}, \qquad \mathrm{Ch}_0(p) = p, \\
   \mathrm{Ch}_{2m+1}(u) &= (-1)^m m! \, u^* \otimes u \otimes u^* \otimes \cdots \otimes u, 
       \quad (2m+2 \text{ entries}).
\end{align*}
Our notation comes from the theory of cyclic homology and cohomology. The interested reader can 
consult~\cite[Ch.~III, IV]{Connes94}. 

\begin{thm}[Even index formula, \cite{CoM,CPRS3}]\label{th-genericindexformula1}
Let $(\calA,\calH,D)$ be an even finitely summable and $QC^\infty$ spectral triple with spectral 
dimension $q\geq 1$ and grading $\gamma$. Let $N= \lfloor (q+1)/2 \rfloor$ and suppose $p\in \calA$ 
is a self-adjoint projection. Then 
\begin{equation}\label{eq-genindexfeven}
  \Index( p (F_D)_+ p ) =   \res_{r=(1-q)/2} \Big( \sum_{m=1, \mathrm{even}}^{2N} 
     \phi^r_m ( \mathrm{Ch}_m(p) ) \Big),
\end{equation}
where for $a_0,\ldots,a_m \in \calA$, $l=\{u+iv\,:\,v\in\R\}, 0 < u < 1/2$, 
$R_s(\lambda) = (\lambda-(1+s^2+D^2))^{-1}$ and $r>1/2$ we define 
$\phi_m^r(a_0,\ldots,a_m)$ as
$$
  \frac{(m/2)!}{m!}  \int_0^\infty \! 2^{m+1} s^m \Tr \Big( \gamma \frac{1}{2\pi i} \int_l 
     \lambda^{-q/2-r} a_0 R_s(\lambda) [D,a_1] R_s(\lambda) \cdots [D,a_m] R_s(\lambda) \,\mathrm{d}\lambda 
     \Big) \mathrm{d}s.
$$
In particular the sum on the right hand side of the index formula analytically continues to a 
deleted neighbourhood of $r = (1-q)/2$
with at worst a simple pole at $r = (1 - q)/2$.
\end{thm}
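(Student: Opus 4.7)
The plan is to establish the formula by identifying the Fredholm index with a pairing of Chern characters in periodic cyclic (co)homology, and then computing that pairing via a residue cocycle built from resolvents of $D$. By Proposition~\ref{prop-summability}, the bounded transform $F_D$ yields a finitely summable Fredholm module, which carries an explicit Connes--Chern character $\mathrm{Ch}^\ast(\calA,\calH,F_D)$ in periodic cyclic cohomology whose pairing with $\mathrm{Ch}_\ast(p)=\sum_m \mathrm{Ch}_{2m}(p)$ recovers $\Index(p(F_D)_+p)$. The Calderon--Fedosov principle (Proposition~\ref{prop-fedosov}) gives a first analytic expression for this pairing as a trace, but its defect is non-locality: it depends on $F_D$ via a global functional calculus of $D$ rather than on $D$ and its commutators. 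The aim of the proof is to replace this representative by a local one built from $\phi^r_m$.

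First I would rewrite fractional powers of $1+s^2+D^2$ using the Cauchy formula on the contour $l$,
$$
(1+s^2+D^2)^{-z}=\frac{1}{2\pi i}\int_l \lambda^{-z}\,R_s(\lambda)\,d\lambda,
$$
and insert these identities into a JLO-style heat-kernel representative of the Chern character. The weight $2^{m+1}s^m\,ds$ and the contour integration in the statement arise precisely from this change of variable (the $s$-integral coming from reparametrising the simplex in the JLO formula). Commuting the resolvents past the $[D,a_j]$ factors, which is permitted by the QC$^\infty$ regularity, and rearranging produces the resolvent cocycle $\phi^r_m$ exactly as stated; for $\mathrm{Re}(r)$ large it is absolutely convergent and represents the index.

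The technical core is the meromorphic continuation of $\sum_m \phi^r_m(\mathrm{Ch}_m(p))$ to a punctured neighbourhood of $r=(1-q)/2$, the simple-pole structure, and the truncation at $m=2N$. For each $m$, I would use the pseudodifferential-type identity $\lambda R_s(\lambda)=1+(1+s^2+D^2)R_s(\lambda)$ combined with the derivations $\delta^k$ of Definition~\ref{defn-qccondition} to commute all resolvents to the right, reducing $\phi^r_m(a_0,\dots,a_m)$ to a finite sum of zeta-type traces of the form $\Tr\bigl(b\,(1+D^2)^{-r-q/2-k}\bigr)$ with $b\in\calA$ built from iterated $\delta^k$ applied to the $a_j$'s, plus a remainder holomorphic at $r=(1-q)/2$. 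The finite $q$-summability hypothesis gives at worst simple poles at $r=(1-q)/2$ for these zeta-type traces and shows that for $m>2N$ the shift $r+q/2+k$ already places every term in the region of holomorphy, which yields the stated cutoff. The residue identity with the index then follows because the resolvent cocycle and the JLO cocycle are cohomologous in the $(b,B)$-bicomplex, and the latter is already known to pair with $\mathrm{Ch}_\ast(p)$ to give $\Index(p(F_D)_+p)$.

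The main obstacle is precisely this last block of estimates. One must control the resolvent expansions uniformly in $\lambda$ along the contour $l$, track combinatorial constants through the iterated commutator identities, and justify the cohomological equivalence of the resolvent and JLO representatives; all of this requires the full pseudodifferential calculus for spectral triples developed in \cite{CoM,CPRS3}. Without this infrastructure, the truncation at $m=2N$ and the simple-pole claim cannot be established cleanly, and the normalization constants $(m/2)!/m!$, $2^{m+1}$ and the residue prefactor can only be pinned down after such bookkeeping. This analytic/combinatorial bookkeeping is where I expect the proof to get stuck if attempted from scratch.
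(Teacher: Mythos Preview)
The paper does not supply its own proof of this theorem: it is stated in the appendix as a known result, with the proof deferred entirely to the cited references \cite{CoM,CPRS3} (and the surrounding survey material to \cite{BBB,CPRNotes,ElementsNCG,HigsonRoe}). So there is no ``paper's own proof'' to compare against.

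That said, your sketch is a fair outline of the strategy actually carried out in \cite{CPRS2,CPRS3}: one starts from the Connes--Chern character of the finitely summable Fredholm module, passes through a JLO/heat-kernel representative, rewrites it via the resolvent cocycle using the Cauchy formula and the $s$-integral reparametrisation, and then uses the $QC^\infty$ pseudodifferential calculus to commute resolvents to the right and reduce to zeta-type traces $\Tr(b(1+D^2)^{-z})$, from which the meromorphic continuation, the simple-pole structure, and the truncation at $m=2N$ follow. Your self-assessment is accurate: the hard content is precisely the uniform resolvent estimates along the contour, the combinatorics of the iterated commutator expansion, and the $(b,B)$-cohomology equivalence between the resolvent and JLO cocycles, and these genuinely require the machinery built in \cite{CoM,CPRS2,CPRS3} rather than something one can reconstruct cleanly from scratch. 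One small correction: the cutoff at $m=2N$ is not because each individual zeta term for $m>2N$ lands in the region of holomorphy after a shift, but rather because for $m>q$ the entire functional $\phi_m^r$ is already holomorphic at $r=(1-q)/2$ by a direct trace-norm estimate on the product of $m$ resolvents and commutators; the pseudodifferential expansion is used for the surviving terms $m\leq 2N$.
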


\begin{thm}[Odd index formula, \cite{CoM,CPRS2}]\label{th-genericindexformula2}
Let $(\calA,\calH,D)$ be an odd finitely summable and $QC^\infty$ spectral triple with spectral 
dimension $q\geq 1$. Let $N= \lfloor q/2 \rfloor+1$ and $u\in \calA$  
unitary. Then 
\begin{equation}\label{eq-genindexfodd}
  \Index( \Pi u \Pi -(1-\Pi) ) = \frac{-1}{\sqrt{2 \pi i}} \res_{r=(1-q)/2} \Big( \sum_{m=1, \mathrm{odd}}^{2N-1} 
     \phi^r_m ( \mathrm{Ch}_m(u) ) \Big),
\end{equation}
where for $a_0,\ldots,a_m \in \calA$, $l=\{u+iv\,:\,v\in\R\}, 0 < u < 1/2$, 
$R_s(\lambda) = (\lambda-(1+s^2+D^2))^{-1}$ and $r>0$ we define 
$\phi_m^r(a_0,\ldots,a_m)$ as
$$
  \frac{-2\sqrt{2\pi i}}{\Gamma((m+1)/2)} \int_0^\infty \! s^m \Tr \Big( \frac{1}{2\pi i} \int_l 
     \lambda^{-q/2-r} a_0 R_s(\lambda) [D,a_1] R_s(\lambda) \cdots [D,a_m] R_s(\lambda) \,\mathrm{d}\lambda 
     \Big) \mathrm{d}s.
$$
In particular the sum on the right hand side of the index formula analytically continues to a 
deleted neighbourhood of $r = (1-q)/2$
with at worst a simple pole at $r = (1 - q)/2$.
\end{thm}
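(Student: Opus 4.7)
The plan is to follow the resolvent cocycle strategy of Carey--Phillips--Rennie--Sukochev, which refines the original JLO-based derivation of Connes--Moscovici to the odd, finitely summable setting. First, I would recall that, because $(\calA,\calH,D)$ is $p$-summable and $QC^\infty$, Proposition \ref{prop-summability} gives $[F_D,\pi(a)]\in \calL^{\lfloor q\rfloor+1}$ for all $a\in\calA$, so the operator $\Pi u\Pi -(1-\Pi)$ is Fredholm by the Calderon--Fedosov principle (Proposition \ref{prop-fedosov}). Its index is computed by the odd Connes--Chern pairing between the class $[u]\in K_1(\calA)$ and the $K$-homology class represented by $F_D$, and this pairing can be written as $\langle \mathrm{Ch}_*(u),\tau\rangle$ for any $(b,B)$-cocycle $\tau$ on $\calA$ in the periodic cyclic cohomology class of $F_D$. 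The strategy is to identify the resolvent functional $\tau^r = (\phi_m^r)_{m\text{ odd}}$ as exactly such a cocycle (modulo boundaries) and then extract the index by a residue at $r=(1-q)/2$.

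The second step establishes that each $\phi_m^r$ is well defined and cocyclic in a right half plane. For $\Re(r)$ large, the trace inside $\phi_m^r$ is absolutely convergent: the Cauchy contour integral over $l$ replaces the chain of resolvents by $(1+s^2+D^2)^{-(m+1)/2-r+\varepsilon}$ times a product of bounded operators (by $QC^\infty$ regularity of the $[D,\pi(a_j)]$'s), and the $s$-integral with its $s^m$ weight converges for $\Re(r)>(1-q)/2$ by $q$-summability. The $(b,B)$-cocycle identities $b\phi_m^r + B\phi_{m+2}^r = 0$ are checked directly, using the Leibniz rule for $[D,\cdot]$, the commutation of $R_s(\lambda)$ with $D^2$, and cyclicity of the operator trace.

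The core analytic step is the meromorphic continuation of $r\mapsto \phi_m^r(a_0,\ldots,a_m)$ to a neighbourhood of $(1-q)/2$ with at worst a simple pole there. Using the $QC^\infty$ assumption one expands the resolvents $R_s(\lambda)$ via the pseudodifferential expansion $[R_s(\lambda), \pi(a)] = -R_s(\lambda)[D^2,\pi(a)]R_s(\lambda)$ and iterates, pushing all $R_s(\lambda)$ to the right up to a finite sum of terms each of the form $P\,(1+s^2+D^2)^{-k-r}$ where $P$ lies in the algebra generated by the $\pi(a_j)$, their iterated commutators with $|D|$, and their $\delta^k$-derivatives. After computing the $\lambda$- and $s$-integrals explicitly, the problem reduces to meromorphic properties of generalised zeta functions $z\mapsto \Tr(P(1+D^2)^{-z/2})$, which under $QC^\infty$ plus $q$-summability extend meromorphically to $\C$ with simple poles at most at half-integers $\le q$. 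The truncation to $m\le 2N-1$ comes out of this bookkeeping: for larger odd $m$ the relevant exponents already sit in the holomorphic regime at $r=(1-q)/2$ and contribute no residue.

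Finally, one matches the residue to the index. The pairing $\langle \mathrm{Ch}_*(u),\tau^r\rangle$ for $\Re(r)$ large equals the JLO pairing (after a contour deformation and standard transgression identities relating JLO and resolvent cocycles), which in turn equals the analytic index by the odd Connes--Chern character theorem. Analytic continuation in $r$ pushes this equality down to the residue at $r=(1-q)/2$, yielding the stated formula; the prefactor $-1/\sqrt{2\pi i}$ is fixed by the normalisation of $\mathrm{Ch}_{2m+1}(u)$. The main obstacle is the meromorphic continuation step: the pseudodifferential bookkeeping of commutators with $D^2$ and the precise determination of pole orders are the technically heavy part, and the bound $N=\lfloor q/2\rfloor +1$ is dictated by exactly this analysis.
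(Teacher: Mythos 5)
This theorem is stated in the paper purely as background, imported from \cite{CoM,CPRS2} with no proof supplied, and your outline is essentially the resolvent-cocycle argument of the cited Carey--Phillips--Rennie--Sukochev paper: Fredholmness via the Fedosov--Calderon principle, the $(b,B)$-cocycle property of $(\phi_m^r)$ modulo functions holomorphic at the critical point, meromorphic continuation via the pseudodifferential expansion of the resolvents, and extraction of the index as a residue --- so it matches the source the authors rely on. Be aware only that each of your four steps (in particular the meromorphic continuation and the identification of the residue with the Chern character pairing) is itself a substantial theorem in that reference, so what you have is a correct roadmap rather than a self-contained proof.
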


\subsection{Evaluating residue traces} \label{subsec:app_restrace}
\label{lem-reztrace2}

Here we complete the proof of Lemma \ref{lemma:d_dim_trace_per_unit_volume_related_to_residue_trace}, 
which we restate for convenience.

\begin{lemma} \label{lemma:appendix_res_trace}
Let $f\in\calA_\mathrm{Sob}$. Then, $\bP$-almost sure, 
$$  
 \calT(f)=\Tr_\mathrm{Vol}(\wt{\pi}_\calL(f)) = \frac{1}{\mathrm{Vol}_{d-1}(S^{d-1})} 
  \res_{s=d}\Tr\!\left(\wt{\pi}_\calL(f)(1+|X|^2)^{-s/2}\right). 
$$
\end{lemma}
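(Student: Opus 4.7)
The first equality $\calT(f)=\Tr_\mathrm{Vol}(\wt{\pi}_\calL(f))$ is the extension of Proposition~\ref{prop:ergodic_trace_is_vol_trace} from $\calA$ to $\calA_\mathrm{Sob}$: both sides are $\|\cdot\|_{0,1}$-continuous and already agree on the dense subalgebra $\calA$, hence everywhere. The substance of the lemma is thus the second equality, which I plan to prove first for $f\in C_c(\calG)$ (where pointwise evaluation on the groupoid is meaningful) and then extend by continuity in the Sobolev topology.

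Direct computation: in the canonical basis $\{\delta_x\}_{x\in\calL}$ of $\ell^2(\calL)$, the operator $(1+|X|^2)^{-s/2}$ is diagonal with eigenvalues $(1+|x|^2)^{-s/2}$, while \eqref{eq:Hspace_repn} yields $\langle\delta_x,\wt{\pi}_\calL(f)\delta_x\rangle=f(\calL-x,0)$ (the magnetic cocycle on the degenerate triangle $\langle 0,0,-x\rangle$ is trivial). Hence
$$
 \Tr\!\big(\wt{\pi}_\calL(f)(1+|X|^2)^{-s/2}\big)=\sum_{x\in\calL}(1+|x|^2)^{-s/2}\,f(\calL-x,0).
$$
Inserting the Mellin representation $(1+|x|^2)^{-s/2}=\Gamma(s/2)^{-1}\int_0^\infty t^{s/2-1}e^{-t(1+|x|^2)}\,dt$ and applying Fubini rewrites the trace as $\Gamma(s/2)^{-1}\int_0^\infty t^{s/2-1}e^{-t}\,G_t(\calL)\,dt$, where the heat-kernel-type sum is $G_t(\calL):=\sum_{x\in\calL}e^{-t|x|^2}f(\calL-x,0)$. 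The tail $t\geq 1$ contributes a function entire in $s$, so the simple pole at $s=d$ is governed entirely by the small-$t$ behaviour of $G_t$.

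The analytic crux is the Gaussian ergodic asymptotic
$$
G_t(\calL)=\calT(f)\,(\pi/t)^{d/2}\,(1+o(1)),\qquad t\searrow 0, \qquad \bP\text{-a.s.}
$$
I would obtain this by writing $G_t(\calL)=\int_{\R^d}e^{-t|y|^2}\,dN_f(y)$ for the signed Radon measure $N_f=\sum_{x\in\calL}f(\calL-x,0)\,\delta_x$. Birkhoff's theorem, applied as in the proof of Proposition~\ref{prop:ergodic_trace_is_vol_trace}, gives the Ces\`aro statement $N_f(B_R)/|B_R|\to\calT(f)$, and a Karamata-type Tauberian argument then upgrades it to the Gaussian-weighted asymptotic above; the $(r,R)$-Delone property uniformly bounds annular contributions. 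Substitution into the Mellin integral produces residue $\calT(f)\,\pi^{d/2}\cdot 2/\Gamma(d/2)=\calT(f)\cdot\mathrm{Vol}_{d-1}(S^{d-1})$, using $(s-d)\Gamma((s-d)/2)\to 2$. The extension from $C_c(\calG)$ to $\calA_\mathrm{Sob}$ then follows since both sides are Sobolev-continuous in $f$: the LHS via $|\calT(f)|\le\|f\|_{0,1}$ and the RHS as the $d=0$ instance of the Hochschild-cocycle continuity established in Lemma~\ref{lemma:sobolev_cocycle}.

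The main obstacle is the Tauberian step. Karamata's theorem in its sharpest form requires positivity of the underlying measure, whereas $N_f$ is signed. The standard remedy is to split $f$ pointwise as a linear combination of non-negative measurable pieces and apply the theorem to each, but one must verify that the resulting errors still sum to $o(t^{-d/2})$ and do not escape $\calA_\mathrm{Sob}$ under the decomposition. An alternative that bypasses monotonicity altogether is a quantitative ergodic discrepancy bound $|N_f(B_R)-\calT(f)|B_R||=o(R^d)$ with an explicit rate, which for Delone systems with sufficient mixing is known and integrates cleanly against $e^{-t|y|^2}$ to yield the required small-$t$ asymptotic directly.
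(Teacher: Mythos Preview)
Your route via the small-$t$ heat asymptotic is genuinely different from the paper's, and the obstacle you flag is not the real one. The implication you need --- from the Birkhoff statement $N_f(B_R)/|B_R\cap\calL|\to\calT(f)$ to the Gaussian-weighted limit $t^{d/2}G_t(\calL)\to\pi^{d/2}\calT(f)$ --- is the \emph{Abelian} direction, not the Tauberian one. It only requires the growth bound $|N_f|(B_R)=O(R^d)$, which the Delone hypothesis supplies, and it goes through for signed measures by integration by parts against $\partial_R e^{-tR^2}$ followed by dominated convergence after rescaling $R\mapsto R/\sqrt t$. Positivity plays no role and no quantitative ergodic rate is needed; your proposed remedies (splitting $f$ into non-negative parts, or invoking mixing bounds) are both unnecessary.

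The paper follows a quite different path that sidesteps pointwise heat asymptotics altogether. Writing $G(\calL,s)=\Tr\big(\wt\pi_\calL(f)(1+|X|^2)^{-s/2}\big)$, it uses the same Laplace transform but differentiates in the translation parameter to show that for orbit-equivalent points $\calL,\calL+a\in\Xi$ the difference $G(\calL+a,s)-G(\calL,s)$ extends holomorphically to $\Re(s)>d-1$. Hence the residue at $s=d$ is constant along orbits and, by ergodicity, $\bP$-almost surely equals the residue of the $\Xi$-average $\int_\Xi G(\calL,s)\,\mathrm{d}\bP(\calL)$; invariance of $\bP$ factors that average as $\calT(f)$ times a lattice zeta-sum whose residue is read off by comparison with the Euclidean integral. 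This argument is carried out directly for $f\in\calA_\mathrm{Sob}$, which bypasses your density-extension step. That step needs more care anyway: invoking ``the $d=0$ instance of Lemma~\ref{lemma:sobolev_cocycle}'' is not correct as stated, since that lemma concerns the top-degree cocycle with $d$ commutators rather than the bare residue trace, and one must separately know that $\res_{s=d}\Tr\big(\wt\pi_\calL(f)(1+|X|^2)^{-s/2}\big)$ \emph{exists} for arbitrary $f\in\calA_\mathrm{Sob}$ before appealing to continuity.
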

\begin{proof}
We recall that for $f\in\calA_\mathrm{Sob}$ the algebraic trace is given by
$$ 
\calT(f) = \int_{\Xi}f(\calL,0)\,\mathrm{d}\bP(\calL). 
$$
We know from Proposition \ref{prop:sobolev_spectrip} that $(1+|X|^2)^{-s/2}$ 
is a trace class operator on $\ell^2(\calL)$ for $\Re(s)>d$. Hence we can take the 
trace of $\wt{\pi}_\calL(f)(1+|X|^2)^{-s/2}$ 
by summing along the diagonal of the integral kernel, where
$$  
\Tr\!\left(\wt{\pi}_\calL(f)(1+|X|^2)^{-s/2}\right) = 
  \sum_{x\in\calL} f(\calL-x,0)(1+|x|^2)^{-s/2}. 
$$
We denote by $G(\calL,s) = \Tr\!\left(\wt{\pi}_\calL(f)(1+|X|^2)^{-s/2}\right)$ 
for $\Re(s)>d$. Suppose that there is an $a\in\R^d$ such that 
$\calL+a \in\Xi$, we compute that
\begin{align*}
  G(\calL+a,s) &= \sum_{x\in\calL+a} f(\calL+a-x,0)(1+|x|^2)^{-s/2} \\
   &=  \sum_{u\in\calL} f(\calL-u,0)(1+|a+u|^2)^{-s/2}  \\
   &=  \sum_{u\in\calL} f(\calL-u,0) (1+|u|^2)^{-s/2}  \\
   &\hspace{0.5cm}+  \sum_{u\in\calL} f(\calL-u,0)\!
     \left( (1+|a+u|^2)^{-s/2} - (1+|u|^2)^{-s/2}\right).
\end{align*}
Our aim is to show that the difference $G(\calL+a,s) - G(\calL,s)$ is holomorphic at 
$\Re(s)=d$ and so the residue will be constant on an orbit.
We use the Laplace transform to rewrite 
\begin{align*}
   G(\calL+a,s) - G(\calL,s) &= \sum_{u\in\calL} f(\calL-u,0)\!
     \left( (1+|a+u|^2)^{-s/2} - (1+|u|^2)^{-s/2}\right) \\
  &\hspace{-1cm}= \frac{1}{\Gamma\left(\frac{s}{2}\right)} \sum_{u\in\calL} 
    f(\calL-u,0)\int_0^\infty \!t^{s/2-1}\left(e^{-t(1+|a+u|^2)} - e^{-t(1+|u|^2)} \right)\mathrm{d}t \\
  &\hspace{-1cm}= \frac{1}{\Gamma\left(\frac{s}{2}\right)} \sum_{u\in\calL} 
    f(\calL-u,0)\int_0^\infty \!t^{s/2-1} \int_0^a \nabla_b  
       \left(e^{-t(1+|b+u|^2)}\right) \mathrm{d}b\,\mathrm{d}t. 
\end{align*}
Taking the derivative in $b$ we find that (using multi-index notation)

\begin{align*}
  G(\calL+a,s) - G(\calL,s)  &= \frac{1}{\Gamma\left(\frac{s}{2}\right)} \sum_{u\in\calL} 
    f(\calL-u,0)\int_0^\infty \!t^{s/2-1} \int_0^a (-2t|b+u|) e^{-t(1+|b+u|^2)}\, \mathrm{d}b\,\mathrm{d}t  \\
  &= \frac{1}{\Gamma\left(\frac{s}{2}\right)} \sum_{u\in\calL} 
   f(\calL-u,0)\int_0^\infty \!t^{s/2} \int_0^a (-2|b+u|) e^{-t(1+|b+u|^2)}\, 
      \mathrm{d}b\,\mathrm{d}t   \\
  &= \frac{\Gamma\left(\frac{s}{2}+1\right)}{\Gamma\left(\frac{s}{2}\right)} 
   \sum_{u\in\calL} f(\calL-u,0) \int_0^a (-2|b+u|)(1+|b+u|^2)^{-s/2-1}\, 
     \mathrm{d}b   \\
  &= -s \sum_{u\in\calL} f(\calL-u,0) \int_0^a |b+u|(1+|b+u|^2)^{-s/2-1}\,
      \mathrm{d}b.
\end{align*}
We note that the last sum will coverge for $\Re(s)>d-1$. The difference 
$G(\calL+a,s)-G(\calL,s)$ is holomorphic for $\Re(s)>d-1$. 
To prove this claim, we first compute 
\begin{align*}
  &\frac{1}{h}\left(G(\calL+a,s+h)-G(\calL,s+h)-G(\calL+a,s)+G(\calL,s)\right) \\ 
   &\hspace{-0.1cm}= -\sum_{u\in\calL} f(\calL-u,0) \int_0^a |b+u|\! 
    \left(\frac{(s+h)(1+|b+u|^2)^{-\frac{h}{2}} - s}{h}\right)\!(1+|b+u|^2)^{-\frac{s}{2}-1}\,
      \mathrm{d}b
\end{align*}
and compare to the formal derivative
$$ 
  -\sum_{u\in\calL}  f(\calL-u,0) \int_0^a |b+u|\!
  \left( 1-\frac{1}{2}\ln(1+|b+u|^2) \right)\!(1+|b+u|^2)^{-\frac{s}{2}-1}\,\mathrm{d}b
$$
whose integral will also converge for $\Re(s)>d-1$. We then check that
\begin{align*}
  \lim_{h\to 0}\frac{(s+h)(1+|b+u|^2)^{-\frac{h}{2}} - s}{h} 
  &= \lim_{h\to 0}\frac{(s+h)\exp\!\left(-\frac{h}{2}\ln(1+|b+u|^2)\right) -s}{h} \\
  &= \lim_{h\to 0} \frac{(s+h)\!\left(1-\frac{h}{2}\ln(1+|b+u|^2)+ \mathcal{O}(h^2)\right) -s}{h} \\
  &= 1-\frac{1}{2}\ln(1+|b+u|^2).
\end{align*}
Therefore $G(\calL+a,s)-G(\calL,s)$ has a well-defined complex derivative for $\Re(s)>d-1$.

Next we fix some $\calL_0 \in \Xi$. By the ergodicity hypothesis, 
we know that $\calL_0$ is $\bP$-almost surely in the same orbit of $\calL$. 
We consider the function 
$\calL \mapsto G(\calL,s)-G(\calL_0,s)$. Integrating yields
$$ 
\int_\Xi \!\big(G(\calL,s) - G(\calL_0,s)\big)\mathrm{d}\bP(\calL) 
= \int_\Xi G(\calL,s)\,\mathrm{d}\bP(\calL) - G(\calL_0,s) 
$$
as $\bP(\Xi)=1$. For $\Re(s)>d$,
\begin{align*}
  \int_\Xi G(\calL,s)\,\mathrm{d}\bP(\calL) 
   &= \int_\Xi \sum_{u\in\calL} f(\calL-u,0)(1+|u|^2)^{-s/2}\, \mathrm{d}\bP(\calL) \\
   &= \int_\Xi \sum_{u\in\calL} f(\calL,0)(1+|u|^2)^{-s/2}\,\mathrm{d}\bP(\calL)
\end{align*}
where we have used the invariance of the action on $\Xi$ to make a 
substitution. Because $\calL\in\Xi$ is $(r,R)$-Delone, we can approximate the 
sum $ \sum_{u\in\calL} (1+|u|^2)^{-s/2}$ by an integral in polar coordinates 
(where the approximation becomes exact in the residue). Hence 
we can explicitly compute 
\begin{align*}
  \int_\Xi G(\calL,s)\,\mathrm{d}\bP(\calL) 
   &= \mathrm{Vol}_{d-1}(S^{d-1})  \int_\Xi f(\calL,0)\,\mathrm{d}\bP(\calL) 
      \int_0^\infty (1+r^2)^{-s/2} r^{d-1}\,\mathrm{d}r + h(s) \\
  &= \calT(f)\,\mathrm{Vol}_{d-1}(S^{d-1}) 
     \frac{\Gamma\!\left(\frac{d}{2}\right) \Gamma\!\left(\frac{s-d}{2}\right)}{2\Gamma\!\left(\frac{s}{2}\right)} + h(s)
\end{align*}
with $h(s)$ a function holomorphic in a neigbourhood of $\Re(s)=d$. We remark that 
the $R$-relatively dense property of $\calL$ is needed here.

As $g(\calL,s) =G(\calL,s)-G(\calL_0,s)-h(s)$ is $\bP$-almost surely holomorphic in a neighbourhood of $s=d$, we can say that
\begin{equation} \label{eq:residue_integral}
 \calT(f)\,\mathrm{Vol}_{d-1}(S^{d-1})  
  \frac{\Gamma\!\left(\frac{d}{2}\right) \Gamma\!\left(\frac{s-d}{2}\right)}{2\Gamma\!\left(\frac{s}{2}\right)}  
    =  \int_\Xi g(\calL,s) \mathrm{d}\bP (\calL) + G(\calL_0,s)
\end{equation}
By the functional equation for the $\Gamma$-function, the left hand side 
of Equation \eqref{eq:residue_integral} has an analytic continuation to the complex plane 
with a simple pole at $s=d$. We also note that, by the 
invariance of the measure, $\int_\Xi g(\calL,s)\, \mathrm{d}\bP$ is $\bP$-almost surely holomorphic for 
$\Re(s)>d$ and $g$ 
is holomorphic function in a neighbourhood of $s=d$. Therefore we conclude that 
$G(\calL_0,s)$ analytically extends to a neighbourhood of $s=d$ such 
that $(s-d)G(\calL_0,s)$ is holomorphic at $s=d$ for $\bP$-almost any $\calL_0\in\Xi$. 
Computing the residue,
\begin{align*}
 \res_{s=d} \Tr\!\left(\wt{\pi}_{\calL_0}(f)(1+|X|^2)^{-s/2}\right) 
   &= \res_{s=d} \calT(f)\,\mathrm{Vol}_{d-1}(S^{d-1}) 
     \frac{\Gamma\!\left(\frac{d}{2}\right) \Gamma\!\left(\frac{s-d}{2}\right)}{2\Gamma\!\left(\frac{s}{2}\right)} \\
   &= \calT(f)\,\mathrm{Vol}_{d-1}(S^{d-1}).
\end{align*}
Lastly, we use the canonical extension of Proposition \ref{prop:ergodic_trace_is_vol_trace} to $\calA_\mathrm{Sob}$ 
and conclude that, $\bP$-almost surely,
\begin{equation*}
     \res_{s=d} \Tr\!\left(\wt{\pi}_{\calL_0}(f)(1+|X|^2)^{-s/2}\right)  = 
      \Tr_\mathrm{Vol}(\wt{\pi}_\calL(f)) \,  \mathrm{Vol}_{d-1}(S^{d-1}).  \qedhere
\end{equation*}
\end{proof}

\ack 
EP acknowledges financial support from the W. M. Keck Foundation. CB is supported by a postdoctoral fellowship 
for overseas researchers from The Japan Society for the Promotion of Science (No. P16728)
and a KAKENHI Grant-in-Aid for JSPS fellows (No. 16F16728). 
Lemma \ref{lemma:d_dim_trace_per_unit_volume_related_to_residue_trace} is adapted 
from a similar result in the first author's thesis~\cite[Lemma 3.3.7]{BourneThesis}, the proof of which was done in collaboration 
with Adam Rennie.

\bibliographystyle{iopart-num}
\medskip \noindent {\bf References:}\smallskip

\end{document}